\newif\ifdraft\draftfalse
\newif\iffull\fulltrue
\newtheorem{theorem}{Theorem}[section]
\newtheorem{corollary}[theorem]{Corollary}
\newtheorem{lemma}[theorem]{Lemma}
\theoremstyle{definition}
\newtheorem{definition}{Definition}[section]
\newtheorem{example}{Example}[section]
\newtheorem{remark}{Remark}[section]
\newcommand\nk[1]{\textcolor{red}{[#1 -nk]}}
\newcommand\nk[1]{}
\newcommand\sem[1]{\llbracket #1\rrbracket}
\newcommand\Assumes[2]{\mathtt{Assume}_{#1,#2}}
\newcommand\NewL[1]{\mathtt{new}\;#1}
\newcommand\EndL[1]{\mathtt{end}\;#1}
\newcommand\Tsize[1]{||#1||}
\newcommand\Op[1]{\mathtt{op}_{#1}}
\newcommand\Inj{\mathit{In}}
\newcommand\bty{\mathbf{b}}
\newcommand\rty{\rho}
\newcommand\nty{\tau}
\newcommand\fty{\sigma}
\newcommand\NTE{\Delta}
\newcommand\FTE{\Gamma}
\newcommand\FAIL{\mathtt{fail}}
\newcommand\Assume{\mathtt{assume}}
\newcommand\Assert{\mathtt{assert}}
\newcommand\toBPCF[1]{#1^\sharp}
\newcommand\emptyTE{\epsilon}
\newcommand\Ok{\mathtt{ok}}
\newcommand\Ng{\mathtt{ng}}
\newcommand\Def{\stackrel{\triangle}{=}}
\newcommand\mochi{\textsc{MoCHi}}
\newcommand\con{\sim}
\newcommand\exc[1]{\uparrow_{#1}}
\newcommand\excr[2]{\uparrow_{#1}^{#2}}
\newcommand\eval{\mathbin{\Downarrow}}
\newcommand\dom{\mathtt{dom}}
\newcommand\subsetTE{\subseteq}
\newcommand\codeof[1]{#1.\mathtt{code}}
\newcommand\envof[1]{#1.\mathtt{st}}
\newcommand\mvE{N}
\newcommand\Unpack[2]{{\mathtt{unpackE}_{#1}(#2)}}
\newcommand\unpack[2]{{\mathtt{unpack}_{#1}(#2)};}
\newcommand\unpackwo[2]{{\mathtt{unpack}_{#1}(#2)}}
\newcommand\pack[1]{\mathtt{pack}_{#1}}
\newcommand\packaux[1]{\mathtt{packsub}_{#1}}
\newcommand\Pack[1]{\mathtt{packE}_{#1}}
\newcommand\Clos[2]{\langle #2, #1\rangle}
\newcommand\proj[1]{\downarrow_{#1}}
\newcommand\splitTE[2]{\mathtt{splitTE}_{#2}(#1)}
\newcommand\BsplitTE[3]{\mathtt{splitTE}_{#2}^{\alpha}(#1)}
\newcommand\splitEnv[2]{\mathtt{splitE}_{#2}(#1)}
\newcommand\FV{\mathtt{FV}}
\newcommand\Tr{\Longrightarrow}
\newcommand\deref[1]{\,!{#1}}
\newcommand\BPCF{BPCF}
\newcommand\BPCFE{BPCF\(^{\mathit{exn}}\)}
\newcommand\BPCFAE{BPCF\(^{\mathit{alg}}\)}
\newcommand\BPCFL{BPCF\(^{\mathit{sym}}\)}
\newcommand\BPCFR{BPCF\(^{\mathit{ref}}\)}
\newcommand\BPCFRL{BPCF\(^{\mathit{ref, lin}}\)}
\newcommand\bPCFRL[1]{PCF\(^{\mathit{ref, lin}}_{#1}\)}
\newcommand\bPCF[1]{PCF\(_{#1}\)}
\newcommand\bt{\mathtt{b}}
\newcommand\Gensym{\mathtt{gensym}()}
\newcommand\Tsym{\mathtt{sym}}
\newcommand\rulesp{}
\newcommand\Up{\mathbin{\Uparrow}}
\newcommand\Trfun[3]{#1\stackrel{#3}{\leadsto}}
\newcommand\Tfuns[3]{#1\stackrel{#3}{\to}}
\newcommand\BTfuns[4]{#1\stackrel{#3,#4}{\to}}
\newcommand\Sharable{\mathtt{sharable}}
\newcommand\mkref{\texttt{ref}\;}
\newcommand\REF{\mathtt{ref}}
\newcommand\Tref[1]{#1\;\REF}
\newcommand\FTref[2]{#1\;\REF^{#2}}
\newcommand\BTref[2]{#1\;\REF^{#2}}
\newcommand\LTref[3]{#1\;\REF^{#2}_{#3}}
\newcommand\Tunit{\mathtt{unit}}
\newcommand\set[1]{\{#1\}}
\newcommand\seq[1]{\widetilde{#1}}
\newcommand\BOOL{\mathtt{bool}}
\newcommand\INT{\mathtt{int}}
\newcommand\IF{\mathtt{if}}
\newcommand\THEN{\mathtt{then}}
\newcommand\ELSE{\mathtt{else}}
\newcommand\LET{\mathtt{let}}
\newcommand\IN{\mathtt{in}}
\newcommand\LETREC{\mathtt{letrec}}
\newcommand\Band{\mathbin{\mathtt{\&\&}}}
\newcommand\AND{\;\mathtt{and}\;}
\newcommand\ifexp[2]{\IF\;#1\;\THEN\;#2\;\ELSE\;}
\newcommand\letexp[2]{\LET\;#1=#2\;\IN\;}
\newcommand\letrecexp[1]{\LETREC\;#1\;\IN\;}
\newcommand\fixexp[3]{\Y(\lambda #1.\lambda #2.#3)}
\newcommand\TRUE{\mathtt{true}}
\newcommand\NOT{\mathtt{not}}
\newcommand\FALSE{\mathtt{false}}
\newcommand\sty{\kappa}
\newcommand\COL{\mathbin{:}}
\newcommand\Y{\mathbf{Y}}
\newcommand\p{\vdash}
\newcommand\q{\dashv}
\newcommand\STE{\mathcal{K}}
\newcommand\hole{[\,]}
\newcommand\red{\longrightarrow}
\newcommand\reds{\red^*}
\newcommand\INC[1]{\mathtt{inc}\; c_{#1}}
\newcommand\DEC[1]{\mathtt{dec}\; c_{#1}}
\newcommand\GOTO[1]{\mathtt{goto}\; {#1}}
\newcommand\Mif[3]{\IF\;c_{#1}=0\;\THEN\;\GOTO{#2}\;\ELSE\;(\DEC{#1};\GOTO{#3})}
\newcommand\Mhalt{\mathtt{halt}}
\newcommand\Nat{\mathbf{Nat}}
\newcommand\EX[1]{\mathtt{E}_{#1}}
\newcommand\Ex[2]{\mathtt{E}_{#1}(#2)}
\newcommand\RAISE{\mathtt{raise}\;}
\newcommand\raiseexp[2]{\mathtt{raise}\;\Ex{#1}{#2}}
\newcommand\To{\Rightarrow}
\newcommand\tryexp[3]{\mathtt{try}\;#2\;\mathtt{with}\;\Ex{#1}{#3}\To}
\newcommand\encexn[1]{#1^\dagger}
\newcommand\enceff[1]{#1^\dagger}
\newcommand\RET{\mathtt{return}}
\newcommand\Handleexp[2]{\mathtt{with}\; #2\;\mathtt{handle}\;#1}
\newcommand\Thandler[2]{#1\To #2}
\newcommand\Effexp[3]{#1(#2; #3)}
\newcommand\alg{\sigma}
\newcommand\Alg{\mathtt{A}}
\newcommand\contvar{c}
\newcommand\rethandler[2]{\RET(#1)\mapsto #2}
\newcommand\effhandler[4]{#1(#2;#3)\mapsto #4}
\newcommand\UNIT{\mathtt{unit}}
\newcommand\Vunit{(\,)}
\newcommand\MATCH{\mathtt{match}}
\newcommand\WITH{\mathtt{with}}
\begin{document}

\title{On Decidable and Undecidable Extensions of Simply Typed Lambda Calculus}

\author{Naoki Kobayashi\\The University of Tokyo}

\maketitle

\begin{abstract}
  The decidability of the reachability problem for finitary PCF has been
  used as a theoretical basis for fully automated verification tools
  for functional programs. The reachability problem, however, often becomes
  undecidable for a slight extension of finitary PCF with side effects, such as
  exceptions, algebraic effects, and references, which hindered the extension
  of the above verification tools for supporting
  functional programs with side effects. In this paper,
  we first give simple proofs of the undecidability of four extensions of finitary PCF,
  which would help us understand and analyze the source of undecidability.
  We then focus on an extension with references, and give a decidable fragment
  using a type system.
  To our knowledge, this is the first non-trivial decidable fragment that
  features higher-order recursive functions containing reference cells.
\end{abstract}

\section{Introduction}

Higher-order model checking, or the model checking of  higher-order recursion schemes, has been proven to be decidable in 2006~\cite{Ong06LICS}, and
practical higher-order model checkers (which run fast
for many inputs despite the extremely high worst-case complexity) have
been developed around 2010's~\cite{Kobayashi09PPDP,Kobayashi11FOSSACS,Broadbent12ICALP,Kobayashi13JACM,Kobayashi13horsat,Ramsay14POPL}.
Since then, higher-order model checking has been applied to fully automated
verification of higher-order functional programs~\cite{Kobayashi09POPL,KTU10POPL,Ong11POPL,KSU11PLDI,SUK13PEPM}.

Among others, Kobayashi et al.~\cite{KSU11PLDI,SUK13PEPM} used
the decidability of the reachability problem for finitary PCF (i.e.,
the simply-typed \(\lambda\)-calculus with recursion and finite base types),
which is a direct consequence of the decidability of higher-order model
checking,\footnote{We are not sure whether it has been known before
  Ong's result~\cite{Ong06LICS}, but at least, the above-mentioned development of
  practical higher-order model checkers enabled the use of the result in practice.}
and combined it with predicate abstraction (which approximates a value in an infinite data domain
by a tuple of Booleans that represent whether certain predicates are satisfied~\cite{Graf1997,Ball01,KSU11PLDI})
and counterexample-guided abstraction
refinement, to obtain a fully automated model checker \mochi{} for a subset of OCaml,
just like software model checkers for first-order programs
have been developed based on the decidability of finite-state or pushdown model
checking~\cite{Ball02POPL,SLAM,BLAST}.

Although \mochi{}\footnote{\url{https://github.com/hopv/MoCHi}}~\cite{KSU11PLDI,SUK13PEPM} now supports a fairly large subset of OCaml, including exceptions, references, algebraic data types, and modules,
it often fails to verify correct programs using certain features. For example,
the recent version of \mochi{}
fails to verify even the following, trivially safe program:
\begin{verbatim}
    let x = ref 1 in assert(!x = 1).
\end{verbatim}
That is because \mochi{} approximates a given program
by replacing any read from reference cells with non-deterministic values;
the above program is thus replaced by 
\begin{verbatim}
    assert(Random.int(max_int) = 1)
\end{verbatim}
which may obviously lead to an assertion failure.
Whilst there would be some ad hoc ways to partially address
the problem (for example, in the case of the above program,
we can combine dataflow analysis to infer the
range of values stored in a reference cell),
a fundamental problem is that higher-order model checking is undecidable
for finitary PCF extended with references~\cite{DBLP:journals/jacm/JonesM78,DBLP:journals/apal/Ong04}.
  If we apply predicate abstraction
to a functional program with references, we would obtain a program of finitary PCF
extended with references, but no reasonable model checker is available to
decide the safety of the resulting program.
Similar problems exist for exceptions carrying
functions\footnote{In contrast, exceptions carrying only base-type values do not
  cause a problem; they can be transformed away by a CPS transformation~\cite{SUK13PEPM}.}
and algebraic effects
(which have been recently incorporated in OCaml 5),
as extensions of finitary PCF with those features are also
undecidable~\cite{DBLP:journals/lisp/Lillibridge99,DBLP:journals/pacmpl/LagoG24}.

In view of the above situation, the general goals of our research project are:
\begin{inparaenum}[(i)]
\item to understand and analyze the source of undecidability of various extensions
  of finitary PCF;
\item to find decidable fragments of those extensions; 
\item to use them for improving automated higher-order program verification tools like \mochi{}; and also 
\item possibly to redesign functional languages to help programmers write
  ``well-behaved'' programs that can be mapped (by predicate abstraction, etc.)
  to the decidable fragments.
\end{inparaenum}
As a first step towards the goals, this paper makes the following contributions.
\begin{asparaitem}
\item We provide simple proofs for the undecidability of (the reachability problem of)
  four simple extensions of finitary PCF: (a) finitary PCF with exceptions,
  (b) one with algebraic effects, (c) one with name creation and equality check,
  and (d) one with Boolean references. The undecidability of
  those extensions is either known
  (\cite{DBLP:journals/lisp/Lillibridge99} for (a);
  \cite{DBLP:journals/pacmpl/LagoG24} for (b); and
  \cite{DBLP:journals/jacm/JonesM78} (Theorem 3) and \cite{DBLP:journals/apal/Ong04} (Lemma 39)
  for (d)\footnote{Actually, these results for the extension with references are for call-by-name,
  while our proof is for call-by-value.})
  or probably folklore (for (c)), but the simplicity and uniformity (via
  encoding of Minsky machines) of our proofs may help us better understand and
  analyze the source of undecidability, and find decidable fragments. 
\item We give a decidable fragment of finitary PCF with Boolean references.
  This is obtained by controlling the usage of (function) closures containing Boolean references
  by using a kind of linear types.
  Our technique
   has been partially inspired by recent work on the use of ownerships for
  automated verification of first-order
  programs~\cite{DBLP:journals/toplas/MatsushitaTK21,DBLP:conf/esop/TomanSSI020},
  but to our knowledge, ours is the first to handle higher-order recursive functions.
  We allow closures to capture references and even other closures containing references, like
\begin{verbatim}
  let x = ref true 
  let rec f() = ... x := not(!x) ... 
  let rec g() = ... f() ... 
\end{verbatim}
Here, a reference cell \texttt{x} is captured by the function \texttt{f} (so that
the value of \texttt{x} changes each time \texttt{f} is called),
and \texttt{f} is further captured by \texttt{g} (so that the call of \texttt{g}
may also change the value of \texttt{x}).
The decidability is obtained by a translation from the fragment into finitary PCF.
The improvement of existing verification tools (such as \mochi{}) and
redesign of a functional language (stated above as the general research goals
(iii) and (iv)) are out of the scope of the present paper, but we hint on
how our result may be used to achieve such goals, and report preliminary experimental results
to show that would indeed be the case.
\end{asparaitem}

The rest of this paper is structured as follows.
Section~\ref{sec:pre} reviews Boolean PCF and Minsky machines.
Section~\ref{sec:undecidability} shows the undecidability of
the reachability problem for
closed terms of the four extensions of Boolean PCF.
Section~\ref{sec:reference} introduces a fragment of Boolean PCF with
references, and shows that the reachability problem is decidable for that
fragment.
Section~\ref{sec:exp} discusses applications of the result to automated verification
of higher-order functional programs with references and reports preliminary experiments.
Section~\ref{sec:rel} discusses related work, and Section~\ref{sec:conc}
concludes the paper.
 \section{Preliminaries}
\label{sec:pre}

This section reviews (call-by-value) Boolean PCF (the call-by-value simply-typed
\(\lambda\)-calculus with recursion and Booleans), whose extensions are
the target of study in this paper, and 
the halting problem for
Minsky machines~\cite{Minsky}, whose undecidability is used as
the main basis of our proofs of undecidability of various extensions
of Boolean PCF in Section~\ref{sec:undecidability}.
\subsection{Boolean PCF (\BPCF{})}
The syntax of types and expressions of \BPCF{} are given as follows.
\begin{align*}
  \sty \mbox{ (types) } &::= \bt \mid \sty_1\to\sty_2\qquad
  \bt \mbox{ (base types) }::= \UNIT\mid \BOOL\\
M \mbox{ (expressions) }&::=
    \FAIL\mid V \mid  x\mid M_1 M_2 \mid \fixexp{f}{x}{M} \mid \ifexp{M_0}{M_1}{M_2} \\V \mbox{ (values) }&::=  \Vunit\mid \TRUE\mid \FALSE \mid \lambda x.M
\end{align*}
The expression \(\FAIL\) is a special command that signals an error, which
is used as the target of reachability. The expression
\(\fixexp{f}{x}{M}\) denotes a recursive function such that \(f(x)=M\).
Other expressions are standard and would be self-explanatory.
We write \(\letexp{x}{M_1}{M_2}\) for \((\lambda x.M_2)M_1\), and
write \(M_1;M_2\) for \((\lambda x.M_2)M_1\) if \(x\) does not occur in \(M_2\).
We also often use the notation
\(\letrecexp{f_1\;\seq{x}_1=M_1\AND\cdots\AND f_n\;\seq{x}_n=M_n}{M}\)
for mutually recursive function definitions,
which is a derived form defined inductively (on the number of
mutually recursive functions) as:
\begin{align*}
  &  \letexp{f_1}{\Y(\lambda f_1.\lambda \seq{x}_1.\letrecexp{f_2\;\seq{x}_2=M_2\AND\cdots\AND f_n\;\seq{x}_n=M_n}{M_1})}\\
  &\cdots\\
  &  \letexp{f_n}{\Y(\lambda f_n.\lambda \seq{x}_n.\letrecexp{f_1\;\seq{x}_1=M_1\AND\cdots\AND f_{n-1}\;\seq{x}_{n-1}=M_{n-1}}{M_n})}
M
\end{align*}
Here, \(\seq{x}\) denotes a sequence of variables \(x_1,\ldots,x_k\), and \(\lambda \seq{x}.M\) denotes
\(\lambda x_1.\cdots\lambda x_k.M\). We assume \(\seq{x}_i\) is non-empty for each \(i\in\set{1,\ldots,n}\) in the derived form above. The notation \(\seq{\cdot}\), however,
may represent an empty sequence in general.

The typing rules and operational semantics, which are standard, are given in Figures~\ref{fig:typing}
and \ref{fig:os} respectively. We write \(\reds\) for the reflexive and transitive closure of
\(\red\).
We are interested in the following reachability problem for (various extensions of)
Boolean PCF.
\begin{definition}[reachability problem]
  The \emph{reachability problem} is the problem of deciding whether \(M\reds \FAIL\),
given a closed expression \(M\) such that \( \p M:\bt\).
\end{definition}
The reachability problem for Boolean PCF is decidable, which is obtained as
an immediate corollary of the decidability of higher-order model checking~\cite{Ong06LICS,Kobayashi13JACM}.

\begin{remark}
The reachability problem above should not be confused with
  the functional reachability studied by
  Ong and Tzevelekos~\cite{OngTzevelekos2009LICS}, which asks whether
  there is a context \(C\) such that \(C[M]\) reaches a given program point.
  The functional reachability is undecidable in general.
\end{remark}
    
\begin{remark}
  We consider the call-by-value version of Boolean PCF for the convenience of adding
  effectful operations later, but the decidability of the reachability problem
  is the same for the call-by-name version; notice that
  (type-preserving) CPS transformation makes the difference between call-by-value and call-by-name
  disappear. The complexity bound is different between call-by-value and call-by-name; see, e.g.,
  \cite{DBLP:conf/fossacs/Tsukada014}.
\end{remark}
\begin{figure}
\begin{multicols}{2}
\infrule{}{\STE\p \FAIL:\sty}
\infrule{}{\STE\p \Vunit:\UNIT}
\infrule{b\in \set{\TRUE,\FALSE}}{\STE\p b:\BOOL}
\infrule{\STE,x\COL\sty_1\p M:\sty_2}{\STE\p \lambda x.M:\sty_2}
\infrule{\STE(x)=\sty}{\STE \p x\COL\sty}
\infrule{\STE\p M_1:\sty_2\to\sty\andalso\STE\p M_2:\sty_2}{\STE\p M_1M_2:\sty}
\infrule{\STE,f\COL\sty_1\to\sty_2,x\COL\sty_1\p M:\sty_2}{\STE\p \Y(\lambda f.\lambda x.M):\sty_1\to\sty_2}
\infrule{\STE \p M_0:\BOOL\andalso \STE \p M_1:\sty\andalso \STE\p M_2:\sty}
        {\STE\p \ifexp{M_0}{M_1}{M_2}:\sty}
\end{multicols}
  \caption{Typing Rules for Boolean PCF}
  \label{fig:typing}
\end{figure}

\begin{figure}
  \begin{align*}
    & E\mbox{ (evaluation context) }::=
    \hole \mid E\,M\mid V\,E \mid \ifexp{E}{M_1}{M_2}
    \end{align*}
\begin{multicols}{2}
  \infax{E[(\lambda x.M)V]\red E[[V/x]M]}
  \infax{E[\ifexp{\TRUE}{M_1}{M_2}]\red E[M_1]}
  \infax{E[\FAIL]\red \FAIL}
  \infax{E[\ifexp{\FALSE}{M_1}{M_2}]\red E[M_2]}
\end{multicols}
  \infax{E[\Y(\lambda f.\lambda x.M)]\red E[\lambda x.[\Y(\lambda f.\lambda x.M)/f]M]}
  \caption{Reduction Semantics}
  \label{fig:os}
\end{figure}

\subsection{Minsky Machines}
This subsection reviews Minsky machines~\cite{Minsky}. We write \(\Nat\) for the set of natural numbers.
\begin{definition}[Minsky machine]
  A \emph{Minsky machine} is a finite map \(P\) from a finite set of natural numbers
  \(\set{0,\ldots,m}\) to the set \(I\) of the following three kinds of instructions.
  \begin{itemize}
  \item \(\INC{j};\GOTO{k}\) (where \(j\in\set{0,1}, k\in\set{0,\ldots,m}\)): increments
    counter \(c_j\), and jumps to the \(k\)-th instruction.
  \item \(\Mif{j}{k}{\ell}\) (where \(j\in\set{0,1}, k,\ell\in\set{0,\ldots,m}\)):
    jumps to the \(k\)-th instruction if \(c_j=0\); otherwise 
    decrements \(c_j\), and jumps to the \(\ell\)-th instruction.
    \item \(\Mhalt\): halts the machine.
  \end{itemize}
A \emph{configuration} of a Minsky machine is
  a triple \((i, n_0, n_1)\) where \(i\in\set{0,\ldots,m}\) and \(n_0,n_1\in\Nat\).
  The transition relation is defined by:
  \infrule{P(i)=\INC{j};\GOTO{k}\andalso n_j' = n_j+1\andalso n'_{1-j}=n_{1-j}}
          {(i,n_0,n_1)\red (k,n'_0,n'_1)}
  \infrule{P(i)=\Mif{j}{k}{\ell}\andalso n_j=0}
          {(i,n_0,n_1)\red (k,n_0,n_1)}
  \infrule{P(i)=\Mif{j}{k}{\ell}\quad n_j\ne 0\quad n_j'=n_j-1\quad n'_{1-j}=n_{1-j}}
          {(i,n_0,n_1)\red (\ell,n'_0,n'_1)}
\end{definition}

\begin{definition}[Minsky machine halting problem]
  The halting problem for Minsky machines is the problem of deciding whether
  \((0,0,0)\reds (j,n_1,n_2)\) for some \(j,n_1,n_2\) such that \(P(j)=\Mhalt\),
  given a Minsky machine \(P\).
\end{definition}
  
The halting problem for Minsky machines is undecidable~\cite{Minsky}.
 
\section{Undecidable Extensions of Boolean PCF}
\label{sec:undecidability}

This section considers four kinds of extensions of \BPCF{}:
those with exceptions, algebraic effects and handlers, name generation and equality, and
references. We provide simple proofs of the undecidability of those extensions,
by reduction from the halting problem for Minsky machines.

\subsection{Boolean PCF with Exceptions (\BPCFE{})}

We consider an extension with exceptions in this subsection.
We assume that there exists
an exception constructor \(\EX{\sty}\) for each type \(\sty\).\footnote{In a usual
programming language like OCaml, there can be multiple exception constructors of the same type
and they can be dynamically created, but
we consider here only one (statically allocated) constructor for each type,  for the sake of simplicity.}

We extend the syntax of expressions as follows.
\begin{align*}
  M ::= \cdots \mid \raiseexp{\sty}{M}\mid \tryexp{\sty}{M_1}{x}{M_2}
\end{align*}

Accordingly, we add the following typing rules.
\begin{multicols}{2}
\infrule{\STE\p M:\sty}{\STE\p \raiseexp{\sty}{M}:\sty'}
\infrule{\STE\p M_1:\sty\andalso \STE,x\COL\sty_x\p M_2:\sty}
        {\STE\p \tryexp{\sty_x}{M_1}{x}{M_2}:\sty}
\end{multicols}
The expression \(\raiseexp{\sty}{M}\) evaluates \(M\) to a value \(V\) (of type \(\sty\)), and then raise
an exception \(\Ex{\sty}{V}\).
The expression \(\tryexp{\sty}{M_1}{x}{M_2}\) evaluates \(M_1\), and if an exception \(\Ex{\sty}{V}\)
is raised, then \(M_2\) is evaluated, with \(x\) being bound to \(V\).
The operational semantics is extended as follows.
\begin{align*}
  E \mbox{ (evaluation contexts) }::= \cdots \mid \raiseexp{\sty}{E}\mid \tryexp{\sty}{E}{x}{M}
\end{align*}
\infax{E[\tryexp{\sty}{E'[\raiseexp{\sty}{V}]}{x}{M}]\red
  E[[V/x]M]\\
\mbox{ (if $E'$ does not contain a subexpression of the form
    \(\tryexp{\sty}{E''}{x}{M}\))}}

The reachability problem for the extended language is undecidable.
\begin{theorem}[\cite{DBLP:journals/lisp/Lillibridge99}]
  The reachability problem for Boolean PCF with exceptions, i.e.,
  the problem of deciding whether \(M\reds\FAIL\)
is undecidable.
\end{theorem}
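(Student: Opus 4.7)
The plan is to reduce the halting problem for Minsky machines (Section~\ref{sec:pre}) to the reachability problem for \BPCFE{}. Given a Minsky machine $P$ with instructions indexed by $\set{0,\ldots,m}$, I would construct a closed \BPCFE{} term $M_P$ of type \UNIT{} such that $M_P \reds \FAIL$ iff $P$ halts from $(0,0,0)$. The fundamental obstacle is that the counters of $P$ are unbounded while the base types of \BPCFE{} are finite; overcoming this is where the essential use of exceptions carrying functional values must enter.

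The key step is to encode a natural number as a closure of type $\sty \Def \UNIT \to \UNIT$, using the exception type $\EX{\sty}$ to carry a ``predecessor'' value. Define
\[
\mathit{Zero} \Def \lambda x.\Vunit, \qquad \mathit{Inc} \Def \lambda c.\lambda x.\raiseexp{\sty}{c},
\]
so that $\mathit{Inc}^n(\mathit{Zero})$ represents $n$: applying it to $\Vunit$ raises $\Ex{\sty}{\mathit{Inc}^{n-1}(\mathit{Zero})}$ when $n\ge 1$, and returns $\Vunit$ when $n=0$. Test-for-zero and predecessor are then
\[
\mathit{TZ} \Def \lambda c.\tryexp{\sty}{c\,\Vunit;\TRUE}{y}{\FALSE}, \qquad
\mathit{Pred} \Def \lambda c.\tryexp{\sty}{c\,\Vunit;\mathit{Zero}}{y}{y},
\]
and a short induction on $n$ yields $\mathit{TZ}(\mathit{Inc}^n(\mathit{Zero}))\reds\TRUE$ iff $n=0$, and $\mathit{Pred}(\mathit{Inc}^{n+1}(\mathit{Zero}))\reds\mathit{Inc}^n(\mathit{Zero})$. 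Crucially, all four primitives live in \BPCFE{} with only the single simple type $\sty$.

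With these primitives in hand, I would simulate $P$ by mutually recursive functions $\mathit{run}_0,\ldots,\mathit{run}_m : \sty\to\sty\to\UNIT$. Each body is defined by cases on $P(i)$: an $\INC{j};\GOTO{k}$ becomes a tail call to $\mathit{run}_k$ with $c_j$ replaced by $\mathit{Inc}(c_j)$; a $\Mif{j}{k}{\ell}$ becomes an $\mathtt{if}$ on $\mathit{TZ}(c_j)$ dispatching either to $\mathit{run}_k(c_0,c_1)$ or to $\mathit{run}_\ell$ with $c_j$ replaced by $\mathit{Pred}(c_j)$; and $\Mhalt$ becomes $\FAIL$. Taking $M_P \Def \mathit{run}_0(\mathit{Zero},\mathit{Zero})$ completes the construction, and correctness follows from a straightforward bisimulation between Minsky configurations $(i,n_0,n_1)$ and \BPCFE{} states $\mathit{run}_i(\mathit{Inc}^{n_0}(\mathit{Zero}),\mathit{Inc}^{n_1}(\mathit{Zero}))$. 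The one delicate point I expect to have to check is that the $\mathtt{try}$ in $\mathit{TZ}$ or $\mathit{Pred}$ catches exactly the exception raised by $c\,\Vunit$ and not one propagated from deeper inside; this is immediate because every counter representation is built only from $\mathit{Zero}$ and $\mathit{Inc}$, so at the moment of application it raises exactly one predecessor exception and nothing else, and the innermost-handler discipline then suffices.
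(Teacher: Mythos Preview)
Your proposal is correct and takes essentially the same approach as the paper: encode a natural number as a closure of type $\UNIT\to\UNIT$ whose application raises an exception carrying its predecessor, and simulate the Minsky machine by mutually recursive functions on two such counter arguments. The only cosmetic difference is that the paper fuses your $\mathit{TZ}$ and $\mathit{Pred}$ into a single $\mathtt{try}$-$\mathtt{with}$, writing the conditional case directly as $\tryexp{\sty}{(c_j\,\Vunit;\,f_k\,c_0\,c_1)}{c_j}{f_\ell\,c_0\,c_1}$, whereas you factor them out and branch on a Boolean; both are equivalent here since the counter closures are pure values.
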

Lillibridge~\cite{DBLP:journals/lisp/Lillibridge99}
has actually shown that the problem is
undecidable even for the fragment without recursion, by encoding the untyped \(\lambda\)-calculus. In his encoding,
exceptions carrying second-order functions (of type
\((\Tunit\to\Tunit)\to(\Tunit\to\Tunit)\)) were used.

Here we give an alternative proof of the theorem above by reduction
from the halting problem for Minsky machines. We use recursion, but
only exceptions carrying first-order functions of type \(\Tunit\to\Tunit\);
thus, the fragment of \BPCFE{} used here is incomparable with the one
used by Lillibridge~\cite{DBLP:journals/lisp/Lillibridge99}.

We encode a natural number \(n\) to an expression \(\encexn{n}\) of type \(\Tunit\to\Tunit\) as follows.
\begin{align*}
  \encexn{0} &= \lambda x.x\qquad
  \encexn{(n+1)} = \lambda x.\raiseexp{\Tunit\to\Tunit}{\encexn{n}}.
\end{align*}
In other words, a natural number \(n\) is encoded into the following function.
\[
\underbrace{\lambda x.\RAISE\EX{\Tunit\to\Tunit}(\lambda x.\RAISE\EX{\Tunit\to\Tunit}(\cdots
\lambda x.\RAISE\EX{\Tunit\to\Tunit}(}_n\lambda x.x)\cdots )).
\]
Using this, a Minsky machine \(\set{0\mapsto I_0,\ldots,m\mapsto I_m}\) can be encoded as
\begin{align*}
  \letrecexp{f_0\;c_0\;c_1 =M_0 \AND \cdots \AND f_m\;c_0\;c_1 = M_m}{f_0\;\encexn{0}\;\encexn{0}}
\end{align*}
where \(M_i\) is:
\begin{itemize}
\item \(\FAIL\), if \(I_i=\Mhalt\);
\item \(\letexp{c_j}{\lambda x.\raiseexp{\Tunit\to\Tunit}{c_j}}{f_k\;c_0\;c_1}\), if \(I_i = \INC{j};\GOTO{k}\); and
\item \(\tryexp{\Tunit\to\Tunit}{(c_j\Vunit; f_k\;c_0\;c_1)}{c_j}{f_\ell\;c_0\;c_1}\)\\ if \(I_i=\Mif{j}{k}{\ell}\).
\end{itemize}
In the last case, \(c_j\Vunit\) raises an exception just if (the value represented by)
\(c_j\) is non-zero; in that case,
the exception carries the value of \(c_j-1\), to which \(c_j\) is bound and \(f_\ell\) is executed.
If the value of \(c_j\) is \(0\), then no exception is raised and \(f_k\) is executed.
From this observation, it should be clear that \((i,n_0,n_1)\red (k,n_0',n_1')\) just if
\(f_i\;n_0\;n_1\) calls \(f_k\;n_0'\;n_1'\) directly.
Thus, the Minsky machine halts if and only if the term above reaches \(\FAIL\); hence
the reachability problem for Boolean PCF with exceptions is undecidable.
\begin{remark}
  Following the idea of Lillibridge~\cite{DBLP:journals/lisp/Lillibridge99}, we can
  also eliminate the use of recursion in the above encoding as follows.
  A recursive function \(\Y(\lambda f.\lambda x.M)\) of type \(\tau\) can be represented by:
  \begin{align*}
    & \LET\ \mathit{selfapp} = \lambda g.(\tryexp{(\Tunit\to\Tunit)\to\tau}{g\;\Vunit; \raiseexp{\Tunit}{{}}}{h}{h})g\; \IN\\
    & \LET\ \mathit{g} = \lambda \_.\raiseexp{(\Tunit\to\Tunit)\to\tau}{
      \lambda g.\lambda x.\letexp{f}{\mathit{selfapp}\;g}{M}}\; \IN\\
    & \mathit{selfapp}\;g
  \end{align*}
  It corresponds to \(\letexp{h}{\lambda g.\lambda x.(\letexp{f}{g\,g}M)}{h\,h}\) in the untyped \(\lambda\)-calculus.
  Here, intuitively, \(\mathit{selfapp}\) is intended to take
  a function \(h\) and computes the self-application \(h\,h\).
  To make it well-typed, however, the actual \(\mathit{selfapp}\) takes a closure \(g\) (of type
  \(\Tunit\to\Tunit\))
  that will raise an exception carrying \(h\), extracts \(h\) by invoking the closure
  and catching the exception, and then applies \(h\) to \(g\);
   the part \(\raiseexp{\Tunit}{}\) is a dummy expression that
   just adjusts the type and will never be executed.
\qed
\end{remark}

\begin{remark}
  In the above undecidability proof and the original proof of
  Lillibridge~\cite{DBLP:journals/lisp/Lillibridge99} (encoding the untyped \(\lambda\)-calculus),
  an exception carrying an exception-raising closure plays a key role.
  Indeed, the reachability problem is decidable if exception values are restricted to base types;
see, e.g., \cite{Kobayashi13JACM,SUK13PEPM}.
  Another, a little weaker restriction to retain the decidability would be to restrict
  values carried by exceptions to pure functions (that do not raise exceptions) and base-type values,
  or to restrict exceptions to ``checked'' ones~\cite{DBLP:journals/lisp/Lillibridge99}
  (where the type of a function contains those of exceptions it may raise;
  for example, a Boolean function that may raise an exception carrying a closure of type \(\tau\)
  is expressed by \(\BOOL\stackrel{\tau}{\to}\BOOL\)). With such restrictions,
  exceptions can be transformed away by an extension of the CPS transformation,
  where an exception handler is passed as another continuation~\cite{SUK13PEPM}.
  \qed
\end{remark}
 
\subsection{An Extension with Algebraic Effects and Handlers (\BPCFAE{})}

This section considers \BPCFAE{}, an extension of Boolean PCF with
algebraic effects~\cite{DBLP:journals/corr/PlotkinP13,DBLP:journals/pacmpl/LagoG24}.
The model checking problem for an extension of Boolean PCF with algebraic
effects has recently been shown to be
undecidable~\cite{DBLP:journals/pacmpl/LagoG24}, by encoding PCF (with natural numbers).
An alternative proof of the undecidability given below is essentially just a streamlined version
of their encoding, specialized for Minsky machines.

We extend the syntax of \BPCF{} expressions as follows.
\begin{align*}
  M \mbox{ (expressions) }&::= \cdots \mid \Effexp{\alg}{V_1}{V_2}\mid
  \Handleexp{M}{H}\\
  H \mbox{ (handlers) }&::= \set{\rethandler{x}{M_0}, \effhandler{\alg_1}{x}{\contvar}{M_1},
    \ldots,\effhandler{\alg_\ell}{x}{\contvar}{M_\ell}}
\end{align*}
Here, \(\alg, \alg_1,\ldots,\alg_\ell\) ranges over a set of (the names of) algebraic
effects. We assume a type assignment function \(\Alg\), which maps
each algebraic effect \(\alg\) to a type of the form \(\bt_1\to \bt_2\).\footnote{We restrict effects to operations on base types, following \cite{DBLP:journals/pacmpl/LagoG24}. If we allow effects on higher-order values, then
  the undecidability follows immediately since exceptions can be directly encoded.}
The expression \(\Effexp{\alg}{V_1}{V_2}\) applies the algebraic effect operation
\(\alg\) to \(V_1\), where a continuation function \(V_2\) is passed as an additional
parameter; the actual effect of the operation is defined by a surrounding handler
\(H\). The expression \(\Handleexp{M}{H}\) evaluates \(M\) and
the handler \(H\) is invoked when an algebraic effect operation is invoked during
the evaluation. Unlike exception handlers, the control may be returned to
the main expression \(M\), by invoking the continuation.
In \(\Handleexp{M}{H}\), we assume that \(H\) contains a handler for every \(\alg_i\);
this does not lose generality, as we can always add a ``forwarder'' handler
\(\effhandler{\alg_i}{x}{\contvar}{\alg_i(x,\contvar)}\) to ensure this condition.

The typing rules (in addition to those of \BPCF{})
for deep handlers
are given in Figure~\ref{fig:alg-typing}.
(For shallow handlers~\cite{DBLP:conf/aplas/HillerstromL18}, the type of \(c\) in the last rule
should be replaced with \(\bt'_i\to\sty_1\).)
In the figure, \(\Alg\) gives the types of the operand and result of the
algebraic effect operation;
for the sake of simplicity, we assume that they are globally fixed.
For simplicity, we have also omitted the distinction between value
types and computation types, as opposed to standard type systems
for algebraic
effects~\cite{DBLP:journals/corr/PlotkinP13,DBLP:journals/pacmpl/LagoG24}.
A handler is given a type of the form
\(\Thandler{\sty_1}{\sty_2}\); it means that the main
expression is expected to return a value of 
type \(\sty_1\) (unless an algebraic effect occurs),
and the handler transforms it to a value of type \(\sty_2\); see the last rule
in the figure.\footnote{
In the type system of \cite{DBLP:journals/pacmpl/LagoG24}, a handler type also carries information
  about the types of algebraic effects handled and raised by the handler;
  we can omit them thanks to the assumption that they are globally fixed
  as described by \(\Alg\).}
\begin{figure}
  \infrule{\Alg(\alg)=\bt_1\to \bt_2\andalso \STE\p V_1:\bt_1 \andalso
    \STE\p V_2: \bt_2\to \sty}
          {\STE\p \Effexp{\alg}{V_1}{V_2}: \sty}
  \infrule{\STE \p H:\Thandler{\sty_1}{\sty_2}\andalso \STE\p M:\sty_1}
          {\STE \p \Handleexp{M}{H}: \sty_2}
        \infrule{\STE,x\COL\sty_1 \p M_0:\sty_2\\
            \Alg(\alg_i)=\bt_i\to \bt'_i  \andalso
            \STE,x\COL \bt_i,\contvar\COL \bt'_i\to\sty_2\p M_i:\sty_2 \mbox{ (for each $i\in
            \set{1,\ldots,\ell}$)}}
          {\STE \p \set{\rethandler{x}{M_0},
              \effhandler{\alg_1}{x}{\contvar}{M_1},
    \ldots,\effhandler{\alg_\ell}{x}{\contvar}{M_\ell}}: \Thandler{\sty_1}{\sty_2}}
  \caption{Typing Rules for Algebraic Effects}
  \label{fig:alg-typing}
\end{figure}

The operational semantics is extended as given in Figure~\ref{fig:alg-os}.
Here, the rule \rn{R-DH} is used in the case of the deep handler semantics,
and \rn{R-SH} is used in the case of the shallow handler semantics.
The rule \rn{R-Ret} is for the case where the handled expression evaluates
to a value; in this case, the return handler \(M_0\) is invoked.
The rules \rn{R-DH} and \rn{R-SH} are for the case where an algebraic
effect is invoked. In the deep handler semantics, the partial continuation including
the handler \(H\) is passed as the continuation argument,
while in the shallow handler semantics,
the partial continuation without the handler is passed as the continuation argument.

For example, let \(H=\set{\rethandler{x}{x}, \effhandler{\alg}{x}{\contvar}
  {\contvar(\FALSE)\Band\contvar(\TRUE)}}\)
where \(\Band\) is the Boolean AND operation. When an effect \(\alg\) occurs,
the handler calls the continuation twice, with \(\FALSE\) and \(\TRUE\) as arguments.
Then, \(M \Def \Handleexp{\NOT(\alg(\Vunit,\lambda z.z))}{H}\) is reduced as follows in
the shallow handler semantics.
\begin{align*}
  M &\red [\Vunit/x, \lambda y.{\NOT((\lambda z.z)y)}/\contvar](\contvar(\FALSE)\Band\contvar(\TRUE))\\
  &\equiv \NOT(\FALSE)
  \Band \NOT(\TRUE)\reds  \TRUE\Band \NOT(\TRUE) \reds  \FALSE.
\end{align*}
In the case of the deep handler semantics, \(M\) is reduced to
\[[\Vunit/x, \lambda y.\Handleexp{\NOT((\lambda z.z)y)}{H}/\contvar](\contvar(\FALSE)\Band\contvar(\TRUE))\] on the first line, but the final result is the same in this example.

\begin{figure}
  \begin{align*}
    E\mbox{ (evaluation context) }::=
    \cdots \Effexp{\alg}{E}{V}\mid \Handleexp{E}{H}\\
    \end{align*}
  \infax[R-Ret]{E[\Handleexp{V}{\set{\rethandler{x}{M_0},\ldots
}}]\red
    E[[V/x]M_0]}
  \infrule[R-DH]
  {H = \set{
         \ldots,  \effhandler{\alg_i}{x}{\contvar}{M_i},\ldots}}
          {E[\Handleexp{F[\Effexp{\alg_i}{V_1}{V_2}]}
      {H
}]\red
    E[[V_1/x,\lambda y.\Handleexp{F[V_2\,y]}{H}/\contvar]M_i]}
  \infrule[R-SH]
  {H = \set{
         \ldots,  \effhandler{\alg_i}{x}{\contvar}{M_i},\ldots}}
          {E[\Handleexp{F[\Effexp{\alg_i}{V_1}{V_2}]}
      {H
}]\red
    E[[V_1/x,\lambda y.F[V_2\,y]/\contvar]M_i]}
  \caption{Operational Semantics for Algebraic Effects. The meta-variable
    \(F\) denotes an evaluation context without a sub-expression of the form
   \(\Handleexp{E}{H}\) and \(F\) does not capture \(y\).}
  \label{fig:alg-os}
\end{figure}

\subsubsection{Encoding of Minsky machines with shallow handlers}
For algebraic effect with shallow handlers (when the reduction relation is
defined with the rule \rn{R-SH} instead of \rn{R-DH}),
we can encode natural numbers as follows, following
the idea of Dal Lago and Ghyselen~\cite{DBLP:journals/corr/PlotkinP13,DBLP:journals/pacmpl/LagoG24}.\footnote{Actually,
  we can instead encode \(0\) as \(\lambda x.x\) in the case of shallow handlers;
  we chose the current encoding for the sake of uniformity with the encoding for
  deep handlers.}
\begin{align*}
& \enceff{0} = \lambda x.\Effexp{\alg_Z}{x}{\lambda y.y}\qquad
    \enceff{(n+1)}= \lambda x.\Effexp{\alg_S}{x}{\enceff{n}}
\end{align*}
This encoding is essentially the same as the one used in the previous subsection, where
a natural number was encoded as a chain of closures to raise exceptions.

A Minsky machine \(\set{0\mapsto I_0,\ldots,m\mapsto I_m}\) can be encoded as
\begin{align*}
  \letrecexp{f_0\;c_0\;c_1 =M_0 \AND \cdots \AND f_m\;c_0\;c_1 = M_m}{f_0\;\encexn{0}\;\encexn{0}}
\end{align*}
where \(M_i\) is:
\begin{itemize}
\item \(\FAIL\), if \(I_i\) is \(\Mhalt\);
\item \(\letexp{c_j}{\lambda x.\Effexp{\alg_S}{x}{c_j}}{f_k\;c_0\;c_1}\), if \(I_i\) is \(\INC{j};\GOTO{k}\); and
\item \(\Handleexp{c_j\Vunit}{H_i}\) if \(I_i\) is \(\Mif{j}{k}{\ell}\),
  where
  \begin{align*}
    &  H_i = \set{\rethandler{x}{x},\effhandler{\alg_Z}{x}{\contvar}{f_k\;c_0\;c_1},  \effhandler{\alg_S}{x}{c_j}{f_\ell\;c_0\;c_1}}
  \end{align*}
\end{itemize}
In the last case, 
we can check whether the value represented by \(c_j\) is \(0\), by invoking it and checking whether
\(\alg_Z\) or \(\alg_S\) occurs, and let the handler \(H_i\) perform appropriate operations.
If \(\alg_z\) occurs, \(f_k\) is called. If \(\alg_S\) occurs, then \(c_j\) is bound to the continuation argument,
which is the current value of \(c_j\) minus one, and \(f_\ell\) is called.
The return handler in \(H_i\) above will never be used.

From the above observation, it should be clear that the Minsky machine halts, just if the above term of
\BPCFAE{} reaches \(\FAIL\) in the shallow handler semantics. Thus,
the reachability problem for \BPCFAE{} with shallow handlers is undecidable.

\subsubsection{Encoding of Minsky machines with  deep handlers}

In the case of deep handlers, we need a trick (proposed by \cite{DBLP:journals/pacmpl/LagoG24})
to apply the effect handler \(H_i\) only to the first effect.
To this end, we prepare ``primed versions'' of the algebraic effects \(\alg_S\) and \(\alg_Z\),
and convert the representation of a natural number:
\[n^\dagger :=  \lambda x.\alg_S(x; \lambda x.\alg_S(x; \cdots \lambda x.\alg_S(x; \lambda x.\alg_Z(x, \lambda x.x))\cdots))\]
to
\[n^\flat := \lambda x.\alg_{S'}(x; \lambda x.\alg_S(x; \cdots \lambda x.\alg_S(x; \lambda x.\alg_Z(x, \lambda x.x))\cdots)),\]
before calling it to perform case analysis.

A Minsky machine \(\set{0\mapsto I_0,\ldots,m\mapsto I_m}\) is now encoded as
\begin{align*}
  \letrecexp{f_0\;c_0\;c_1 =M_0 \AND \cdots \AND f_m\;c_0\;c_1 = M_m}{f_0\;\encexn{0}\;\encexn{0}}
\end{align*}
where \(M_i\) is:
\begin{itemize}
\item \(\FAIL\), if \(I_i\) is \(\Mhalt\);
\item \(\letexp{c_j}{\lambda x.\Effexp{\alg_S}{x}{c_j}}{f_k\;c_0\;c_1}\), if \(I_i\) is \(\INC{j};\GOTO{k}\); and
\item \(\Handleexp{\mathit{unwrap}\;c_j\;\Vunit}{H_i}\) \\if \(I_i\) is \(\Mif{j}{k}{\ell}\),
  where
  \begin{align*}
    & \mathit{unwrap} = \lambda c.\lambda x.\Handleexp{c\Vunit}{H_{\mathit{unwrap}}}\\
    &  H_i = \set{\rethandler{x}{x},\effhandler{\alg_{Z'}}{x}{\contvar}{f_k\;c_0\;c_1},  \effhandler{\alg_{S'}}{x}{c_j}{f_\ell\;c_0\;c_1}}\\
    &  H_{\mathit{unwrap}} = \set{\rethandler{x}{x},\effhandler{\alg_{Z}}{x}{\contvar}{\Effexp{\alg_{Z'}}{x}
        {\lambda y.\Handleexp{\contvar\,y}{H_{\mathit{wrap}}}}},\\&\qquad\qquad\qquad\qquad\qquad\qquad
          \effhandler{\alg_{S}}{x}{c}{\Effexp{\alg_{S'}}{x}{\lambda y.\Handleexp{\contvar\,y}{H_{\mathit{wrap}}}}}}\\
    &  H_{\mathit{wrap}} = \set{\rethandler{x}{x}, \effhandler{\alg_{Z'}}{x}{\contvar}{\Effexp{\alg_{Z}}{x}{\contvar}}, \effhandler{\alg_{S'}}{x}{c}{\Effexp{\alg_{S}}{x}{\contvar}}}
  \end{align*}
\end{itemize}
The function \(\mathit{unwrap}\) has the effect of converting \(n^\dagger\) to \(n^\flat\).
To that end, it applies the handler \(H_{\mathit{unwrap}}\), which
replaces \(\alg_S\) and \(\alg_Z\) with their primed versions. Furthermore,
inside the handler \(H_{\mathit{unwrap}}\), 
the handler \(H_{\mathit{wrap}}\) is applied to the continuation, which re-replaces primed effects
\(\alg_{S'}\) and \(\alg_{Z'}\) with unprimed versions. Thus, overall,
\(\mathit{unwrap}\) replaces only the outermost effects with their primed versions. The handler \(H_i\) then
performs case analysis on whether the outermost effect is \(\alg_{S'}\) or \(\alg_{Z'}\).

Let us check how the above encoding works for the case \(I_i\) is a conditional command. We have the following reductions for \(f_i\,c_0\, c_1\), where \(c_0= (n+1)^\dagger = \lambda x.\Effexp{\alg_S}{x}{n^\dagger}\).
\begin{align*}
  f_i\,c_0\, c_1 & \red \Handleexp{\mathit{unwrap}\;c_0\;\Vunit}{H_i}\\
  & \red \Handleexp{(\Handleexp{c_0\Vunit}{H_{\mathit{unwrap}}})}{H_i}\\
  & \red \Handleexp{(\Effexp{\alg_{S'}}{\Vunit}{\lambda y.\Handleexp{(\Handleexp{n^\dagger\,y}{H_{\mathit{unwrap}}})}{H_{\mathit{wrap}}}})}{H_i}\\
  &\equiv \Handleexp{\Effexp{\alg_{S'}}{\Vunit}{n^\dagger}}{H_i} \mbox{ (\(H_{\mathit{unwrap}}\) and \(H_{\mathit{wrap}}\) are canceled out)}\\
  &\red f_\ell\;n^\dagger\;c_1.
\end{align*}

Thus, the Minsky machine halts, just if the above term of
\BPCFAE{} with deep handlers reaches \(\FAIL\). Therefore,
the reachability problem for \BPCFAE{} with deep handlers is also undecidable.

\begin{remark}
  We have shown the undecidability of shallow handlers first because it is easier
  to obtain, and then shown that of deep handlers
  by using Dal Lago and Ghyselen's encoding of shallow handlers using deep handlers.
  For the latter, we can alternatively use the encoding by
  Hillerstr{\"{o}}m and Lindley~\cite{DBLP:conf/aplas/HillerstromL18}.
  \end{remark}
 
\subsection{Extension with Symbol Generation}
\label{sec:BPCFL}
We now consider \BPCFL{}, an extension of \BPCF{} with primitives for generating new symbols and
testing the equality of symbols.
The syntax of expressions and types is extended as follows.
\begin{align*}
  M \mbox{ (expressions) }::= \cdots \mid \Gensym{}\mid M_1=M_2 \qquad \qquad
  \tau \mbox{ (types) } ::= \cdots \mid \Tsym.
\end{align*}
We also add the following typing rules.
\begin{multicols}{3}
\infrule{}{\STE\p \Gensym{}:\Tsym}
\infrule{\STE\p M_1:\Tsym \andalso \STE\p M_2:\Tsym}{\STE\p M_1=M_2:\BOOL}
\end{multicols}

The expression \(\Gensym{}\) generates a new symbol, and \(M_1=M_2\) tests the equality of
the two symbols (obtained by evaluating) \(M_1\) and \(M_2\).
\BPCFL{} can be considered a minimal extension of \BPCF{} for writing symbol manipulation programs.
It will also be used in the next subsection to show the undecidability of an extension of \BPCF{} with references.

We extend the reduction semantics accordingly.
To that end, we assume a denumerable set of labels, ranged over by a meta-variable \(\ell\).
We further extend the syntax of expressions by \(M::=\cdots \mid \ell\) (where \(\ell\) may occur only
at run-time). 
The reduction semantics in Figure~\ref{fig:os} is extended with the following rules
(where \(E\) is also extended by \(E ::= \cdots \mid E=M\mid \ell=E\)).
\infax{E[\Gensym{}]\red E[\ell]\qquad \mbox{($\ell$ does not occur in \(E\))}}
\begin{multicols}{2}
  \infax{E[\ell=\ell]\red E[\TRUE]} 
  \infax{E[\ell_1=\ell_2]\red E[\FALSE]\qquad \mbox{(if $\ell_1\ne \ell_2$)}}
\end{multicols}

  To encode a Minsky machine, we represent a number as a symbol,
  and keep and update a function to decrement a number.
  A Minsky machine \(P = \set{0\mapsto I_0,\ldots,m\mapsto I_m}\) is encoded into the following
  term \(P^\dagger\) of \BPCFL{}.
  \newcommand\lzero{\ell_{\mathit{zero}}}
  \newcommand\iszero{\mathit{iszero}}
  \newcommand\inc{\mathit{inc}}
  \newcommand\dec{\mathit{dec}}
  \newcommand\next{\mathit{next}}
  \begin{align*}
    &    \letexp{\lzero}{\Gensym{}}\letexp{\iszero }{\lambda x.x=\lzero}\letexp{\dec}{\lambda x.\lzero}\\
&    \letexp{\inc}{\lambda (x,d).\letexp{y}{\Gensym{}}{(y, \lambda z.\ifexp{z=y}{x}{d(z)})}}\\
    &  \letrecexp{f_0\;d\;c_0\;c_1 =M_0 \AND \cdots \AND f_m\;d\;c_0\;c_1 = M_m}\\
    & {f_0\;\dec\;\encexn{0}\;\encexn{0}}
\end{align*}
where \(M_i\) is:
\begin{itemize}
\item \(\FAIL\), if \(I_i\) is \(\Mhalt\);
\item \(\letexp{(c_j,d)}{\inc(c_j,d)}{f_k\;d\;c_0\;c_1}\), if \(I_i\) is \(\INC{j};\GOTO{k}\); and
\item \(\ifexp{\iszero\;{c_j}}{f_k\;d\;c_0\;c_1}{\letexp{c_j}{d(c_j)}{f_n\;d\;c_0\;c_1}}\) \\ if
    \(I_i\) is \(\Mif{j}{k}{n}\).
\end{itemize}

The program above first creates a symbol \(\lzero\) which represents \(0\), and defines the function \(\iszero\) accordingly.
Initially, the function \(\dec\) to decrement a number is set to the constant function that always returns (the symbol representing) \(0\),
because at this moment, the only symbol
that represents a number is \(\lzero\), and \(0-1\) is defined as \(0\).
The function \(\inc\) takes a pair consisting of the number \(x\) to increment, and the current decrement function \(d\)
as an argument, and returns a pair consisting of a symbol that represents \(x+1\) and the new decrement function \(d'\).
As the symbol for  \(x+1\), a fresh symbol is always generated even if \(x+1\) has been computed before,
and \(d'\) remembers \(x\) as the predecessor of the created symbol. Thus, except for \(0\), there may be more than
one symbol that represents the same number; that is fine, as the only zero-test is performed in a Minsky machine
to compare numbers.

The function \(f_i\), corresponding to the program point \(i\) of the Minsky machine, now takes the current decrement function \(d\)
as an additional argument. When \(I_i\) is \(\INC{j};\GOTO{k}\), then \(M_i\) calls
the function \(\inc\) to obtain \(c_j+1\) and the updated decrement function, and proceeds to call \(f_k\).
In the last case, \(\iszero\) is called to check whether \(c_j\) represents \(0\), and if not, the current decrement function \(d\) is called.

Based on the intuition above, it should be clear that a Minsky machine \(P\) halts if and only if \(P^\dagger\reds \FAIL\). Thus, we have:
\begin{theorem}
\label{th:BPCFL-is-undecidable}
The reachability problem for \BPCFL{} is undecidable.
\end{theorem}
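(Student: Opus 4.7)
The plan is to turn the encoding sketched before the theorem into a formal argument by reduction from the Minsky machine halting problem. The translation $P \mapsto P^\dagger$ is clearly computable and yields a closed term of \BPCFL{} of base type, so it suffices to prove the correctness claim that $P$ halts if and only if $P^\dagger \reds \FAIL$.

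First I would formalize what it means for a symbol to represent a natural number under the current decrement function. Writing $\mathrm{rep}_d(s, n)$ for the property that applying $d$ exactly $n$ times to $s$ produces $\lzero$ while fewer iterations do not, the key lemma to establish is a simulation invariant: whenever an evaluation of $P^\dagger$ reaches a configuration of the form $E[f_i\;d\;c_0\;c_1]$, there exist $n_0, n_1 \in \Nat$ with $\mathrm{rep}_d(c_j, n_j)$ for $j \in \{0,1\}$ such that $(i, n_0, n_1)$ is Minsky-reachable from $(0,0,0)$ in $P$; conversely, every such Minsky-reachable configuration arises this way from some evaluation of $P^\dagger$.

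I would prove the invariant by induction on the number of reduction steps, with case analysis on $I_i$. The $\Mhalt$ case is immediate since $M_i$ reduces directly to $\FAIL$. For $\INC{j};\GOTO{k}$, evaluating $\inc(c_j, d)$ allocates a fresh label $y$ via \Gensym{} and returns $(y, d')$ where $d'(y) = c_j$ and $d'$ agrees with $d$ elsewhere; freshness of $y$ combined with $\mathrm{rep}_d(c_j, n_j)$ yields $\mathrm{rep}_{d'}(y, n_j + 1)$, while $\mathrm{rep}_{d'}(c_{1-j}, n_{1-j})$ is preserved because the decrement chain from $c_{1-j}$ never passes through the fresh label $y$. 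For $\Mif{j}{k}{\ell}$, the test $c_j = \lzero$ evaluates to $\TRUE$ exactly when $n_j = 0$, since $\lzero$ is allocated before any other \Gensym{} call and is therefore the unique symbol that represents zero; in the nonzero branch $d(c_j)$ represents $n_j - 1$ by the definition of $\mathrm{rep}_d$.

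The main obstacle will be the freshness bookkeeping: one must maintain that every label produced by \Gensym{} is genuinely new with respect to everything in the current runtime state---the surrounding evaluation context, the captured decrement closure $d$, and the counter symbols $c_0, c_1$---so that each update to $d$ is a proper extension rather than an overwrite of a live entry, and no Minsky-distinct counter values become aliased through an accidental label collision. This is a standard but slightly delicate alpha-style invariant on runtime configurations, and it is the only nontrivial technical ingredient; once it is in place, both directions of the biconditional follow by routine induction, and the undecidability of Minsky halting transfers to undecidability of the reachability problem for \BPCFL{}.
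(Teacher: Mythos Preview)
Your proposal is correct and follows the same approach as the paper: both argue by reduction from Minsky-machine halting via the encoding $P\mapsto P^\dagger$ displayed just before the theorem. The paper leaves the correctness of the simulation at the level of ``it should be clear that $P$ halts iff $P^\dagger\reds\FAIL$'', whereas you spell out the invariant $\mathrm{rep}_d(s,n)$ and the freshness bookkeeping; this is exactly the formal content behind the paper's informal claim, so there is no genuine difference in strategy.
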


 \subsection{Extension with Boolean References}
\label{sec:BPCFR}
Let us now consider \BPCFR{}, an extension of \BPCF{} with Boolean references.
The syntax of expressions and types is extended as follows.
\begin{align*}
  M ::= \cdots \mid \mkref{M}\mid \deref{M}\mid M_1:=M_2 \qquad\qquad
  \tau ::= \cdots \mid \Tref{\BOOL}.
\end{align*}
We add the following typing rules.
\begin{multicols}{3}
\infrule{\ \\\STE\p M:\BOOL}{\STE\p \mkref{M}:\Tref{\BOOL}}
\infrule{\ \\\STE\p M:\Tref{\BOOL}}{\STE\p \deref{M}:{\BOOL}}
\infrule{\STE\p M_1:\Tref{\BOOL}\\ \STE\p M_2:\BOOL}{\STE\p M_1:=M_2:{\Tunit}}
\end{multicols}
We omit (standard) rules for reductions.

The reachability problem for \BPCFR{} is also undecidable.
To see it, it suffices to observe that \BPCFL{} can be encoded into \BPCFR{}.
The types and terms of \BPCFL{} can be encoded as follows.
\begin{align*}
&\Tsym^\dagger = \Tref{\BOOL} \qquad (\Gensym{})^\dagger = \mkref{\TRUE}\\
  &(x=y)^\dagger = x:=\FALSE; y:=\TRUE; !x
\end{align*}
A symbol is represented as a Boolean reference, and the generation of a fresh symbol is emulated by the creation of
a new reference cell.
To test the equality between two references \(x\) and \(y\),
it suffices to change the value of \(x\) and test whether the contents of the values of \(x\) and \(y\) remain the
same.\footnote{This idea for encoding the equality test can be at least traced back to \cite{10.5555/309656.309671}.}

As a corollary of Theorem~\ref{th:BPCFL-is-undecidable}, we have:
\begin{theorem}
\label{th:BPCFR-is-undecidable}
The reachability problem for \BPCFR{} is undecidable.
\end{theorem}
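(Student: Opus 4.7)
The plan is to reduce the reachability problem for \BPCFL{}, whose undecidability was just established in Theorem~\ref{th:BPCFL-is-undecidable}, to the reachability problem for \BPCFR{}. The author already sketches the crucial encoding, so my task is to verify that it induces a reachability-preserving translation.

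First, I would extend the encoding \((\cdot)^\dagger\) compositionally to all types and terms of \BPCFL{}: the only nontrivial clauses are those already displayed, while every other constructor is translated homomorphically. To make the simulation literal at the operational level, I also translate a run-time label \(\ell\) of \BPCFL{} to a memory location of \BPCFR{}, so that the encoding extends to whole configurations (term together with store/label pool). Type preservation is immediate: each \(\Gensym{}\) becomes \(\mkref{\TRUE}\) of type \(\Tref{\BOOL}\), and each equality test becomes a \(\BOOL\)-typed expression.

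Second, I would establish a simulation lemma: whenever \(M \reds N\) in \BPCFL{} (with the associated label allocation), \(M^\dagger \reds N^\dagger\) in \BPCFR{} up to a bijective renaming of freshly allocated locations, and conversely every reduction sequence from \(M^\dagger\) factors through a corresponding \BPCFL{} computation. The homomorphic clauses and \(\Gensym{}\) are immediate. The interesting case is equality: evaluating \((x := \FALSE; y := \TRUE; \deref{x})\) from a state in which \(x\) and \(y\) are bound to locations \(\ell_x, \ell_y\) returns \(\TRUE\) iff \(\ell_x = \ell_y\), regardless of the prior contents of the cells --- which is exactly the semantics of the \BPCFL{} equality test.

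The main obstacle is arguing that the side effects introduced by the equality encoding are invisible to the surrounding program. Since \BPCFL{} has no primitive that inspects or modifies the contents of a symbol --- only \(\Gensym{}\) and equality are available on type \(\Tsym\) --- no other subexpression of the encoded program can ever observe the Boolean stored at an encoded symbol cell. I would formalise this by the invariant that at each reachable configuration the set of live encoded-symbol locations is in bijection with the live \BPCFL{} labels, with arbitrary Boolean contents allowed; this requires careful bookkeeping but is conceptually straightforward. Combining the simulation with type preservation, a closed \BPCFL{}-term \(M\) with \(\p M : \bt\) satisfies \(M \reds \FAIL\) iff \(M^\dagger \reds \FAIL\). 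Any decision procedure for \BPCFR{}-reachability would therefore decide \BPCFL{}-reachability, contradicting Theorem~\ref{th:BPCFL-is-undecidable}.
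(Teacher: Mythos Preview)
Your proposal is correct and follows essentially the same approach as the paper: encode \BPCFL{} into \BPCFR{} via the type- and term-level translation already displayed, and invoke Theorem~\ref{th:BPCFL-is-undecidable}. The paper leaves the simulation argument entirely implicit, whereas you spell out the key invariant (that the Boolean contents of encoded-symbol cells are never observed except through the equality encoding itself); this is exactly the right point to justify, and your treatment is sound.
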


\begin{remark}
  Conversely, we can encode \BPCFR{} into \BPCFL{}.
  To that end, it suffices to represent a reference as a symbol of \BPCFL{},
  and apply a store-passing transformation, where 
  a store can be represented as a function that maps symbols (representing references) to
  their current values.
  Thus, primitives on references can be encoded as follows.
  \begin{align*}
&  (\mkref{M})^\dagger =  \lambda s.
\letexp{(x,s')}{M^\dagger\,s}\letexp{y}{\Gensym{}} {(y, \lambda z.\ifexp{y=z}{x}{s'(z)})}\\
&(!M)^\dagger =  
\lambda s.
\letexp{(x,s')}{M^\dagger\,s}{(s'(x), s')}\\
&  (M_1 := M_2)^\dagger =  \lambda s.
\letexp{(x,s')}{M_1^\dagger\,s}\letexp{(y,s'')}{M_2^\dagger\,s'}
\\&\qquad\qquad\qquad\qquad\qquad\qquad\qquad
{(\Vunit, \lambda z.\ifexp{x=z}{y}{s''(z)})}.  \qquad\qquad\qquad\qquad\qed
\end{align*}
\end{remark}
  
\section{Decidable References}
\label{sec:reference}
We now focus on \BPCFR{}, and introduce a decidable fragment of it, called \BPCFRL{}.
Actually, we introduce a language \bPCFRL{\bty} parameterized by a base type \(\bty\in\set{\BOOL,\INT}\);
\BPCFRL{} is just \bPCFRL{\BOOL}.
Before formally defining \bPCFRL{\bty}, we first give an overview of the ideas in Section~\ref{sec:overview}.
We then define \bPCFRL{\bty} in Section~\ref{sec:typing}, and show,
in Section~\ref{sec:trans},
a type-based translation from \bPCFRL{\bty} to \bPCF{\bty}, where \bPCF{\bty} is PCF parameterized by
the base type \(\bty\); \bPCF{\BOOL} is \BPCF{}, and \bPCF{\INT} is the ordinary PCF (with integers).
The correctness of the translation for the case \(\bty=\BOOL\) yields the decidability of \BPCFRL{},
and the case \(\bty=\INT\) yields an automated verification technique as discussed in Section~\ref{sec:exp}.
We discuss applications and further extensions of \bPCFRL{\bty} in Section~\ref{sec:disc},
and report preliminary experimental results on an application to automated verification
in Section~\ref{sec:exp}.
\subsection{Overview}
\label{sec:overview}
To recover the decidability, we impose the following restrictions by a type system.
\begin{asparaenum}
\item As in Rust~\cite{DBLP:journals/cacm/JungJKD21} and Consort~\cite{DBLP:conf/esop/TomanSSI020}, 
  we ensure that when a mutable reference exists,
  there are no other references for the same memory cell at each point of time during the evaluation.
  For example, \(M_{\Ok_1}\) given below is allowed, but not \(M_{\Ng_1}\).
  \begin{align*}
  &M_{\Ok_1} \Def \letexp{x}{\mkref{\TRUE}}\letexp{y}{x} (y:=\NOT(\deref{y});!y)\\
    & M_{\Ng_1}\Def\letexp{x}{\mkref{\TRUE}}\letexp{y}{x} (y:= \NOT(\deref{x});!x).
    \end{align*}
  In the latter, both \(x\) and \(y\) point to the same reference cell at the same time, which violates the restriction.
  In contrast, in the former expression, only \(y\) is used in the body of the let-expression (although it is also in the lexical
  scope of \(x\)).

  This restriction allows us to represent a reference as the value stored in the cell pointed to by
the reference.\footnote{In RustHorn~\cite{DBLP:journals/toplas/MatsushitaTK21},
a reference is represented as a pair consisting of the current value stored in the reference
cell and the future value; since we do not consider borrowing, we need not consider the future
value.}
For example, \(M_{\Ok_1}\) above can be translated to:
  \[ \letexp{x}{{\TRUE}}\letexp{y}{x} (\letexp{y}{\NOT({y})}y).\]
  Notice that \(\mkref{\TRUE}\) has just been replaced by \(\TRUE\), the value of the reference cell being created,
  and that the update operation \(y:=\NOT(\deref{y})\) has been replaced by the let-expression
  \(\letexp{y}{\NOT(y)}\cdots\). At the point of executing \(y:=\NOT(\deref{y})\), there is no other active
  reference for the cell pointed to by \(y\) (the use of \(x\) is disallowed); thus it is safe to replace the reference \(y\)
  with the value stored in the reference cell. Note that if we apply the same transformation to the invalid program
  \(M_{\Ng_1}\), then we would obtain a wrong program:
  \[\letexp{x}{{\TRUE}}\letexp{y}{x} (\letexp{y}{\NOT(x)}{x}),\]
  which would return \(\TRUE\) instead of \(\FALSE\).

\item We allow a closure to contain references and other closures, but such a closure
  is also treated linearly; there cannot be more than one variable that holds the closure
  at each point of time.
  For example, \(M_{\Ok_2}\) given below is allowed, but not \(M_{\Ng_2}\).
  \begin{align*}
& M_{\Ok_2}\Def\letexp{x}{\mkref{\TRUE}}\letexp{f}{\lambda z.(x:=\NOT(\deref{x});!x)}{(f();f())}\\&
 M_{\Ng_2}\Def\letexp{x}{\mkref{\TRUE}}\letexp{f}{\lambda z.(x:=\NOT(\deref{x});!x)}\letexp{g}{f}{(f();g())}.
    \end{align*}
  In the former, a closure containing a reference cell is created, and it is exclusively referred to by \(f\).
  In the latter, the same closure is also referred to by \(g\), and called through both \(f\) and \(g\),
  which violates the restriction.
  The expression
  \begin{align*}
   &\qquad M_{\Ok_3}\Def\letexp{f}{\lambda z.(\letexp{x}{\mkref{\TRUE}}x:=\NOT(\deref{x});!x)}\letexp{g}{f}{(f();g())}.
    \end{align*}
    is allowed, because the closure does not contain any reference cells, although a reference cell is
    created when the closure is called.
    As in the following expression, a closure may also capture other closures.
\begin{align*}
M_{\Ok_4}\Def\, &\letexp{x}{\mkref{\TRUE}}  
\letexp{y}{\mkref{\TRUE}}\\  
&\letexp{f}{\lambda z.(x := \NOT(x); !x)} \letexp{g}{\lambda u.f()\Band !y}{(g();g())}.
\end{align*}
The expression \(M_{\Ng_4}\) obtained by
replacing \(g();g()\) with \(f();g()\), however, violates the restriction, as \(f\) is already owned by \(g\), so that
\(f\) cannot be used in the main body of the let-expression.

The above restriction allows us to 
represent a closure as a pair consisting of the store,
which expresses the values of the reference cells captured by the closure,
and the code of the closure.
For example, \(M_{\Ok_2}\) can be translated to the following \BPCF{} program (extended with
tuples).
\begin{align*}
&  \letexp{x}{\TRUE}
  \letexp{f}{\Clos{\lambda (y,x).\letexp{x'}{\NOT(x)}{x'}}{x}}\\&
  \letexp{(\_,\envof{f})}{\codeof{f}(\Vunit,\envof{f})}\codeof{f}(\Vunit,\envof{f})
\end{align*}
Here, \(f\) is represented as a pair consisting of a store holding the current value of
\(x\), and code that takes an argument for the original function \(f\) and the value of \(x\);
for readability, we write \(\Clos{\cdot}{\cdot}\) instead of \((\cdot,\cdot)\), for a pair
representing a closure, and write \(\envof{f}\) and \(\codeof{f}\) respectively for
the first element (which is the store captured by the closure) and the second element (which is the code).
We treat \(\envof{f}\) and \(\codeof{f}\) as if they were variables,
and write \(\letexp{(\_,\envof{f})}{M_1}M_2\) for
\(\letexp{(\_,h)}{M_1}\letexp{f}{\Clos{\codeof{f}}{h}}M_2\).
The code part of function \(f\) now takes a pair consisting of the original argument and
the current store, and returns a pair consisting of a return value and an updated store.
In translating \(f();f()\), 
the first call \(f()\) has been replaced by \(\codeof{f}(\Vunit,\envof{f})\), and for the second call,
the store part of \(f\) is updated.
So, our translation is a kind of store-passing translation, but only a local store is passed upon
a function call, instead of the global store.

If we apply the same translation to the invalid term \(M_{\Ng_2}\), we would obtain:
\begin{align*}
&  \letexp{x}{\TRUE}
  \letexp{f}{\Clos{\lambda (y,x').\letexp{x''}{\NOT(x')}{x''}}{x}}\letexp{g}{f}\\&
  \letexp{(\_,\envof{f})}{\codeof{f}(\Vunit,\envof{f})}\codeof{g}(\Vunit,\envof{g}),
\end{align*}
which would return a wrong value \(\FALSE\). This is because the store is duplicated
by \(\letexp{g}{f}\cdots\), and the update of the value of \(x\)
by the call of \(f\) is not reflected to \(g\).

In contrast, the program \(M_{\Ok_3}\) is fine as the closure \(f\) does not contain a reference cell. It is translated to:
\begin{align*}
&  
  \letexp{f}{\Clos{\lambda (z,\_).\letexp{x}{\TRUE}\letexp{x'}{\NOT(x)}{x'}}{()}}\letexp{g}{f}\\&
  \letexp{(\_,\envof{f})}{\codeof{f}(\Vunit,\envof{f})}\codeof{g}(\Vunit,\envof{g}),
\end{align*}
which correctly returns \(\FALSE\). Here, the duplication of \(f\)'s store by \(\letexp{g}{f}\cdots\)
is not problematic as the duplicated local store for \(f\) is empty, containing no information.

  \item A closure is allowed to capture only a statically bounded number of reference cells.
    The functions \(f\) and \(g\) in \(M_{\Ok_4}\) above are fine, as they can be
    statically inferred to contain one and two references respectively; we will embed that information into
    function types, and assign types \(\Tfuns{\UNIT}{}{1}{\BOOL}\) and \(\Tfuns{\UNIT}{}{2}{\BOOL}\)
    respectively to \(f\) and \(g\).
    In contrast, the decrement function \(\dec\) in the encoding of a Minsky machine in Section~\ref{sec:BPCFR}
    (obtained indirectly through the encoding into \BPCFL{} in Section~\ref{sec:BPCFL}) 
    may contain an unbounded number of reference cells, hence violating the restriction.

    This restriction ensures that the local store of a closure can be represented as a tuple of
a fixed size, so that the resulting \BPCF{} program is well-typed.
For example, \(M_{\Ok_4}\) above can be translated to:
\begin{align*}
&\letexp{x}{\TRUE}
\letexp{y}{\TRUE}\\
&\letexp{f\COL \BOOL\times (\UNIT\times \BOOL\to \BOOL\times \BOOL)\\&\qquad
}{\Clos{\lambda (z,x).(\letexp{x'}{\NOT(x)}(x',x'))}{x}} \\&
\letexp{g\COL\BOOL^2\times (\UNIT\times \BOOL^2 \to \BOOL\times \BOOL^2)\\&\quad }{\Clos{\lambda (z, (x,y)).\letexp{(r,x')}{\codeof{f}((),x)}(r \Band y, (x',y))}{(\envof{f},y)}}\\&
  \letexp{(\_,\envof{g})}{\codeof{g}(\Vunit,\envof{g})}\codeof{g}(\Vunit,\envof{g}).
\end{align*}

For clarify, we have annotated \(f\) and \(g\) with their types.
In \(M_{\Ok_4}\), the function \(g\) contains a reference cell \(y\) and a closure \(f\), which in turn contains
a reference cell \(x\). Thus, \(g\) contains two reference cells in total.
We thus encode \(g\) into a \BPCF{} term
of type \(\BOOL^2\times (\UNIT\times \BOOL^2 \to \BOOL\times \BOOL^2)\),
where the first component represents the current values of the two reference cells,
and the second component represents a function that takes an argument and the current values of
the reference cells, and returns a pair consisting of the output of \(g\) and the updated values
of the reference cells.
When \(g\) calls \(f\), it passes to \(f\) a part of \(g\)'s store (i.e., \(x\)), which represents the store of \(f\), 
and updates it after the call.
If we do not bound the number of reference cells contained in a closure,
then we would need recursive types (like \(\BOOL\;\mathtt{list}\)) to represent the  store of a closure,
disabling the encoding into \BPCF{}.

\end{asparaenum}
We enforce the restrictions above through a type system.
To make the type system as simple as possible,
we consider neither lifetime and borrowing mechanisms of Rust~\cite{DBLP:journals/pacmpl/0002JKD18,DBLP:journals/toplas/MatsushitaTK21}, nor
fractional ownerships of Consort~\cite{DBLP:conf/esop/TomanSSI020}.

As explained above, our translation is a kind of store-passing transformation, which, instead of maintaining
a global store, has a function closure maintain its own part of the store.
To make that possible, we restrict copying of closures (by restriction (2)), and the number of reference cells
owned by each closure (by restriction (3)).

\begin{remark}
One may consider the restrictions above too strong,
but the Rust language imposes a restriction similar to (2) for
a closure with a ``unique immutable borrow''\footnote{\url{https://doc.rust-lang.org/reference/types/closure.html}.}. As for (3), even if the size of the
store captured by a closure in a source program (which may use integers and other data structures)
is unbounded, one may apply predicate abstraction to ensure
that the resulting program of \BPCFR{} satisfies the restriction (3). \qed
\end{remark}

\subsection{Syntax and Type System of \bPCFRL{\bty}}
\label{sec:typing}
We now formalize a type system that enforces the restrictions stated above.

We first define the syntax of the language \bPCFRL{\bty} parameterized by the base type \(\bty\in\set{\BOOL,\INT}\).
\bPCFRL{\BOOL} is also called \BPCFRL{}.
The syntax of \bPCFRL{\bty} is given by:
\begin{align*}
  M ::=\,& c \mid x \mid \Op{k}(x_1,\ldots,x_k)\mid\, \deref{x} \mid \mkref{x} \mid x:=y \mid \letexp{x}{M_1}{M_2} \\
   \mid\,& \ifexp{x}{M_1}{M_2} \mid \lambda x.M \mid \fixexp{f}{x}{M}\mid f\,x
\end{align*}
Here, the meta-variable \(c\) denotes either the unit value \(\Vunit\) or a constant of type \(\bty\). We sometimes write \(c^\bty\) for a constant of type \(\bty\).
The symbol \(\Op{k}\) denotes a \(k\)-ary operation on constants of type \(\bty\).
In \bPCFRL{\INT}, \(0\) is treated as \(\FALSE\) and non-zero integers are
treated as \(\TRUE\) in the conditional expression
\(\ifexp{x}{M_1}{M_2}\).
Note that in \bPCFRL{\bty}, each value must first be assigned to a variable before being used.
Ignoring the type system introduced below, \bPCFRL{\BOOL} (without nested references) has the same expressive power as \BPCF{};
any \BPCF{} expression can be normalized to a (possibly ill-typed) \bPCFRL{\BOOL} expression.

The sets of types and type environments are inductively defined by:
\begin{align*}
  &  \rty \mbox{ (base and reference types) } ::= \bty \mid \Tref{\rty}\\ &
  \nty \mbox{ (normal types) }::= \UNIT\mid \rty\mid \Tfuns{\nty_1}{}{n}{\nty_2}\\&
  \fty \mbox{ (full types) }::= \nty \mid  \Trfun{\nty_1}{}{\Delta}{\nty_2} \\  &
  \NTE \mbox{ (normal type environments) }::= x_1\COL\nty_1,\ldots,x_k\COL\nty_k\\&
  \FTE \mbox{ (full type environments) }::= x_1\COL\fty_1,\ldots,x_k\COL\fty_k.
\end{align*}
Here, \(n\) and \(k\) range over the set of non-negative integers.
As shown above, types are stratified into three levels.
The meta-variable \(\rty\) denotes the base type \(\bty\) or
(nested) reference types.
The type \(\Tfuns{\nty_1}{}{n}{\nty_2}\) represents a function closure containing \(n\) reference cells.
We have another kind of function type \(\Trfun{\nty_1}{}{\NTE}{\nty_2}\), where
\(\NTE\) describes reference cells captured by the closure; this type will be used in typing recursive functions
and explained later. The stratification of types ensures that
\(\Trfun{\nty_1}{}{\Delta}{\nty_2}\) occurs only at the top-level,
disallowing types like \((\Trfun{\nty_1}{}{\NTE}{\nty_2})\to \BOOL\)
and \(\Trfun{\nty_1}{}{f\COL\Trfun{\nty_1'}{}{\NTE}{\nty_2'}}{\nty_2}\).
We consider a type environment \(\Gamma\) as a sequence (rather than a set), and require that \(x_1,\ldots,x_k\) are mutually different.
We do not have dependent types, but the representation of a type environment as a sequence allows us to represent
an environment conforming to the type environment as an (ordered) tuple. We write \(\emptyTE\) for the empty type environment,
and write \(\Gamma(x)=\tau\) if \(x\COL\tau\) occurs in \(\Gamma\).
Full types (full type environments, resp.) are often called just types (type environments, resp.). 

We consider a type judgment of the form \(\Gamma\p M:\tau\q \Gamma'\), where
 \(\Gamma\) and \(\Gamma'\)
(which we call pre- and post-type environments respectively) describe the type of each variable before and
after the evaluation of \(M\) respectively. For example, we have
\[x\COL \Tref{\BOOL}, y\COL\Tref{\BOOL} \p \letexp{z}{x}{z}:\Tref{\BOOL} \q y\COL\Tref{\BOOL}.\]
Note that the post-type environment does not contain the binding for \(x\). This is because a copy of the reference \(x\)
has been created and returned as the final value. Thus we no longer allow \(x\) to be used; recall the first restriction
listed at the beginning of this section.
Actually, we also have \(x\COL \Tref{\BOOL}, y\COL\Tref{\BOOL} \p x:\Tref{\BOOL} \q y\COL\Tref{\BOOL}\),
as a copy of \(x\) is being created by the expression. In contrast, we have
\(x\COL {\BOOL}, y\COL\Tref{\BOOL} \p x:{\BOOL} \q x\COL\BOOL, y\COL\Tref{\BOOL}\);
the type binding for \(x\) remains in the post-type environment, as Boolean values can be shared.
In general, the presence of a binding \(x\COL\fty\) means that there is an ownership for using \(x\) as a value of type \(\fty\),
and its absence means there is no such ownership. Some type bindings in the pre-type environment \(\Gamma\)
may disappear in the post-type environment \(\Gamma'\) as in the above example, and \(\Gamma\) and \(\Gamma'\) are the same
except that. We can consider more flexible ownership types where ownerships in \(\Gamma\) may increase or decrease partially,
as in Rust and Consort~\cite{DBLP:conf/esop/TomanSSI020}, but we avoid them to keep the formalization simple; such extensions will be discussed in Section~\ref{sec:disc}.
We sometimes write \(\p M:\tau\q \) for \(\emptyTE\p M:\tau\q \emptyTE\).
As a well-formed condition of a type judgment \(\Gamma\p M:\tau\q \Gamma'\), we require
that,  whenever a type binding of the form
\(f\COL\Trfun{\nty_1}{}{\Delta}{\nty_2}\) occurs in \(\Gamma\),
\(\Delta\subseteq \Gamma\) holds, i.e.,  \(\Delta\) is a subsequence of \(\Gamma\).

The typing rules for \bPCFRL{\bty} are shown in Figure~\ref{fig:typing-BPCFRL}.
We explain main rules below, while also introducing notations used in the figure.
The rules \rn{T-Unit} and \rn{T-Const} for constants are straightforward; the type
environment does not change before and after the evaluation.
In the rule \rn{T-Var} for variables, \(\Gamma\exc{x}\) denotes the type environment
obtained by removing \(x\) if \(\Gamma(x)\) is the type of a non-sharable value. More precisely,
\(\Gamma\exc{x}\) is defined by:
\begin{align*}
  & \emptyTE\excr{x}{\fty} \,=\, \emptyTE\qquad
(\Gamma,y\COL\fty)\exc{x}\,=\, \left\{\begin{array}{ll}
  \Gamma & \mbox{if $x=y$ and $\neg\Sharable(\fty)$}\\
  \Gamma,x\COL\fty & \mbox{otherwise}
  \end{array}\right.
\end{align*}
Here, \(\Sharable(\fty)\) if \(\fty\) is the type of a sharable value, i.e., if 
$\fty$ is a base type, or of the forms \(\Tfuns{\tau_1}{}{0}{\tau_2}\).
Thus, we have 
\[x\COL \Tref{\BOOL}, y\COL\Tref{\BOOL} \p x:\Tref{\BOOL} \q y\COL\Tref{\BOOL},\]
but not 
\[x\COL \Tref{\BOOL}, y\COL\Tref{\BOOL} \p x:\Tref{\BOOL} \q x\COL\Tref{\BOOL},y\COL\Tref{\BOOL},\]
for the reason explained already.
We also use a variation \(\Gamma\excr{x}{\fty}\) of \(\Gamma\exc{x}\), which
denotes the type environment obtained from \(\Gamma\)
by removing the binding on \(x\) if \(\neg\Sharable(\fty)\), and
\(\Gamma\) if \(\Sharable(\fty)\).

\begin{figure}
\typicallabel{T-Var}
\begin{multicols}{2}
\infrule[T-Fail]{}{\Gamma \p \FAIL:\bt\q \Gamma}
\infrule[T-Unit]{}{\Gamma \p \Vunit:\UNIT\q \Gamma}
\infrule[T-Const]{}{\Gamma \p c^\bty:\bty\q \Gamma}
  \infrule[T-Var]{\Gamma(x)=\tau}
          {\Gamma\p x:\tau\q \Gamma\exc{x}}
\rulesp
  \infrule[T-Op]{\Gamma(x_i)=\bty\mbox{ for each $i\in\set{1,\ldots,k}$}}
  {\Gamma\p \Op{k}(x_1,\ldots,x_k):\bty\q\Gamma}
  \rulesp
\infrule[T-Deref]{\Gamma(x)=\Tref{\rty}}
        {\Gamma \p !x:{\rty} \q \Gamma\excr{x}{\rty}}
  \rulesp
\infrule[T-Mkref]{\Gamma(x)=\rty}
        {\Gamma \p \mkref{x}: \Tref{\rty}\q \Gamma\exc{x}}

  \rulesp
  \infrule[T-Assign]{\Gamma(y)=\Tref{\rty}\andalso \Gamma(x)=\rty}
        {\Gamma\p y:=x:\Tunit\q\Gamma\exc{x}}

        \rulesp
 \end{multicols}
\infrule[T-Let]{\Gamma \p M_1: \tau' \q \Gamma''\andalso
  \Gamma'', x\COL\tau'\p M_2:\tau\q \Gamma'}
        {\Gamma \p \letexp{x}{M_1}{M_2}: \tau\q \Gamma'\setminus{x}}
  \rulesp
  \infrule[T-If]
      {\Gamma(x)=\bty\andalso
      \Gamma\p M_1:\tau\q \Gamma'
      \andalso \Gamma\p M_2:\tau\q \Gamma'}
      {\Gamma\p \ifexp{x}{M_1}{M_2}:\tau\q \Gamma'}
      \rulesp
  \infrule[T-Fun]
      {
        \Delta, x\COL\tau_1\p M:\tau_2\q \Delta,x\COL\tau_1
    \andalso (\Delta,\Gamma') = \splitTE{\Gamma}{\FV(\lambda x.M)}
            }
      {\Gamma \p \lambda x.M:\Tfuns{\tau_1}{\tau_1'}{|\Delta|}\tau_2\q
              \Gamma'}
\rulesp
  \infrule[T-RFun]
      {\Delta, f\COL\Trfun{\tau_1}{\tau_1'}{\Delta}{\tau_2}, x\COL\tau_1\p
    M: \tau_2\q \Delta,
    f\COL\Trfun{\tau_1}{\tau_1'}{\Delta}{\tau_2}, x\COL\tau_1\andalso
    (\Delta,\Gamma')=\splitTE{\Gamma}{\FV(\fixexp{f}{x}{M})}}
      {\Gamma\p
        \fixexp{f}{x}{M}:\Tfuns{\tau_1}{\tau_1'}{|\Delta|}{\tau_2}\q \Gamma'}
\rulesp
  \infrule[T-App]
     {\Gamma(f)=\Tfuns{\tau_1}{\tau_1'}{m}{\tau_2}\andalso \Gamma(x)=\tau_1}
     {\Gamma
       \p f\,x:\tau_2\q \Gamma}

  \rulesp
  \infrule[T-RApp]
      {\Gamma(f)=\Trfun{\tau_1}{\tau_1'}{\Delta}{\tau_2}\andalso \Gamma(x)=\tau_1
      \andalso \Delta \subsetTE \Gamma\setminus\set{f,x}}
     {\Gamma
       \p f\,x:\tau_2\q \Gamma}

\caption{Typing Rules for \bPCFRL{\bty}}
\label{fig:typing-BPCFRL}
\end{figure}

In the rule \rn{T-Deref} for dereference, the binding on \(x\) is removed from
the post-type environment if \(\neg\Sharable(\rty)\), to avoid the duplication
of the ownership for the contents of \(x\). For example,
if \(\rty=\Tref{\bty}\), then we have
\(x\COL\Tref{\rty}\p !x:\rty\q \epsilon\), but not
\(x\COL\Tref{\rty}\p !x:\rty\q x\COL\Tref{\rty}\).
The latter is problematic because the value of \(!!x\) would be accessible 
through both the values of \(!x\) and \(x\).
Similarly, in \rn{T-Mkref}, and \rn{T-Assign}, the type binding on \(x\)
may be removed depending on the type \(\rty\). These rules are rather restrictive
for nested references; the restriction can be relaxed by
the extension with the borrowing mechanism discussed in Section~\ref{sec:disc}.

The rules \rn{T-Let} and \rn{T-If} are almost standard, except that, in rule \rn{T-Let},
the post-type environment of \(M_1\) is used for typing \(M_2\).
In \rn{T-Let}, \(\Gamma\setminus{x}\) denotes the type environment obtained by removing
the binding on \(x\) if there is any. 
Before explaining the typing rules for functions, we give some examples of typing.
\begin{example}
  Let us reconsider \(M_{\Ok_1}\) in Section~\ref{sec:overview}.
  It is typed as follows.
  \[\footnotesize
  \infer{\p M_{\Ok_1}:\BOOL\q }
        {\p \mkref{\TRUE}:\Tref{\BOOL}\q &
          \infer{x\COL\Tref{\BOOL}\p \letexp{y}{x} (y:=\NOT(\deref{y});!y):\BOOL\q }
                {x\COL\Tref{\BOOL}\p x:\Tref{\BOOL}\q &
                  \infer{y\COL\Tref{\BOOL}\p y:=\NOT(\deref{y});!y:\BOOL\q y\COL\Tref{\BOOL}}{\cdots}
                } 
          }
        \]
        Here, \(\NOT(\deref{y})\) abbreviates 
        \(\letexp{u}{!y}{\NOT(u)}\).
        Notice that in the body of \(\letexp{y}{x}\cdots\), only the type binding on \(y\) (i.e., \(y\COL\Tref{\BOOL}\))
        is available.
        Therefore, \(M_{\Ng_1}\) is not typable, because \(y:= \NOT(\deref{x});!x\) is not typable under
        \(y\COL\Tref{\BOOL}\). \qed
\end{example}

We now explain the remaining typing rules for functions.
In the rule \rn{T-Fun} for creating a function, 
\(|\FTE|\) is defined by:
\begin{align*}
  &  |\FTE| = \sum_{x\COL\fty} |\fty|
  \qquad |\UNIT|=|\bty|=|\Trfun{\tau_1}{\tau_1'}{\NTE}{\tau_2}|=0 \qquad |\Tref{\rty}|=\Tsize{\rty}\qquad
  |\Tfuns{\nty_1}{}{n}{\nty_2}|=n\\&
  \Tsize{\UNIT}=0\qquad \Tsize{\bty}=1\qquad \Tsize{\Tref{\rty}}=\Tsize{\rty}.
\end{align*}
It represents the size of the store for representing an environment that respects \(\NTE\).
As defined above, it is the sum of the size of values stored directly (in reference cells)
or indirectly (though closures). For example,
\(|x\COL\Tref{\Tref{\bty}}, y\COL\Tfuns{\nty_1}{}{2}{\nty_2}, z\COL\bty| = 1+2+0=3\).
\(\FV(M)\) denotes the set of free variables in \(M\), and \(\splitTE{\Gamma}{S}\), defined below,
splits the type environment \(\Gamma\) into one on the variables in \(S\), and the remaining type environment.
\begin{align*}
 \splitTE{\emptyTE}{S} &= (\emptyTE,\emptyTE)\\
   \splitTE{\Gamma,x\COL\fty}{S} &= \left\{
  \begin{array}{ll}
     ((\Gamma_1,x\COL\fty), \Gamma_2) & \mbox{if $x\in S$ and \(\neg\Sharable(\fty)\)}\\
     ((\Gamma_1,x\COL\fty), (\Gamma_2,x\COL\fty)) & \mbox{if $x\in S$ and \(\Sharable(\fty)\)}\\
     (\Gamma_1, (\Gamma_2,x\COL\fty)) & \mbox{if $x\not\in S$}
  \end{array}
  \right.\\&\qquad
  \mbox{(where \((\Gamma_1,\Gamma_2)=\splitTE{\Gamma}{S}\))}
\end{align*}
For example, \(\splitTE{x\COL\BOOL,y\COL\Tref{\BOOL},z\COL\BOOL}{\set{x,y}} = ((x\COL\BOOL,y\COL\Tref{\BOOL}), (x\COL\BOOL,z\COL\Tref{\BOOL}))\).
Using \(\texttt{splitTE}\), the rule \rn{T-Fun} splits the current type environment \(\Gamma\) into
the type environment \(\NTE\) required for typing the function, and the rest of the type environment \(\FTE'\).
We then type the body of \(M\) under \(\NTE,x\COL\tau_1\). The function has type \(\Tfuns{\tau_1}{\tau_1'}{|\NTE|}\tau_2\),
where \(|\NTE|\) gives the size of the store captured by the closure.\footnote{We could instead assign
\(\Tfuns{\tau_1}{\tau_1'}{n}\tau_2\) for any \(n\ge |\NTE|\); that point will be discussed later
when discussing the translation rule.}

\begin{example}
  Let us recall \(M_{\Ok_2}\). The body of the definition of \(f\) is typed as follows.
  \[
  \infer{  x\COL\Tref{\BOOL}\p\lambda z.(x:=\NOT(\deref{x});!x):
    \Tfuns{\UNIT}{}{1}{\BOOL}\q \emptyTE}
   {\infer{x\COL\Tref{\BOOL},z\COL\UNIT\p x:=\NOT(\deref{x});!x:\BOOL\q
        x\COL\Tref{\BOOL},z\COL\UNIT}{\cdots}}
   \]
   The post-type environment in the conclusion ensures that \(x\) is no longer directly accessed,
   as it is owned by the function \(f\). As \(f\) has type 
   \(\Tfuns{\UNIT}{}{1}{\BOOL}\), it cannot be freely copied.
   In contrast, \(f\) in \(M_{\Ok_3}\) is typed as follows.
\[
\infer{  
\emptyTE  \p \lambda z.(\letexp{x}{\mkref{\TRUE}}x:=\NOT(\deref{x});!x):
  \Tfuns{\UNIT}{}{0}{\BOOL}\q \emptyTE}
{\infer{  z\COL\UNIT
  \p \letexp{x}{\mkref{\TRUE}}x:=\NOT(\deref{x});!x:\BOOL\q
  z\COL\UNIT}
  {z\COL\UNIT,x\COL\Tref{\BOOL}
  \p x:=\NOT(\deref{x});!x:\BOOL\q
  z\COL\UNIT,x\COL\Tref{\BOOL}}
  }
\]
As \(f\) has type \(\Tfuns{\UNIT}{}{0}{\BOOL}\), \(f\) can be
freely copied. \qed
\label{ex:typing-function}  
\end{example}

The rule \rn{T-RFun} for typing recursive functions is similar to \rn{T-Fun},
except that for typing the function body \(M\), \(f\) is given a special function type \(\Trfun{\tau_1}{\tau_1'}{\NTE}{\tau_2}\).
(Thus, \rn{T-Fun} may be viewed as a special case, where \(f\) is not used in \(M\).)
The type \(\Trfun{\tau_1}{\tau_1'}{\NTE}{\tau_2}\) describes a function from \(\tau_1\) to \(\tau_2\), where
an environment described by \(\NTE\) may be accessed during the evaluation of the function body.
Unlike a function of type \(\Tfuns{\tau_1}{\tau_1'}{|\NTE|}{\tau_2}\), it does not own the store corresponding to \(\NTE\).
This is for allowing a recursive function like
\[
\fixexp{f}{x}{\ifexp{*}{!y}{(y:=\NOT(!y); f\,x)}},
\]
which accesses the reference cell \(y\) both inside the recursive call of \(f\) and in
\(y:=\NOT(!y)\).
(Here, \(*\) denotes a non-deterministic Boolean.)

We have two rules for function applications: \rn{T-App} for normal function calls
and \rn{T-RApp} for recursive function calls from the inside of a recursive function.
In \rn{T-App}, we just require that the argument type of \(f\) and the type of
the actual argument \(x\) match. In \rn{T-RApp}, we additionally require
\(\Delta\subsetTE \Gamma\setminus\set{f,x}\), which means that \(\Delta\) is a
subsequence of \(\Gamma\setminus\set{f,x}\). That condition ensures that the environment
described by \(\Delta\) is available when the recursive function \(f\) is called.
Note here that \(\Delta\) should not share the binding on \(x\); see Example~\ref{ex:illtyped}
\begin{example}
  The recursive function
  \(\fixexp{f}{x}{\ifexp{*}{!y}{(y:=\NOT(!y); f\,x)}}\) mentioned above
  is typed as follows, where
  \(\Gamma = y\COL\Tref{\BOOL},f\COL \Trfun{\UNIT}{}{y\COL\Tref{\BOOL}}{\BOOL},
       x\COL\UNIT\).
  \[\footnotesize
  \infer{y\COL\Tref{\BOOL}\p
    \fixexp{f}{x}{\ifexp{*}{!y}{(y:=\NOT(!y); f\,x)}}:
    \Tfuns{\UNIT}{}{1}{\BOOL}
    \q y\COL\Tref{\BOOL}}
   {\infer
     {\Gamma
       \p \ifexp{*}{!y}{(y:=\NOT(!y); f\,x)}:\BOOL
    \q \Gamma}
     {\Gamma
       \p {!y}:\BOOL
       \q \Gamma
       &
    \infer{\Gamma
       \p y:=\NOT(!y); f\,x:\BOOL
       \q \Gamma}
          {\Gamma
       \p y:=\NOT(!y):\BOOL
       \q \Gamma & \Gamma
       \p f\,x:\BOOL
       \q \Gamma}
     }
   }
  \] \qed
\end{example}

\begin{example}
  \label{ex:illtyped}
  The type judgment \(x\COL\Tref{\BOOL}\p \fixexp{f}{y}{\ifexp{*}{x:=!y}{f(x)}}:\Tfuns{\Tref{\BOOL}}{}{1}{\UNIT}\q \emptyTE\) does NOT hold. For this to be typable, the body of the function must be typed under:
    \begin{align*}
    \Gamma \Def  x\COL\Tref{\BOOL},f\COL\Trfun{\Tref{\BOOL}}{}{\Delta}{\UNIT},y\COL\Tref{\BOOL},
    \end{align*}
    where \(\Delta = x\COL\Tref{\BOOL}\).
    Therefore, the function call \(f(x)\) must also be typed under the same type environment \(\Gamma\).
    However, \(\Delta\) is not a subsequence of \(\Gamma\setminus\set{f,x} = y\COL\Tref{\BOOL}\).
    In fact, the function above is problematic, because, upon the recursive call \(f(x)\), the ownership for \(x\) is duplicated (as \(y\) becomes  an alias of \(x\)).
\qed
\end{example}

\subsection{Translation from  \bPCFRL{\bty} to \bPCF{\bty}}
\label{sec:trans}
We now formalize the translation from \bPCFRL{\bty} to \bPCF{\bty}.
Thanks to the type-based usage control of closures containing reference cells,
we can represent a closure as a pair consisting of
a part of the store captured by the closure and the code,
as sketched in Section~\ref{sec:overview}.

We define a type-based transformation relation
\(\Gamma\p M:\tau\q \Gamma'\Tr \mvE\),
which means that a \bPCFRL{\bty} term \(M\) typed by
\(\Gamma\p M:\tau\q \Gamma'\) can be transformed to a \bPCF{\bty}
term \(\mvE\).

The transformation relation is defined by the rules in
Figure~\ref{fig:tr-BPCFRL}.
Except the part ``\(\Tr \mvE\)'', each rule \rn{Tr-xx} is the same as
the corresponding typing rule \rn{T-xx}. Thus, for every typable
term \(M\) of \bPCFRL{\bty}, the output \(\mvE\) is defined by induction on the type derivation for \(M\).

\begin{figure}
  \begin{multicols}{2}
\infrule[Tr-Fail]{}{\Gamma \p \FAIL:\bt\q \Gamma\Tr (\FAIL,\pack{\Gamma})}
\infrule[Tr-Unit]{}{\Gamma \p \Vunit:\UNIT\q \Gamma\Tr (\Vunit,\pack{\Gamma})}
\infrule[Tr-Const]{}{\Gamma \p c^\bty:\bty\q \Gamma\Tr (c, \pack{\Gamma})}
  \rulesp
  \infrule[Tr-Var]{\Gamma(x)=\tau}
          {\Gamma\p x:\tau\q \Gamma\exc{x}
    \Tr (x, \pack{\Gamma\exc{x}})}
  \rulesp
\infrule[Tr-Deref]{\Gamma(x)=\Tref{\rty}}
        {\Gamma \p !x:{\rty} \q \Gamma\excr{x}{\rty}
        \Tr (x,\pack{\Gamma\excr{x}{\rty}})}
  \rulesp
\infrule[Tr-Mkref]{\Gamma(x)= \rty}
        {\Gamma \p \mkref{x}: \Tref{\rty}\q \Gamma\exc{x}
          \Tr (x, \pack{\Gamma\exc{x}})}

  \rulesp
  \end{multicols}
    \infrule[Tr-Op]{\Gamma(x_i)=\bty\mbox{ for each $i\in\set{1,\ldots,k}$}}
  {\Gamma\p \Op{k}(x_1,\ldots,x_k):\bty\q\Gamma\Tr (\Op{k}(x_1,\ldots,x_k),\pack{\Gamma})}
  \rulesp
  \infrule[Tr-Assign]{
  \Gamma(y)=\Tref{\rty}\andalso \Gamma(x)=\rty}
        {\Gamma\p y:=x:\Tunit\q\Gamma\exc{x}\Tr
          \letexp{y}{x}(\Vunit, \pack{\Gamma\exc{x}})}
  \rulesp
\infrule[Tr-Let]{\Gamma \p M_1: \tau' \q \Gamma''\Tr \mvE_1\andalso
  \Gamma'', x\COL\tau'\p M_2:\tau\q \Gamma' \Tr \mvE_2}
        {\Gamma \p \letexp{x}{M_1}{M_2}: \tau\q \Gamma'\setminus{x} \Tr
          \letexp{(x,h_1)}{\mvE_1}\unpack{\Gamma''}{h_1}\mvE_2}
  \rulesp
  \infrule[Tr-If]
      {\Gamma(x)=\bty\andalso 
      \Gamma\p M_1:\tau\q \Gamma'\Tr \mvE_1
      \andalso \Gamma\p M_2:\tau\q \Gamma'\Tr \mvE_2}
      {\Gamma\p \ifexp{x}{M_1}{M_2}:\tau\q \Gamma'\Tr
       \ifexp{x}{\mvE_1}{\mvE_2}}
      \rulesp
  \infrule[Tr-Fun]
      {
        \Delta, x\COL\tau_1\p M:\tau_2\q \Delta,x\COL\tau_1\Tr \mvE
    \andalso (\Delta,\Gamma') = \splitTE{\Gamma}{\FV(\lambda x.M)}
            }
      {\Gamma \p \lambda x.M:\Tfuns{\tau_1}{\tau_1'}{|\Delta|}\tau_2\q
              \Gamma'\\\Tr
              (\Clos{\lambda (x, h).\unpack{\Delta}{h}
                \letexp{(r,h')}{\mvE} \\
                \unpack{\Delta,x\COL\tau_1}{h'}(r,x,\pack{\Delta})}{\pack{\Delta}}, \pack{\Gamma'})
      }

\rulesp
  \infrule[Tr-RFun]
      {\Delta, f\COL\Trfun{\tau_1}{\tau_1'}{\Delta}{\tau_2}, x\COL\tau_1\p
    M: \tau_2\q \Delta,
    f\COL\Trfun{\tau_1}{\tau_1'}{\Delta}{\tau_2}, x\COL\tau_1\Tr N\\
    (\Delta,\Gamma')=\splitTE{\Gamma}{\FV(\fixexp{f}{x}{M})}}
      {\Gamma\p
        \fixexp{f}{x}{M}:\Tfuns{\tau_1}{\tau_1'}{|\Delta|}{\tau_2}\q \Gamma'\\
        \Tr
        (\Clos{\fixexp{f}{(x,h_f)}
          {\unpack{\Delta}{h_f}
         \letexp{(r,h_2)}{\mvE}\\
          \unpack{\Delta,
            f\COL\Trfun{\tau_1}{\tau_1'}{\Delta}{\tau_2}, x\COL\tau_1}{h_2}
        (r,x,\pack{\Delta})}}{\pack{\Delta}},
        \pack{\Gamma'})}
\rulesp
  \infrule[Tr-App]
     {\Gamma(f)=\Tfuns{\tau_1}{\tau_1'}{m}{\tau_2}\andalso \Gamma(x)=\tau_1}
     {\Gamma
       \p f\,x:\tau_2\q \Gamma\\\qquad
       \Tr 
       \letexp{(r,x,h_f)}{\codeof{f}(x, \envof{f})} \letexp{f}{\Clos{\codeof{f}}{h_f}}
       (r, \pack{\Gamma})}

  \rulesp
  \infrule[Tr-RApp]
      {\Gamma(f)=\Trfun{\tau_1}{\tau_1'}{\Delta}{\tau_2}\andalso \Gamma(x)=\tau_1
      \andalso \Delta \subsetTE \Gamma\setminus\set{f,x}}
     {\Gamma
       \p f\,x:\tau_2\q \Gamma\\\qquad
       \Tr
       \letexp{h_f}{\pack{\Delta}}
       \letexp{(r,x,h'_f)}{f(x, h_f)} 
       \unpack{\Delta}{h'_f}
       (r, \pack{\Gamma})}

\caption{Transformation Rules}
\label{fig:tr-BPCFRL}
\end{figure}

Before explaining each rule, let us state how the transformation
preserves types. We define the translation of \bPCFRL{\bty} types
and type environments into 
those of \bPCF{\bty}, as follows.
\begin{align*}
  &  \toBPCF{\UNIT} = \UNIT \quad \toBPCF{\bty}=\bty\quad
  \toBPCF{(\Tref{\rty})}=\toBPCF{\rty}\quad \toBPCF{(\Tfuns{\tau_1}{}{n}{\tau_2})}
  = \bty^n \times (\toBPCF{\tau_1}\times \bty^n\to \toBPCF{\tau_2}\times \toBPCF{\tau_1}\times \bty^n)\\&
  \toBPCF{(\Trfun{\tau_1}{}{\NTE}{\tau_2})} =
  (\toBPCF{\tau_1}\times \bty^{|\NTE|}\to \toBPCF{\tau_2}\times \toBPCF{\tau_1}\times\bty^{|\NTE|})\quad \toBPCF{(x_1\COL\fty_1,\ldots,x_k\COL\fty_k)} =
  x_1\COL\toBPCF{\fty_1},\ldots,x_k\COL\toBPCF{\fty_k}.
\end{align*}
Note that, after the translation, the return type of a function now contains
the type of an argument. For example, \(\Tfuns{\Tref{\bty}}{}{2}{\UNIT}\)
becomes
\(\bty^2\times (\bty\times \bty^2\to \UNIT\times \textcolor{red}{\bty}\times\bty^2)\); this is because the function may take a reference cell or a closure
as an argument and change the value of them. For example,
\(\lambda x\COL\Tref{\BOOL}.x:=\TRUE\) of type \(\Tfuns{\Tref{\BOOL}}{}{0}\UNIT\)
should be transformed to \(\lambda x. (\Vunit,\TRUE)\) (here the store argument
is omitted for simplicity), to pass the information about the updated value of \(x\).

Our transformation rules ensure that
if \(\Gamma\p M:\tau\q \Gamma'\Tr \mvE\), then
\(\toBPCF{\Gamma}\p \mvE:\toBPCF{\tau}\times \bty^{|\Gamma'|}\).
The second element of \(\mvE\) represents the store
available after the evaluation of \(M\).
That element is generated by the idiom \(\pack{\Gamma}\), defined by:
\begin{align*}
  &  \pack{\Gamma} = \mathit{flatten}(\packaux{\Gamma}) \qquad \\  &
  \packaux{\epsilon} = ()\qquad
  \packaux{\Gamma,x\COL\bty}
  =\packaux{\Gamma,f\COL\Tfuns{\tau_1}{}{0}{\tau_2}}
  =\packaux{\Gamma,f\COL\Trfun{\tau_1}{}{\Delta}{\tau_2}} = \packaux{\Gamma}\\
   &\packaux{\Gamma,x\COL\Tref{\rty}} = (\packaux{\Gamma}, x) \qquad \packaux{\Gamma,x\COL\Tfuns{\tau_1}{}{n}{\tau_2}}
    = (\packaux{\Gamma}, \envof{x}) \mbox{ (if $n>0$)}
\end{align*}
Here, \(\packaux{\Gamma}\) constructs a nested tuple expression that represents
the value of the store, and \(\mathit{flatten}\) flattens the nested tuple.
For example, if \(\Gamma = x\COL\Tref{\bty}, y\COL\bty, f\COL
\Tfuns{\tau_1}{}{2}{\tau_2}\), then
\(\pack{\Gamma}\) is:
\( \letexp{(u_1,u_2)}{\envof{f}}{(x,u_1,u_2)}\).
We also use the idiom \(\unpack{\Gamma}{h}\mvE\) to perform the inverse of
\(\pack{\Gamma}\),
which recovers variable bindings from the tuple \(h\) representing 
the current store. For example, for \(\Gamma\) above,
\(\unpack{\Gamma}{h}\mvE\) is:
\[ \letexp{(x,u_1,u_2)}{h}\letexp{\envof{f}}{(u_1,u_2)}\mvE.\]
(Here, \(\letexp{\envof{f}}{\mvE_1}{\cdots}\) abbreviates
\(\letexp{f}{\Clos{\codeof{f}}{\mvE_1}}{\cdots}\).)

We now explain main transformation rules.
The rule \rn{Tr-Var} for variables produces, in addition to
\(x\), the store value \(\pack{\Gamma\exc{x}}\). For example, we have:
\begin{align*}
&x\COL\Tref{\bty}, y\COL\bty, f\COL
\Tfuns{\tau_1}{}{2}{\tau_2},z\COL\Tref{\bty}
\p x:\Tref{\bty}\\&
\q y\COL\bty, f\COL
\Tfuns{\tau_1}{}{2}{\tau_2},z\COL\Tref{\bty}
\Tr (x, \letexp{(u_1,u_2)}{\envof{f}}{(u_1,u_2,z)}).
\end{align*}
As the target language \bPCF{\bty} does not have mutable references,
we have to explicitly maintain the values of the store in the source term
as a tuple.
As shown in the rules \rn{Tr-Deref} and \rn{Tr-Mkref},
\(\deref{x}\) and \(\mkref{x}\) just become \(x\) (paired with the store value);
recall that the reference cell is replaced by the value stored in the cell.
If \(x\) is  a nested reference of type
\(\Tref{\Tref{\bty}}\), \(\deref{\deref{x}}\) just becomes \(x\).
Similarly, in \rn{Tr-Assign}, the reference update is just
replaced by a let-binding.

The rule \rn{T-Let} just compositionally transforms \(M_1\) and \(M_2\)
to \(\mvE_1\) and \(\mvE_2\),
but it inserts \(\unpackwo{\Gamma''}{h_1}\), so that \(\mvE_2\) is
evaluated with the current values of reference cells.

\begin{example}
  \label{ex:trns-simpleexp}
  The expression \(\letexp{x}{(y:=\NOT(y))}{!y}\)
  is transformed as follows, where \(\Gamma=y\COL\Tref{\BOOL}\).
  \[
  \footnotesize
  \infer{\Gamma\p \letexp{x}{(y:=\NOT(y))}{!y}\q \Gamma\Tr
    \letexp{(x,h_1)}{(\letexp{y}{\NOT(y)}(\Vunit,y))}
    {\letexp{y}{h_1}(y,y)}
  }
        {\Gamma\p y:=\NOT(y)\q \Gamma\Tr
          \letexp{y}{\NOT(y)}(\Vunit,y)
          &
          \Gamma,x\COL\UNIT\p !y\q \Gamma,x\COL\UNIT\Tr
          (y, y)}
        \]
        \qed
\end{example}

The rule \rn{Tr-Fun} for functions splits the current store to
the one owned by the closure (which is represented by \(\pack{\Delta}\))
and others (represented by \(\pack{\Gamma'}\)).
The function is translated to a pair consisting of its local store
and the code, where the code part takes the argument \(x\) and the current
value of the local store \(h\), recovers variable bindings from the store
(by \(\unpack{\Delta}(h)\cdots\)), and evaluates the body \(\mvE\).
The rule \rn{Tr-RFun} is similar.
\begin{example}
  \label{ex:tr-fun}
  The function \(\lambda x. x:=\NOT(!y);y!=\NOT(!x)\) is translated
  as follows, where \(\Gamma=y\COL\Tref{\BOOL},z\COL\Tref{\BOOL}\)
  and \(\Gamma' = y\COL\Tref{\BOOL},x\COL\Tref{\BOOL}\).
  \[\footnotesize
  \infer{\Gamma\p \lambda x. x:=\NOT(!y);y!=\NOT(!x):
    \Tfuns{\Tref{\BOOL}}{}{1}{\UNIT}\q
    z\COL\Tref{\BOOL}
    \Tr \mvE}
        {\Gamma'\p x:=\NOT(!y);y:=\NOT(!x):
    {\UNIT}\q
    \Gamma'
    \Tr \letexp{x}{\NOT(y)}\letexp{y}{\NOT(x)}
    {(\Vunit, (y,x))}}
  \]
  Here, (a simplified form of) \(N\) is \((\Clos{N_1}{y},z)\), where
  \(N_1\) is:
  \begin{align*}
&\lambda (x,y).
    \letexp{(r,h_1)}{(\letexp{x}{\NOT(y)}\letexp{y}{\NOT(x)}
       {(\Vunit, (y,x))})}&&\\&\qquad\quad
     \letexp{(y,x)}{h_1}(r,x,y). && \qed
  \end{align*}
\end{example}

In the rule \rn{Tr-App} for function applications,
the output of the transformation passes
the argument and \(f\)'s store to the code of \(f\),
and after the return, updates the local store of \(f\).
In the case of \rn{Tr-RApp}, \(f\) keeps only the code part;
thus before calling \(f\), it reconstructs the store part of \(f\)
by \(\pack{\Delta}\). 

\begin{example}
  \label{ex:tr-app}
  Let us consider:
  \begin{align*}
    \fixexp{f}{x}{\ifexp{*}{z:=x}{(x:=\NOT(y);f\,x;y:=x)}}.
  \end{align*}
  Under the type environment \(y\COL\Tref{\BOOL},z\COL\Tref{\BOOL}\),
  It is translated to \((\Clos{N}{(y,z)}, \Vunit)\), where \(N\) is:
{\small  
  \begin{align*}
 &   \fixexp{f}{(x,h_f)}{
      \letexp{(y,z)}{h_f}
      \ifexp{*}{\letexp{z}{x}(\Vunit, x, (y,z))}\\&\quad
            {\letexp{x}{\NOT(y)}
              \letexp{(r,x,h')}{f(x, (y,z))}
              \letexp{(y,z)}{h'}\letexp{y}{x}(r,x,(y,z))}
    }.
  \end{align*}
  }
  The store \((y,z)\) for \(f\) is packed up before the call of \(f\),
  and then \(y\) and \(z\) are updated after the call (by \(\letexp{(y,z)}{h'}\)).
  Besides, \(f\) also returns the updated value of \(x\), which is used
  for emulating \(y:=x\) by \(\letexp{y}{x}\cdots\).
  \qed
\end{example}

We now state the correctness of the transformation.
\begin{theorem}
  \label{th:main}
  If \(\emptyTE \p M:\bt\q \emptyTE \Tr \mvE \),
  then \(M\) reaches \(\FAIL\) if and only if \(\mvE\) reaches \(\FAIL\).
\end{theorem}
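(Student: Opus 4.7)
The plan is to establish a tight simulation between the operational behavior of $M$ and that of $\mvE$, proceeding by induction on the translation derivation. Since $\mvE$ is pure while $M$ manipulates references, I first enrich the source semantics with an explicit heap $H$ mapping locations to base-type values or to closures $\Clos{V}{H_{\mathit{loc}}}$ carrying their own local heap fragment. The key observation, derivable from the ownership-style type system, is that at every point during the evaluation of a well-typed source term the live heap splits into pairwise disjoint fragments, each owned either by a variable in the current type environment or transitively by a closure; in particular there are no aliases across the splits performed by $\splitTE{\Gamma}{\cdot}$ in rule \textsc{T-Fun}.

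Using this, I would define a simulation relation $\Gamma \p (M,H) \sim \mvE$, parameterized by a typing judgment $\Gamma \p M:\tau \q \Gamma'$, that holds whenever the free variables of $\mvE$ are bound (in the ambient evaluation context) to the flattened contents of the heap fragment assigned to $\Gamma$, with closures recursively represented as pairs $\Clos{\codeof{f}}{\envof{f}}$ of translated code and packed local heap. A crucial auxiliary lemma is that $\unpack{\Gamma}{\pack{\Gamma}} N \reds N$ whenever the bindings in scope match $\Gamma$, which allows the pack/unpack administrative steps in $\mvE$ to cancel out cleanly. A routine induction on source reductions then yields the main reduction lemma: whenever $(M,H)\red (M',H')$ there exists $\mvE'$ with $\mvE \reds \mvE'$ and $\Gamma'\p (M',H') \sim \mvE'$; conversely, any non-administrative reduction of $\mvE$ is mirrored by a unique source step, so $\FAIL$-reachability transfers in both directions.

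The hardest case, and the one where the type discipline does the real work, is function application $f\,x$ (and its recursive variant \textsc{T-RApp}). When the translated closure $\Clos{\codeof{f}}{\envof{f}}$ is called, execution inside $\codeof{f}$ manipulates only the unpacked local store and returns an updated copy, which the caller reinstalls into $\envof{f}$. For this \emph{local} store passing to be semantically equivalent to mutating the captured cells in place, no alias of any cell inside $\envof{f}$ may be reachable from the caller's context during the call; this is exactly what restrictions (1)--(3) of Section~\ref{sec:overview} enforce. Concretely, I would prove a disjointness lemma stating that in any well-typed configuration the heap fragment owned by a closure value is disjoint from the fragments owned by all other variables in $\Gamma$, and that its size agrees with the superscript on the closure's function type (ensuring type preservation). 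Once this lemma is in hand, the simulation of \textsc{T-App} and \textsc{T-RApp} is a matter of bookkeeping, and instantiating the reduction lemma at $\Gamma=\Gamma'=\emptyTE$ (where the heap is empty and $\pack{\emptyTE}$ is the unit tuple) yields the theorem.
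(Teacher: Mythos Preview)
Your proposal captures the essential ingredients—a logical relation between source configurations and target terms, heap-fragment disjointness guaranteed by the ownership discipline, and special care for the application cases—and would plausibly go through. The paper takes a closely related but technically different route: it switches to \emph{big-step} semantics on both sides, defines an explicit family of correspondence relations $\con_\tau$, $\con_\Gamma$, $\con_{\Gamma,\tau}$ by mutual induction on types and type environments, and then proves two directional lemmas by induction on the big-step evaluation derivation (forward: if the source evaluates, the target evaluates to a related result; backward: conversely). Your small-step reduction lemma would instead have to separate ``real'' target steps from the administrative pack/unpack bookkeeping and argue that administrative sequences terminate and compose correctly; the big-step formulation sidesteps this entirely, at the cost of a slightly heavier setup (runtime typing of $(R,H)$, type preservation, several heap-weakening/strengthening lemmas). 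Conversely, your approach would yield a finer operational correspondence, which could be useful for quantitative extensions. One imprecision worth fixing: in the source semantics the heap maps locations only to ground data (constants, or locations for nested references), never to closures; closures live in the environment $R$, and it is the environment captured by a closure that transitively owns a heap fragment—the paper's relation $\p (R,H):\Gamma$ and the rules \textsc{TyR-Fun}/\textsc{TyR-RFun} make this decomposition explicit.
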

Though the above theorem may be intuitively clear from
the explanation and examples above, the actual proof is involved, which is
found \iffull in Appendix~\ref{sec:proof}. \else
a longer version~\cite{K25POPLfull}.
\fi
As a corollary of Theorem~\ref{th:main} and the decidability of
the reachability problem for BPCF, we have:
\begin{corollary}
  Given a term \(M\) of \BPCFRL{} such that \(\emptyTE\p M:\bt\q \emptyTE\),
  it is decidable whether \(M\) reaches \(\FAIL\).
\end{corollary}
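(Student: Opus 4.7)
The plan is to derive the corollary as an immediate consequence of Theorem~\ref{th:main} combined with the decidability of the reachability problem for \BPCF{}. Given an input term \(M\) of \BPCFRL{} together with the derivation showing \(\emptyTE\p M:\bt\q\emptyTE\), I would first invoke the transformation rules in Figure~\ref{fig:tr-BPCFRL} to compute a \BPCF{} term \(\mvE\) such that \(\emptyTE\p M:\bt\q\emptyTE\Tr \mvE\). Since each transformation rule \rn{Tr-xx} mirrors the corresponding typing rule \rn{T-xx} and is driven syntactically by the structure of the derivation, the construction of \(\mvE\) is a straightforward recursive procedure; moreover, the type-preservation property stated just before Example~\ref{ex:trns-simpleexp} ensures \(\toBPCF{\emptyTE}\p \mvE:\toBPCF{\bt}\times\bty^{0}\), so \(\mvE\) is a well-typed closed \BPCF{} term of base type (up to the trivial product with the empty store).

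Having produced \(\mvE\), I would apply Theorem~\ref{th:main}, which gives the logical equivalence ``\(M\reds\FAIL\) iff \(\mvE\reds\FAIL\).'' This reduces the reachability question for \(M\) to the reachability question for \(\mvE\) in \BPCF{}, which is decidable as noted in Section~\ref{sec:pre} (an immediate consequence of the decidability of higher-order model checking~\cite{Ong06LICS,Kobayashi13JACM}). Chaining these two facts yields a decision procedure: translate, then run a \BPCF{} reachability checker.

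The only subtlety worth a sentence in the proof is that the transformation is defined on typing derivations rather than on raw terms, so strictly speaking one needs to exhibit a derivation for the hypothesis \(\emptyTE\p M:\bt\q\emptyTE\). This is not an obstacle because the typing rules in Figure~\ref{fig:typing-BPCFRL} are syntax-directed (the splits by \texttt{splitTE} are determined by the free variables, and the choice between \rn{T-App}/\rn{T-RApp} and \rn{T-Fun}/\rn{T-RFun} is dictated by whether the function symbol occurs recursively), so type checking is decidable and a canonical derivation can be recovered from \(M\); alternatively, one can simply take the input to be a typed term together with its derivation. I do not anticipate a real obstacle here: the corollary is essentially a direct composition of Theorem~\ref{th:main} with a known decidability result, and all the substantive work has already been done in establishing Theorem~\ref{th:main}.
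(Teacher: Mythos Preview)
Your proposal is correct and follows exactly the approach the paper takes: the corollary is stated immediately after Theorem~\ref{th:main} with the remark that it follows from that theorem together with the decidability of reachability for \BPCF{}, and your write-up simply spells out this composition (translate via the type-directed rules, invoke Theorem~\ref{th:main}, then appeal to higher-order model checking). The extra sentence you add about recovering a typing derivation is a reasonable elaboration but not something the paper dwells on; your observation that one may simply take the derivation as part of the input suffices.
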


\begin{remark}
  \label{rem:fun}
  A careful reader may have noticed that our typing rules for functions
  are too rigid, in that functions of types \(\Tfuns{\tau_1}{}{m}{\tau_2}\)
    and \(\Tfuns{\tau_1}{}{n}{\tau_2}\) are incompatible if \(m\ne n\).
    To mitigate this, one can add the following subsumption rule:
    {\small
      \infrule{\Gamma\p M:\Tfuns{\tau_1}{}{n}{\tau_2}\q \Gamma'\Tr \mvE
        \andalso m>n}
              {\Gamma\p
                M:\Tfuns{\tau_1}{}{m}{\tau_2}\q \Gamma'\Tr
                \letexp{(\Clos{N_1}{h_1},h_2)}{N}\\
              (\Clos{\lambda (x,h).
                  \letexp{(r,x,h_1')}{N_1(x,h|_n)} (r,x,h_1'\cdot c^{m-n})}
                       {h_1\cdot c^{m-n}},{h_2})}}
    The rule above just adds the dummy elements to the function's local store.
    Here, \(c^k\) represents a tuple consisting of \(k\) copies of
    an (arbitrary) constant \(c\) of type \(\bty\), \(h_1\cdot h_2\) concatenates the tuples \(h_1\) and \(h_2\),
    and \(h|_n\) constructs a tuple consisting of the first \(n\) elements of
    \(h\). \qed
\end{remark}

\subsection{Extensions}
\label{sec:disc}

The type system of \bPCFRL{\bty} may be too restrictive to be used in practice.
We have deliberately made so to keep the formalization simple.
We sketch below some extensions to relax the restriction; the formalization
of those extensions is left for future work.

With those extensions, we expect to be able 
to obtain a fairly expressive
higher-order functional language with references, whose programs
can be mapped to
 pure functional programs and then be model-checked. As in Rust, however,
programmers would have to be aware of, and obey a certain ownership/linearity principle
when using references or closures containing references. That would be the price
worth to pay, to use a fully automated, precise verification tool.

\subsubsection{Fractional Ownerships}
In the transformation above, we have represented a closure as a pair consisting of
the code and the current values of the reference cells captured by the closure.
If, however, a closure does not update the reference cells, we can omit 
the store part of the closure and allows the closure to be freely copied.
For example, \(\lambda x. \deref{y}\) is
just transformed to \(\lambda x.y\) (provided that there is no mutable copy of \(y\)
elsewhere). To support such read-only references, we can employ
fractional ownership types~\cite{Boyland03SAS,DBLP:conf/esop/TomanSSI020}.

Let us extend reference types by:
\[\rty  ::= \bty \mid \FTref{\rty}{r}.\]
Here, \(r\) ranges over the interval \([0,1]\), where \(r=1\) represents the full
ownership and \(r\) such that \(0<r<1\) represents a partial ownership, with which
the reference can be used only for dereference, not for update.
Then, the transformation rules for dereference and update become:
\infrule[Tr-Deref']{\Gamma(x)=\FTref{\rty}{r}\andalso r>0\andalso \rty=\rty_1+\rty_2}
        {\Gamma \p !x:{\rty_1} \q \Gamma\set{x\mapsto \FTref{\rty_2}{r}}
        \Tr (x,\pack{\Gamma\set{x\mapsto \FTref{\rty_2}{r}}})}

  \infrule[Tr-Assign']{
    \Gamma(y)=\FTref{\rty'}{1}\andalso \Gamma(x)=\rty\andalso  \rty=\rty_1+\rty_2
  \andalso \rty\approx \rty'}
        {\Gamma\p y:=x:\Tunit\q\Gamma\set{x\mapsto \rty_2,y\mapsto \FTref{\rty_1}{1}}\Tr
          \letexp{y}{x}(\Vunit, \pack{\Gamma\set{x\mapsto \rty_2,y\mapsto \FTref{\rty_1}{1}}})}
        Here, \(\FTE\set{x\mapsto \sigma}\) denotes the type environment obtained
        by replacing the binding on \(x\) with \(x\COL\sigma\).
        In rule \rn{Tr-Deref'} for dereference, a partial ownership \(r>0\) is sufficient, while
        in rule \rn{Tr-Assign'} for update, the full ownership on \(y\) is required.
        The condition \(\rty=\rty_1+\rty_2\) means that the ownership in \(\rty\)
        can be split into \(\rty_1\) and \(\rty_2\). For example,
        \(\FTref{\bty}{1} = \FTref{\bty}{0.3}+\FTref{\bty}{0.7}\).
        The condition \(\rty\approx \rty'\) means that
        \(\rty\) and \(\rty'\) may differ only in ownership values. For example,
        \(\FTref{\bty}{0}\approx\FTref{\bty}{1}\).

        The rule for functions becomes:
  \infrule[Tr-Fun']
      {
        \Delta, x\COL\tau_1\p M:\tau_2\q \Delta,x\COL\tau_1\Tr \mvE
    \andalso \Gamma = \Delta + \Gamma'
            }
      {\Gamma \p \lambda x.M:\Tfuns{\tau_1}{\tau_1'}{|\Delta|}\tau_2\q
        \Gamma' \Tr
              (\Clos{\lambda (x, h).\unpack{\Delta}{h}
                \letexp{(r,h')}{\mvE} \\\qquad\qquad\qquad\qquad\qquad\qquad
                \unpack{\Delta,x\COL\tau_1}{h'}(r,x,\pack{\Delta})}{\pack{\Delta}}, \pack{\Gamma'})
      }
      Here, the condition \((\Delta,\Gamma') = \splitTE{\Gamma}{\FV(\lambda x.M)}\)
      in \rn{Tr-Fun} has been replaced by
      \( \Gamma = \Delta + \Gamma'\), which allows the ownerships in \(\Gamma\)
      to be split into \(\Delta\) and  \(\Gamma'\) in an arbitrary manner.
      For example, \( y\COL\FTref{\bty}{1} =
      (y\COL\FTref{\bty}{0.5}) +(y\COL\FTref{\bty}{0.5})\).
      The definition of \(|\rty|\) used in the calculation of \(|\NTE|\) is modified as follows,
      so that only the size of values stored in mutable references
      (i.e., references with the full ownership) is counted.
      \begin{align*}
        |\FTref{\rty}{r}|=\left\{\begin{array}{ll}
     \Tsize{\rty} & \mbox{if $r=1$}\\
     |\rty| & \mbox{if $0<r<1$}\\
     0 & \mbox{if $r=0$}\\
        \end{array}\right.
  \end{align*}
      Accordingly, \(\pack{\Delta}\) and \(\unpack{\Delta}{h}\) manipulate only
      values stored in mutable references.

      With the extension above, for example, the following expression is allowed.
      \[\letexp{y}{\mkref{1}}\letexp{f}{\lambda x.!y}\letexp{g}{f}{f()+g()+!y}.
      \]
      Here, \(f\) captures the reference cell \(y\), but by giving only
      partial ownership (say, \(0.5\)) to \(f\),
      we can still use \(y\) in
      the expression \(f()+g()+!y\). Furthermore, \(f\) can be copied, as
      it has type \(\Tfuns{\UNIT}{\tau_1'}{0}\INT\).
      The expression is transformed to (with some simplification):
      \[\letexp{y}{1}\letexp{f}{\lambda x.y}\letexp{g}{f}{f()+g()+y}.
      \]
\subsubsection{Borrowing and RustHorn-style Prophecies}
\label{sec:borrow}
As indicated by the rules \rn{T-Fun} and \rn{T-RFun},
the reference cells captured by a closure can no longer be directly accessed,
even after the closure is thrown away.
This restriction excludes out, for example, the following program:
\begin{align*}
&  \letexp{x}{\mkref{\TRUE}}\\&
  \letexp{\_}{\letexp{f}{\lambda z.x:=\NOT(x)}f\Vunit}\\&
  !x.
\end{align*}
Here, the function \(f\) that captures the reference cell \(x\)
is locally defined and used, and then \(x\) is accessed.
To allow this kind of program, we can introduce
Rust-style borrowing, and allow \(f\) to temporally borrow
the ownership of \(x\), and the borrowed ownership
to be returned when \(f\)'s lifetime expires.
We can then use the idea of RustHorn~\cite{DBLP:journals/toplas/MatsushitaTK21} 
to convert the program above to:
\begin{align*}
&  \letexp{(x_c,x_f)}{(\TRUE,*)}\\&
  \letexp{\_}{\letexp{f}{\Clos{\lambda (z, (x_c,x_f)).\letexp{x_c}{\NOT(x_c)}
        (\Vunit,z,(x_c,x_f))}{(x_c,x_f)}}\\&\qquad\quad
    \letexp{(r,z,(x_c,x_f))}{\codeof{f}(\Vunit,\envof{f})}
     \Assume(x_c=x_f)}\\&
  x_f.
\end{align*}
The reference cell \(x\) is now represented by a pair \((x_c,x_f)\),
where \(x_c\) holds the current value of \(x\), and \(x_f\) holds the ``prophecy'' on
the value of \(x\) when the ownership is returned from \(f\).
Initially, \(x_f\) is set to a non-deterministic value \(*\), since the future value
is unknown at this moment.
When the ownership is indeed returned, \(x_f\) is set to \(x_c\)
by the assume statement.

To realize the above transformation, we need to extend types with
lifetime information. Following \cite{DBLP:conf/vmcai/NakayamaMSSK24},
we can extend a reference type \(\Tref{\rty}\) to
\(\BTref{\rty}{\alpha}\) and \(\LTref{\rty}{\alpha}{\beta}\),
where \(\BTref{\rty}{\alpha}\) describes a reference whose ownership is
valid only during lifetime \(\alpha\) (here, a lifetime is just a symbol that
represents an abstract notion of some time period),
and \(\LTref{\rty}{\alpha}{\beta}\) describes a reference such that 
its ownership is valid during lifetime \(\alpha\), but the ownership is
temporarily lent to its alias during lifetime \(\beta\).
For example, consider the following expression.
\begin{align*}
  &  \letexp{x}{\mkref{\FALSE}@\alpha} && \mbox{// $x\COL \BTref{\BOOL}{\alpha}$}\\
  & \NewL{\beta}; \letexp{y}{x@\beta}&& \mbox{// $x\COL\LTref{\BOOL}{\alpha}{\beta}, y\COL \BTref{\BOOL}{\beta}$}\\
  & y := \NOT(y); \EndL{\beta}; && \mbox{// 
    $x\COL\BTref{\BOOL}{\alpha}$}\\
  & \Assert(!x).
\end{align*}
The expressions \(\NewL{\beta}\) and \(\EndL{\beta}\) respectively indicate
the start and end of lifetime \(\beta\); we expect that
they will be automatically inserted by a lifetime inference algorithm, as in Rust.
On the first line, \(x\) is set to a reference that is alive during lifetime \(\alpha\).
On the second line, \(y\) is created as an alias of \(x\), and the ownership of
\(x\) is lent to \(y\) during liftime \(\beta\). The lifetime \(\beta\) expires
after the third line, and the type of \(x\) goes back to \(\BTref{\BOOL}{\alpha}\).

The rules for borrowing (\(x@\beta\) above), dereference, and \(\EndL{\beta}\) would
look like:\footnote{For simplicity, we do not consider nested references here.}
\infrule[Tr-Borrow]{\FTE(x)=\BTref{\bty}{\alpha}\andalso \FTE'=\FTE\set{x\mapsto \LTref{\bty}{\alpha}{\beta}}}
        {\FTE \p x@\beta: \BTref{\bty}{\beta}\q
          \FTE' \\
          \Tr
          \letexp{(x_c,x_\alpha)}{x}\letexp{x_\beta}{*}\letexp{x}{(x_\beta,x_\alpha)}
          \letexp{y}{(x_c,x_\beta)}(y,\pack{\FTE'})}
\infrule[Tr-Deref'']{\Gamma(x)=\BTref{\bty}{\alpha}}
        {\Gamma \p !x:\bty \q \Gamma  \Tr (x,\pack{\Gamma})}
\infrule[Tr-End]{\mbox{$\alpha$ does not occur in $\Gamma\Up\alpha$}}
        {\Gamma \p \EndL{\alpha} \q \Gamma\Up\alpha  \Tr
          (\Assumes{\Gamma}{\alpha},\pack{\Gamma\Up\alpha})}

        In \rn{Tr-Borrow}, the type of \(x\) is updated from \(\BTref{\bty}{\alpha}\)
        to \(\LTref{\bty}{\alpha}{\beta}\). As indicated in the output of transformation,
        a value of type \(\BTref{\bty}{\alpha}\) is a pair
        \((x_c,x_\alpha)\) where \(x_c\) is the current value and \(x_\alpha\) is the future
        value of \(x\)  (called the prophecy value) at the end of lifetime \(\alpha\). 
        A value of type \(\LTref{\bty}{\alpha}{\beta}\) is a pair \((x_\beta,x_\alpha)\)
        where \(x_\beta\) is the future value of \(x\) at the end of lifetime \(\beta\).
        In \rn{Tr-Deref''}, we require that the type of \(x\) is of the form \(\BTref{\bty}{\alpha}\), not of the form \(\LTref{\bty}{\alpha}{\beta}\). A reference of the latter type
        is not accessible at the moment, as the ownership has not been returned.

        In \rn{Tr-End}, \(\FTE\Up{\alpha}\) denotes the type environment obtained
        by replacing type bindings of the form \(x\COL\LTref{\bty}{\beta}{\alpha}\)
        with \(x\COL\BTref{\bty}{\beta}\), and 
         removing type bindings of the form \(x\COL\BTref{\bty}{\alpha}\)
        and \(x\COL\BTfuns{\tau_1}{}{n}{\alpha}{\tau_2}\). Here, 
        \(\BTfuns{\tau_1}{}{n}{\alpha}{\tau_2}\) describes a function that is valid
        only during lifetime \(\alpha\).
        \(\Assumes{\FTE}{\alpha}\) sets the prophecy values for \(\alpha\)
        to the current values by assume statements. For example,
        if \(\FTE=x\COL \BTref{\bty}{\alpha}, y\COL \LTref{\bty}{\alpha}{\beta},
        f\COL \BTfuns{\UNIT}{}{2}{\alpha}{\UNIT}\),
        then  \(\Assumes{\FTE}{\alpha}\) is:
        \begin{align*}
          \letexp{(x_c,x_\alpha)}{x}\letexp{(h,\_)}{f}
          \letexp{((z_c,z_\alpha),(w_c,w_\alpha))}{h}\\
          \Assume(x_c=x_\alpha\land z_c=z_\alpha\land w_c=w_\alpha).
        \end{align*}

        Finally, the rule for function creations would be:
  \infrule[Tr-Fun'']
      {
        \Delta, x\COL\tau_1\p M:\tau_2\q \Delta,x\COL\tau_1\Tr \mvE
    \andalso (\Delta,\Gamma') = \BsplitTE{\Gamma}{\FV(\lambda x.M)}{\alpha}
            }
      {\Gamma \p \lambda x.M@\alpha:\BTfuns{\tau_1}{\tau_1'}{|\Delta|}{\alpha}\tau_2\q
              \Gamma'\\\Tr
              (\Clos{\lambda (x, h).\unpack{\Delta}{h}
                \letexp{(r,h')}{\mvE} \unpack{\Delta,x\COL\tau_1}{h'}(r,x,\pack{\Delta})}{\pack{\Delta}}, \pack{\Gamma'})
      }
      Here, the only difference from \rn{Tr-Fun} is the condition
      \(\BsplitTE{\Gamma}{\FV(\lambda x.M)}{\alpha}\), which
      returns \((\Delta, \Gamma')\) where \(\Delta\)
      represents the ownerships borrowed from \(\Gamma\), and
      \(\Gamma'\) represents the rest of the ownerships.
      For example, \(\BsplitTE{x\COL\BTref{\bty}{\beta}}{\set{x}}{\alpha}
      = (x\COL\BTref{\bty}{\alpha}, x\COL\LTref{\bty}{\beta}{\alpha})\).
      Recall the example given at the beginning of this subsection:
      \begin{align*}
        &  \letexp{x}{\mkref{\TRUE}@\beta}\\&
        \NewL{\alpha};
    \letexp{\_}{\letexp{f}{\lambda z.x:=\NOT(x)@\alpha}f\Vunit}\\&
        \EndL{\alpha};
  !x.
      \end{align*}
      Here, we have added lifetime annotations.
      Thanks to the rule \rn{Tr-Fun''}, after \(f\) is created, the caller environment
      still keeps the ownership for \(x\) as \(\LTref{\BOOL}{\beta}{\alpha}\),
      and when \(\alpha\) exprires, the type of \(x\) goes back to
      \(x\COL\BTref{\BOOL}{\alpha}\); hence the dereference \(!x\) is allowed.
\subsubsection{Product and Sum Types}
\label{sec:sumtype}
It is not difficult to extend \bPCFRL{\bty} with product and sum types.
The syntax of types can be extended by:
\[\rty  ::= \bty \mid \Tref{\rty}\mid \rty_1\times \rty_2\mid \rty_1+\rty_2.\]
The rules for sum types would be:
\infrule[Tr-In]{\FTE(x)=\rty_i\andalso i\in\set{1,2}}{\FTE\p \Inj_i(x): \rty_1+\rty_2\q \FTE\exc{x}\Tr
  (\Inj_i(x),\pack{\FTE\exc{x}})}
\infrule[Tr-Match]{\FTE(x)=\rty_1+\rty_2\andalso
  \FTE\excr{x}{\rty_1},y\COL\rty_1\p M_1:\nty\q \FTE'\Tr \mvE_1\andalso
  \FTE\excr{x}{\rty_2},z\COL\rty_2\p M_2:\nty\q \FTE'\Tr \mvE_2
}
        {\FTE\p \MATCH\;x\;\WITH\;\Inj_1(y)\to M_1|\Inj_2(z)\to M_2\q
          \FTE'\\
          \Tr \MATCH\;x\;\WITH\;\Inj_1(y)\to \mvE_1|\Inj_2(z)\to \mvE_2}
As before, reference cells are replaced by their contents by the transformation; for example,
\((\mkref{1}, \mkref (2,\mkref{3}))\) in the source language becomes
\((1,(2,3))\) in the target language.

Using sum types (or variant types), we can express a stateful object.
For example, the following code defines a function \(\mathit{newcounter}\)
to create a counter object with two methods \(\mathit{Inc}\) and \(\mathit{Read}\),
creates a counter \(c\) with the initial value \(n\), increments it, and then
reads it. (Here, for readability, we wrote \(\mathit{Inc}\) and \(\mathit{Read}\)
for \(\Inj_1\) and \(\Inj_2\) respectively.)
\begin{align*}
&  \letexp{\mathit{newcounter}}{\lambda \mathit{init}. 
    \letexp{r}{\mkref{\mathit{init}}}\\&\qquad\qquad\qquad\qquad
    \letexp{\mathit{f}}{\lambda x.
      \mathtt{match}\ x\ \mathtt{with}\ \mathit{Inc}() \rightarrow (r := !r+1;0) \;|\;  \mathit{Read}() \rightarrow !r}\\&\qquad\qquad\qquad\qquad
f}\\
  &\letexp{c}{\mathit{newcounter}\,n}
  c(\mathit{Inc}()); c(\mathit{Read}()).
\end{align*}
The expression can be converted to a \bPCFRL{\INT} expression
by our transformation.

Another way to encode a stateful object
would be to represent it as a tuple of function closures, as in:
\begin{align*}
&  \letexp{\mathit{newcounter}}{\lambda \mathit{init}. 
    \letexp{r}{\mkref{\mathit{init}.}}\\&\qquad\qquad\qquad\qquad
    \letexp{\mathit{inc}}{\lambda x.r := !r+1}\\&\qquad\qquad\qquad\qquad
\letexp{\mathit{read}}{\lambda x.!r}
    (inc,read)}\\
  &\letexp{(\mathit{inc},\mathit{read})}{\mathit{newcounter}\;n}
  \mathit{inc}();\mathit{inc}();\mathit{read}()
\end{align*}
This is, however, not typable in our type system.
The program defines a function \(\mathit{newcounter}\),
which creates a counter with two functions \(\mathit{inc}\),
and \(\mathit{read}\) to increment/read the counter. The counter value is
stored in the reference cell \(r\), which is shared by the two functions, hence
rejected by our type system.
It is left for future work to extend the type system so that a record or a tuple of
closures is treated linearly as a whole (so that there cannot be multiple aliases
for the record of closures or its elements).

\subsubsection{Higher-Order References}
We have so far considered only ground references,
where functions cannot be stored.
To deal with higher-order references, we need to extend function types.
Recall that \(n\) in the function type \(\Tfuns{\tau_1}{\tau_1'}{n}\tau_2\)
represents the size of the local store of the function; in other words,
the local store can be represented as a value of type \(\bty^n\).
If we allow higher-order references, then the local store
should also contain function values. Thus, we need to consider
function types of the form \(\Tfuns{\tau_1}{\tau_1'}{\tau_3}\tau_2\),
where \(\tau_3\) represents the type of the local store;
\(\Tfuns{\tau_1}{\tau_1'}{n}\tau_2\) is just a special case where \(\tau_3=\bty^n\).
For example, the function created by the following code would have
type \(\Tfuns{\UNIT}{}{\Tfuns{\UNIT}{}{\INT}{\INT}}{\INT}\)
\begin{align*}
  \letexp{r}{\mkref{1}}\letexp{f}{\mkref{\lambda x.r:=!r+1;!r}}
         {\lambda y\COL\UNIT.!f()}.
\end{align*}
\begin{remark}
  Even if we allow higher-order references,
  Landin's knot (which is a standard trick for implementing recursion
  by using higher-order references) cannot be expressed due to the control of
  ownerships by the type system.
Consider:
\begin{align*}
  \letexp{r}{\mkref{\lambda x.x}}
  r := (\lambda x.!r\,x);
  !r(),
\end{align*}
which falls into an infinite loop.
The part \(r := \lambda x.!r\,x\) is not typable, as the ownership of \(r\)
is duplicated between \(r :=\cdots\) and \(\lambda x.!r\,x\).
\end{remark}

 \subsection{Preliminary Experiments}
\label{sec:exp}
Our transformation from \bPCFRL{\bty} to \bPCF{\bty} can be applied to
automated verification of higher-order functional programs with
references (that conform to the typing system of \bPCFRL{\INT})
 as follows.\footnote{
  An alternative approach would be to apply predicate abstraction first to
  obtain a \BPCFRL{} program, apply our transformation to obtain
  a \BPCF{} program, and then invoke a higher-order model checker for
  \BPCF{}. The approach mentioned above is however easier to implement, as
  we can reuse off-the-shelf automated verification tools.}
\begin{itemize}
\item Step 1: Check that a given program is typable, and
  apply our transformation to remove reference cells.
\item Step 2: Use an off-the-shelf automated verification tool (like \mochi{}~\cite{KSU11PLDI,SUK13PEPM})
  for (pure) higher-order functional programs to verify the resulting program.
\end{itemize}

To evaluate the effectiveness of the above approach, we have prepared
twelve small, but tricky OCaml programs with references, and manually performed
Step~1 to obtain OCaml programs without references.
We then used \mochi{}\footnote{\url{https://github.com/hopv/MoCHi/tree/develop}}
  to verify the resulting programs.
The experiments were conducted on a machine with
Intel i7-8650U CPU 1.90GHz and 16GB memory.

The experimental results are summarized in Table~\ref{tab:exp}.
The column ``result'' shows the result of verification (which indicates whether
the assertion failure may occur), and the column ``time'' shows the running time of
\mochi{}, as shown by the web interface.

Benchmark programs used in the experiments are described as follows.
The exact code is found in
\iffull Appendix~\ref{sec:benchmark}. \else
a longer version~\cite{K25POPLfull}. \fi
\begin{itemize}
\item \texttt{repeat\_ref}: This creates a closure by \(\letexp{x}{\mkref{0}}{\lambda y.x:=!x+1;!x}\),
  repeatedly calls it \(n\) times, and asserts that the \(n\)-th call returns \(n\).
\item \texttt{repeat\_localref}: This creates a closure by \({\lambda y.\letexp{x}{\mkref{0}}x:=!x+1;!x}\),
  repeatedly calls it \(n\) times, and asserts that the \(n\)-th call returns \(1\).
\item \texttt{inc\_before\_rec}: This defines a recursive function
  \begin{verbatim}
  let x = ref 0 
  let rec f n = if n=0 then !x else (x := !x+1; f(n-1))
\end{verbatim}
  and asserts that the return value of \(f\,n\) is \(n\).
\item \texttt{inc\_after\_rec}: Similar to \texttt{inc\_before\_rec}, but the else-clause is replaced with
  \texttt{f(n-1); x := !x+1; !x}.
\item \texttt{borrow}: This consists of the following code, which is an integer version of the code discussed in Section~\ref{sec:borrow}.
\begin{verbatim}
    let x = ref n in
    let _ = let f() = x := !x+1 in f(); f() in assert(!x=n+2)
\end{verbatim}
The reference \(x\) is first captured by \(f\), but released and reused after the lifetime of \(f\) expires.
\item  \texttt{counter}: This is the example of counters discussed in
  Section~\ref{sec:sumtype}.
\end{itemize}
The program named \texttt{xx\_ng} is an unsafe version of \texttt{xx}, obtained
by replacing the assertion with a wrong one.

As the table shows, \mochi{} could correctly verify/reject all the translated programs
(except that a minor problem occurred for \texttt{counter\_ng} during the counterexample generation phase,
due to a bug of \mochi{}, not our translation). We have also tested \mochi{} against the source
programs containing references, but it could verify none of them. This confirms the effectiveness
of our approach. An implementation of the translator and larger experiments are left for future work.
\begin{table}
  \caption{Experimental Results}
  \label{tab:exp}
  \begin{center}
\begin{tabular}{|l|l|l|}
  \hline
  program & result & time (sec.) \\
  \hline
  \texttt{repeat\_ref} & safe & 5.539 \\
  \hline
  \texttt{repeat\_ref\_ng} & unsafe & 1.674\\
  \hline
  \texttt{repeat\_localref} & safe & 1.356 \\
  \hline
  \texttt{repeat\_localref\_ng} & unsafe & 1.800\\
  \hline
  \texttt{inc\_before\_rec} & safe & 1.784\\
  \hline
  \texttt{inc\_before\_rec\_ng} & unsafe & 0.526\\
  \hline
  \texttt{inc\_after\_rec} & safe & 1.737\\
  \hline
  \texttt{inc\_after\_rec\_ng} & unsafe & 0.538\\
  \hline
  \texttt{borrow} & safe & 0.379\\
  \hline
  \texttt{borrow\_ng} & unsafe & 0.651\\
  \hline
  \texttt{counter} & safe & 0.373\\
  \hline
  \texttt{counter\_ng} & unsafe & 0.650\\
  \hline
\end{tabular}
\end{center}
\end{table}
 \section{Related Work}
\label{sec:rel}
\paragraph{On (un)decidability of extensions of finitary PCF}

As already mentioned, the undecidability of the simply-typed \(\lambda\)-calculus
with exceptions has been proven to be undecidable even without recursion,
by Lillibridge~\cite{DBLP:journals/lisp/Lillibridge99}.
More precisely, he used exceptions carrying a second-order function of type
\((\UNIT\to\UNIT)\to(\UNIT\to\UNIT)\) to encode the untyped \(\lambda\)-calculus.
In contrast, we have shown that in the presence of recursion,
even exceptions carrying a first-order function (of type \(\UNIT\to\UNIT\))
leads to undecidability. For algebraic effects, the basic ideas of our proof are
attributed to Dal Lago and Ghyselen~\cite{DBLP:journals/pacmpl/LagoG24},
and our proof is just a streamlined version of their undecidability proof. Nevertheless,
the simplicity of our proof would be useful for better understanding the nature of
undecidability of the calculus with algebraic effects.

For \BPCFL{}, as clear from our proof, the undecidability comes from
the primitive for equality check, rather than the creation of a fresh symbol itself.
In fact, Kobayashi~\cite{Kobayashi09POPL} has shown the decidability of
resource usage analysis, which can be considered the reachability problem
for an extension of finitary PCF with the primitives for creating fresh symbols
(called resources in \cite{Kobayashi09POPL}) and accessing each symbol (but
not for comparing them).

For an extension with references, previous undecidability proofs are found in
\cite{DBLP:journals/jacm/JonesM78} (Theorem 3) and \cite{DBLP:journals/apal/Ong04} (Lemma 39),
which encoded Post's tag systems~\cite{Minsky} instead of Minsky machines.
Besides the reachability problem, the program equivalence/approximation problem has
been actively studied for languages with references~\cite{DBLP:journals/apal/Ong04,DBLP:journals/tcs/Murawski05,DBLP:journals/tocl/Murawski05,DBLP:conf/fossacs/MurawskiW05,DBLP:conf/icalp/MurawskiOW05,DBLP:conf/lics/Murawski08}. Murawski~\cite{DBLP:journals/tocl/Murawski05}
has shown that program equivalence for Idealized Algol is undecidable at order 4
even without recursion. 
Jaber~\cite{DBLP:journals/pacmpl/Jaber20} implemented an automated tool for
proving contextual equivalence of higher-order programs with references.

\paragraph{Ownerships for automated verification of programs with references}
Toman et al.~\cite{DBLP:conf/esop/TomanSSI020} proposed a refinement type
system for first-order imperative languages. They used 
fractional ownerships~\cite{Boyland03SAS} to ensure that when there is a mutable
reference, there is no other alias, and used that condition to enable strong updates
of refinement types. Matsushita et al.~\cite{DBLP:journals/toplas/MatsushitaTK21}
used Rust-style ownerships and reduced the reachability problem for a subset of
Rust to the satisfiability problem for Constrained Horn Clauses. Both
methods can alternatively be formalized as a reduction from first-order imperative
programs to functional programs without references~\cite{DBLP:conf/vmcai/NakayamaMSSK24,DBLP:conf/icfem/DenisJM22}. Those studies, however, considered
only first-order programs, and did not deal with closures containing references.

 \section{Conclusion}
\label{sec:conc}

We have given simple proofs of the undecidability of the reachability
problem for various extensions of finitary PCF. Although those undecidability results
were known or folklore, the simplicity and uniformity of our proofs
would be useful for better understanding the nature of undecidability.
We have also given a type-based characterization of a decidable
fragment of the extension of finitary PCF with Boolean references, and
shown a translation from the fragment into finitary PCF without
references.  We have shown that non-trivial higher-order programs with references
can be verified by combining our translation and an automated verification tool for 
functional programs without references.
Future work includes a redesign of a functional language with (Rust-style and/or fractional) ownerships,
and employ our transformation to enable fully automated verification of higher-order programs with
references.
 
\subsubsection*{Acknowledgment}
  We thank anonymous referees for useful comments,
  and Ugo Dal Lago, Andrzej Murawski, Taro Sekiyama, Tachio Terauchi, Takeshi Tsukada, and Hiroshi Unno for discussions and/or information on the literature.
This work was supported by
JSPS KAKENHI Grant Number JP20H05703.


\iffull
\newpage
\appendix
\section*{Appendix}
\section{Proof of Theorem~\ref{th:main}}
\label{sec:proof}
This section gives a proof of Theorem~\ref{th:main}.
For a map \(f\), we write \(\dom(f)\) for the domain of \(f\).
We write \(f\proj{S}\) and \(f\setminus S\) for the map obtained from \(f\)
by restricting its domain to \(S\) and \(\dom(f)\setminus S\) respetively.
We often write \(f\setminus x\) for \(f\setminus\set{x}\).
When \(\dom(f)\cap \dom(g)\), we write \(f\uplus g\) for \(f\cup g\).

We first define the operational semantics of \bPCFRL{\bty} and \bPCF{\bty}.
For the technical convenience of the correspondence between the source and target programs,
we use big-step semantics, as opposed to small-step semantics used in Sections~\ref{sec:pre} and \ref{sec:undecidability}.

For \bPCFRL{\bty}, we introduce the relation \((M,R,H)\eval \mathit{ans}\) where \(\mathit{ans}\) is of
the form \((v, R', H')\) or \(\FAIL\), and \(v\) ranges over the set defined by:
\[v ::= c \mid \ell \mid \Clos{\lambda x.M}{R}\mid \Clos{\fixexp{f}{x}{M}}{R}\]
Here, \(c\) ranges over the set of constants, \(\ell\) ranges over a set of locations,
\(R\) is an environment, which maps variables to values, and \(H\) is a heap, which maps locations to the set consisting of locations and constants.
It is defined by the rules in Figure~\ref{fig:os-bpcfrl}. Actually, we have additional rules for \(\FAIL\);
for each rule that has an assumption of the form \((M,R,H)\eval (v,R',H')\),
we have a rule that says \((M,R,H)\eval \FAIL\) implies the conclusion also evaluates to \(\FAIL\).
For example, For \rn{RE-Let}, we have two additional rules.
\infrule[RE-LetF1]{(M_1,R,H)\eval \FAIL}
        {(\letexp{x}{M_1}{M_2},R,H)\eval \FAIL}
\infrule[RE-LetF2]{(M_1,R,H)\eval (v_1,R_1,H_1)\andalso (M_2,R_1\set{x\mapsto v_1}, H_1)\eval \FAIL}
        {(\letexp{x}{M_1}{M_2},R,H)\eval \FAIL}
The semantics is mostly standard, except the following points.
\begin{itemize}        
\item We assume that each variable is implicitly annotated with types, and in rule \rn{RE-Var},
  we refer to it, and restrict \(R\) if the type \(\tau\) of \(x\) is not sharable.
  \(R\exc{x^\tau}\) is defined as \(R\) if \(\Sharable(\tau)\), and \(R\setminus{x}\) otherwise. Because of this change of an environment,
  the evaluation result contains the \(R\)-component.
\item In \rn{RE-Op} and \rn{E-Op}, \(\sem{\Op{k}}\) denotes the mathematical function
  represented by \(\Op{k}\).
\item For \(\bty=\INT\), \(\TRUE\) and \(\FALSE\)
  in the rules for conditionals repsectively mean non-zero integers
  and \(0\).
\item When evaluating an expression containing bound variables, we implicitly assume
  that \(\alpha\)-renaming is applied as necessary to avoid variable clash.
\item \(\splitEnv{R}{S}\) used in \rn{RE-Fun} and \rn{RE-RFun} is analogous to \(\splitTE{\Gamma}{S}\) used in the typing rules;
  it splits the environment \(R\) into \(R_1\) on \(S\) and the rest of the environment \(R_2\), though the bindings on 
  variables of sharable types is shared between \(R_1\) and \(R_2\).
  \(\splitEnv{R}{S}\) is defined as \((R_1,R_2)\), where:
  \begin{align*}
  &  R_1 = \set{x^\tau\mapsto v\in R \mid x\in S}\\
   & R_2 = \set{x^\tau\mapsto v\in R \mid x\not\in S\lor \Sharable(\tau)}.
  \end{align*}
\end{itemize}        
\begin{figure}[thbp]
  \infrule[RE-Fail]{}{(\FAIL, R, H)\eval \FAIL}
  \infrule[RE-Var]{}{(x^\tau, R, H)\eval (R(x^\tau), R\exc{x^\tau},H)}

\rulesp        
\infrule[RE-Const]{}{(c,R,H)\eval (c, R, H)}
\rulesp        
\infrule[RE-Const]{\sem{\Op{k}}(R(x_1),\ldots,R(x_k))=c}
        {(\Op{k}(x_1,\ldots,x_k),R,H)\eval (c, R, H)}

\rulesp        
\infrule[RE-Deref]{}{(!x,R,H)\eval (H(R(x)), R\excr{x^{\Tref{\rty}}}{\rty}, H)}

\rulesp        
\infrule[RE-Mkref]{\mbox{$\ell$ fresh}}{(\mkref{x},R,H)\eval (\ell, R\exc{x^\rty}, H\set{\ell\mapsto R(x)})}

\rulesp        
\infrule[RE-Assign]{}{(x:=y,R,H)\eval (\Vunit, R\exc{x^\rty}, H\set{R(x)\mapsto R(y)})}

\rulesp        
\infrule[RE-Let]{(M_1,R,H)\eval (v_1,R_1,H_1)\andalso (M_2,R_1\set{x\mapsto v_1}, H_1)\eval (v_2,R_2,H_2)}
        {(\letexp{x}{M_1}{M_2},R,H)\eval (v_2,R_2\setminus{x},H_2)}
\rulesp        
\infrule[RE-IfT]{R(x)=\TRUE\andalso (M_1,R,H)\eval (v_1,R_1,H_1)}
        {(\ifexp{x}{M_1}{M_2},R,H)\eval (v_1,R_1,H_1)}
        
\rulesp        
\infrule[RE-IfF]{R(x)=\FALSE\andalso (M_2,R,H)\eval (v_2,R_2,H_2)}
        {(\ifexp{x}{M_1}{M_2},R,H)\eval (v_2,R_2,H_2)}
\rulesp        
\infrule[RE-Fun]{(R_1,R_2)=\splitEnv{R}{\FV(\lambda x.M)}}
        {(\lambda x.M,R,H)\eval (\Clos{\lambda x.M}{R_1}, R_2, H)}
\rulesp        
\infrule[RE-RFun]{(R_1,R_2)=\splitEnv{R}{\FV(\lambda x.M)}}
        {(\fixexp{f}{x}{M},R,H)\eval (\Clos{\fixexp{f}{x}{M}}{R_1}, R_2, H)}
\rulesp        
\infrule[RE-App1]{R(f)=\Clos{\lambda y.M}{R_f}\andalso (M, R_f\set{y\mapsto R(x)}, H)\eval (v, R',H')}
        {(f\,x,R,H)\eval (v, R,H')}
\rulesp        
\infrule[RE-App2]{R(f)=\Clos{\fixexp{g}{y}{M}}{R_f}\\
   ([f/g]M, R_f\set{f\mapsto \Clos{\fixexp{g}{y}{M}}{R_f}, y\mapsto R(x)}, H)\eval (v, R',H')}
        {(f\,x,R,H)\eval (v, R,H')}
\caption{Operational semantics of \bPCFRL{\bty}}
\label{fig:os-bpcfrl}
\end{figure}        

The big-step semantics of \bPCF{\bty} (extended with tuples) is defined in Figure~\ref{fig:os-bpcf}.
Here, \(w\) ranges over the set of values defined by
\[w ::= c\mid \ell\mid \Clos{\lambda x.\mvE}{S}\mid \Clos{\fixexp{f}{x}{\mvE}}{S}\mid
(w_1,\ldots,w_k).\]
\(S\) denotes an environment; unlike \bPCFRL{\bty}, we do not have a heap-component as \bPCF{\bty} do not have reference cells.
As for the case of \bPCFRL{\bty}, we have omitted evaluation rules for \(\FAIL\) (except the base case \rn{E-Fail}).

\begin{figure}[thbp]
  \infrule[E-Fail]{}{(\FAIL, S)\eval \FAIL}
  \infrule[E-Var]{}{(x, S)\eval S(x)}

\rulesp        
\infrule[E-Const]{}{(c,S)\eval c}
\rulesp        
\infrule[E-Const]{\sem{\Op{k}}(S(x_1),\ldots,S(x_k))=c}
        {(\Op{k}(x_1,\ldots,x_k),S)\eval c}

\rulesp        
\infrule[E-Tuple]{(\mvE_i,S)\eval w_i \mbox{ for each $i\in\set{1,\ldots,k}$}}
        {((\mvE_1,\ldots,\mvE_k),S)\eval (w_1,\ldots,w_k)}
\rulesp        
\infrule[E-Proj]{(\mvE,S)\eval (w_1,\ldots,w_k) \andalso 1\le j\le k}
        {(\mvE.j, S)\eval w_j}
\rulesp        
\infrule[E-Let]{(\mvE_1,S)\eval w_1\andalso (\mvE_2,S\set{x\mapsto w_1})\eval w_2}
        {(\letexp{x}{\mvE_1}{\mvE_2},S)\eval w_2}
\rulesp      
\infrule[E-IfT]{S(x)=\TRUE\andalso (\mvE_1,S)\eval w}
        {(\ifexp{x}{\mvE_1}{\mvE_2},S)\eval w}
        
\rulesp        
\infrule[E-IfF]{S(x)=\FALSE\andalso (\mvE_2,S)\eval w}
        {(\ifexp{x}{\mvE_1}{\mvE_2},S)\eval w}
\rulesp        
\infrule[E-Fun]{}
        {(\lambda x.\mvE,S)\eval \Clos{\lambda x.\mvE}{S}}
\rulesp        
\infrule[E-RFun]{}
        {(\fixexp{f}{x}{\mvE},S)\eval \Clos{\fixexp{f}{x}{\mvE}}{S}}
\rulesp        
\infrule[E-App1]{
  (\mvE,S)\eval v\andalso 
  S(f)=\Clos{\lambda y.\mvE_1}{S_f}\andalso (\mvE_1, S_f\set{y\mapsto v})\eval w}
        {(f\,\mvE,S)\eval w}
\rulesp        
\infrule[E-App2]{
(\mvE,S)\eval v\andalso 
  S(f)=\Clos{\fixexp{g}{y}{\mvE_1}}{S_f}\andalso ([f/g]\mvE_1,
  S_f\set{f\mapsto \Clos{\fixexp{g}{y}{\mvE_1}}{S_f},y\mapsto v})\eval w}
        {(f\,\mvE,S)\eval w}

\caption{Operational semantics of \bPCF{\bty}}
\label{fig:os-bpcf}
\end{figure}        

We now re-state Theorem~\ref{th:main} more formally.
\begin{theorem}
\label{th:restatement-main}  
  If \(\emptyTE \p M:\bt\q \emptyTE \Tr \mvE \),
  then \((M,\emptyset,\emptyset)\eval \FAIL\)  if and only if \((\mvE,\emptyset)\eval \FAIL\).
\end{theorem}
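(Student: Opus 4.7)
The plan is to strengthen the statement to a form amenable to induction on the translation derivation $\Gamma \p M : \tau \q \Gamma' \Tr N$. The closed, base-type case is too weak as an induction hypothesis, since the subderivations produce intermediate judgments with non-empty environments and higher-order types. First I would introduce a correspondence $(R,H) \sim_\Gamma S$ between a source configuration (environment plus heap) and a target environment, together with a value relation $v \sim_\tau^H w$ that mirrors how the translation represents each type. At base and unit types, $\sim$ is equality. At $\Tref{\rty}$, a location $\ell$ relates to the target value obtained by following pointers in $H$ down to the deepest contents, since the translation replaces references by their contents. At $\Tfuns{\tau_1}{}{n}{\tau_2}$, a source closure $\Clos{\lambda x.M}{R_f}$ relates to a target pair $(h_f, \Clos{N_1}{S_f})$ where $h_f$ is the tuple $\pack{\Delta}$ computed from $R_f$ under $H$, and $N_1$ is the translated body that, upon invocation, will $\mathtt{unpack}$ $h_f$ into an environment $\sim_\Delta$-related to $R_f$. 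The case $\Trfun{\tau_1}{}{\Delta}{\tau_2}$ is analogous but the local store is not carried by the closure; it is reconstituted from the caller's environment at each call site.

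The main lemma to prove, by induction on the derivation of $\Gamma \p M : \tau \q \Gamma' \Tr N$ with a subsidiary induction on the big-step evaluation, is: for every $(R,H) \sim_\Gamma S$, (i) $(M,R,H) \eval \FAIL$ iff $(N,S) \eval \FAIL$, and (ii) $(M,R,H) \eval (v,R',H')$ iff $(N,S) \eval w$ with $w$ decomposing as a value $w_v$ and a store component $h'$ such that $v \sim_\tau^{H'} w_v$ and $(R',H')$ is $\sim_{\Gamma'}$-related to the updated target environment obtained by rebinding the variables recovered from $h'$. Most cases are routine syntax-directed unfoldings: apply the induction hypothesis to subderivations, then use that $\pack{\Gamma}$ and $\unpackwo{\Gamma}{h}$ are mutually inverse on stores matching $\Gamma$, together with the linear-ownership invariants built into the removal operations $\Gamma\exc{x}$ and $\Gamma\excr{x}{\tau}$ appearing in \rn{T-Var}, \rn{T-Deref}, etc. These invariants justify the type-driven choice to represent a dereference simply as the variable $x$ in the target: when \rn{T-Deref} fires with a non-sharable result type, no other in-scope term can witness the same cell, so reading via $x$ cannot expose stale information.

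The main obstacle will be the application cases, particularly \rn{Tr-RApp}, where the closure in the target carries no local store; at each call site the translation packs the caller's $\Delta$-store via $\pack{\Delta}$, passes it in, and unpacks the returned store back into the caller. I would need to show that this round trip preserves $\sim_\Gamma$ and that the recursive function body, evaluated in the target under the freshly unpacked environment, simulates the source-level evaluation using the captured source environment $R_f$. The side condition $\Delta \subsetTE \Gamma\setminus\set{f,x}$ of \rn{T-RApp} is exactly what makes this possible: it guarantees that the caller owns a store realizing $\Delta$, disjoint from $x$, so that the reconstituted store faithfully reflects the cells the function body may touch. A similar but easier obstacle appears in \rn{Tr-App}, where the closure's own local store is updated in place via the rebinding $\letexp{f}{\Clos{\codeof{f}}{h_f}}{\cdots}$, and in \rn{Tr-Let}, where the post-environment of $M_1$ must be correctly threaded into the pre-environment of $M_2$ through the inserted $\mathtt{unpack}$. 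Once these cases are handled, instantiating the main lemma at $\Gamma = \Gamma' = \emptyTE$ with $\sim_\emptyTE$ holding vacuously yields the theorem.
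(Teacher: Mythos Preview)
Your proposal is essentially the paper's approach: a type-indexed correspondence between source configurations $(R,H)$ and target environments $S$, a compatible value relation, and a simulation lemma whose closed instance gives the theorem. The paper sets this up exactly as you describe (rules \textsc{Con-Fun}, \textsc{Con-RFun}, \textsc{Con-Env2}, etc.) and the delicate cases are indeed \rn{Tr-App} and \rn{Tr-RApp}.

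One point to tighten: your stated induction order is backwards. The translation derivation for $f\,x$ (\rn{Tr-App}/\rn{Tr-RApp}) has \emph{no} subderivation for the function body, so an outer induction on the translation cannot reach the body; the body's translation is recovered only from the closure value through the correspondence relation. What decreases at a call is the big-step evaluation derivation. The paper accordingly proves two separate lemmas, each by induction on an evaluation derivation: the forward direction by induction on the source derivation $(M,R,H)\Downarrow\cdots$, the backward direction by induction on the target derivation $(N,S)\Downarrow\cdots$. Case analysis on the last translation rule then determines the shape of $N$ and how to unpack the correspondence. Your ``subsidiary induction on the big-step evaluation'' is really the primary one; once you flip the order, your outline goes through. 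You will also want, as the paper does, a separate type-preservation lemma for source configurations and small heap strengthening/weakening lemmas so that in the application cases you can peel off the part of $H$ owned by the callee, apply the IH on the smaller heap, and reassemble.
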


To prove the theorem, we need to introduce auxiliary relations.
First, we define the type judgment \(\p (R,H):\Gamma\) by the rules in Figure~\ref{fig:typing-runtime}.
The judgment means that the pair of the environment \(R\) and
the store \(H\) together provides an environment conforming to \(\Gamma\).
\begin{figure}
  \infrule[TyR-Unit]{}{\p (\Vunit,\emptyset):\UNIT}
  \rulesp
  \infrule[TyR-Const]{}{\p (c^\bty,\emptyset):\bty}
  \rulesp
\infrule[TyR-Ref]{\p (v,H):\rty}{\p (\ell,H\set{\ell\mapsto v}):\Tref{\rty}}
  \rulesp
  \infrule[TyR-Fun]{\p (R,H):\Gamma\andalso \Gamma,x\COL\tau_1\p M:\tau_2\q \Gamma,x\COL\tau_1\andalso |\Gamma|=m}
          {\p (\Clos{\lambda x.M}{R}, H):\Tfuns{\tau_1}{}{m}{\tau_2}}
  \rulesp
  \infrule[TyR-RFun]{\p (R,H):\Gamma\andalso \Gamma,f\COL\Trfun{\tau_1}{}{\Gamma}{\tau_2},x\COL\tau_1\p M:\tau_2\q \Gamma,f\COL\Trfun{\tau_1}{}{\Gamma}{\tau_2},x\COL\tau_1\andalso |\Gamma|=m}
          {\p (\Clos{\fixexp{f}{x}{M}}{R}, H):\Tfuns{\tau_1}{}{m}{\tau_2}}
\infrule[TyR-Env0]{}{\p (R,H):\emptyTE}
        \infrule[TyR-Env1]{\p (R,H):\Gamma\qquad \p (v,H'):\tau}
                  {\p (R\set{x\mapsto v}, H\uplus H'):(\Gamma,x\COL\tau)}
       \infrule[TyR-Env2]{\p (R,H):\Gamma \qquad
         \Gamma,f\COL\Trfun{\tau_1}{}{\Gamma}{\tau_2},x\COL\tau_1\p M:\tau_2\q
         \Gamma,f\COL\Trfun{\tau_1}{}{\Gamma}{\tau_2},x\COL\tau_1\andalso
       R_f\proj{\dom(\Gamma)} = R\proj{\dom(\Gamma)}}
       {\p (R\set{f\mapsto \Clos{\fixexp{f}{x}{M}}{R_f}},H):(\Gamma,f\COL\Trfun{\tau_1}{}{\Gamma}{\tau_2})}
\caption{Typing run-time states.}
\label{fig:typing-runtime}          
\end{figure}

\begin{lemma}[type preservation]
  \label{lem:type-preservation}
  If \(\p (R,H):\Gamma\), \(\Gamma\p M:\tau\q \Gamma'\) and
  \((M,R,H)\eval (v,R',H')\), then \(\p (R'\set{x\mapsto v}, H'):(\Gamma',x\COL\tau)\),
  where \(x\) is a fresh variable (i.e., \(x\not\in\dom(R)\)).
\end{lemma}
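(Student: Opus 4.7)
The plan is to prove this by induction on the derivation of $(M,R,H)\eval (v,R',H')$, inverting the typing derivation $\Gamma\p M:\tau\q \Gamma'$ at each step. Since the hypothesis assumes a successful (non-failing) evaluation, only the rules of Figure~\ref{fig:os-bpcfrl} other than \rn{RE-Fail} need to be considered.

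Before the induction, I would establish three auxiliary results about the runtime typing judgment. First, a \emph{splitting lemma}: whenever $(\Delta_1,\Delta_2)=\splitTE{\Gamma}{S}$ and $\p(R,H):\Gamma$, the pair decomposes into $(R_1,H_1)$ and $(R_2,H_2)$ with $\p(R_i,H_i):\Delta_i$, where for sharable types the variable binding (and its empty heap fragment) may appear in both pieces. Second, a \emph{reassembly lemma} that recombines such pieces into a single typing. Third, a \emph{heap update lemma}: if $\p(R,H):\Gamma$ and $\ell$ is a location carried by some binding $y\COL\Tref\rty$ in this typing with heap fragment $H_v\uplus\set{\ell\mapsto v_{\mathrm{old}}}$, then replacing that fragment by any $H'_v\uplus\set{\ell\mapsto v_{\mathrm{new}}}$ with $\p(v_{\mathrm{new}}, H'_v):\rty$ preserves the judgment. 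These capture the linearity discipline of \bPCFRL{\bty} and are the workhorses of the base cases.

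With these in hand, the cases \rn{RE-Const}, \rn{RE-Op}, \rn{RE-Var}, \rn{RE-Mkref}, \rn{RE-Deref} and \rn{RE-Assign} follow by direct manipulation, distinguishing sharable and non-sharable types when handling $\exc{x^\tau}$ and $\excr{x^\tau}{\rty}$. For \rn{RE-Let}, the induction hypothesis applied to the first sub-evaluation yields $\p(R_1\set{x\mapsto v_1}, H_1):(\Gamma'',x\COL\tau')$, which is exactly what is required to apply the induction hypothesis to the second sub-evaluation. The conditional cases are immediate from the induction hypothesis on the taken branch. For \rn{RE-Fun} and \rn{RE-RFun}, the splitting lemma decomposes $(R,H)$ along $\splitEnv{R}{\FV(\lambda x.M)}$, and \rn{TyR-Fun} (or \rn{TyR-RFun}) packages the captured $(R_1,H_1)$ inside the closure value, while the residual $(R_2,H_2)$ types the post-environment.

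The main obstacle will be the application cases \rn{RE-App1} and \rn{RE-App2}. The value $R(f)=\Clos{\lambda y.M}{R_f}$ is typed via \rn{TyR-Fun}/\rn{TyR-RFun} by some $\p(R_f,H_f):\Delta$ with $\Delta,y\COL\tau_1\p M:\tau_2\q \Delta,y\COL\tau_1$, but the semantics threads the full heap $H$ into the body's evaluation rather than $H_f$. A frame-style argument is therefore needed to justify applying the induction hypothesis: the splitting lemma separates $H$ into $H_f$ and a caller-owned fragment, the induction hypothesis applies to the body against $\Delta,y\COL\tau_1$, and the heap update lemma then reinstalls the updated function-owned fragment into the caller's heap while leaving the residual untouched. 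The post-condition of \rn{T-App} is $\Gamma$ itself, so all caller ownerships must be recovered intact; for \rn{RE-App2} one must additionally carry the recursive binding handled by \rn{TyR-Env2}. Pinning down this frame property cleanly, together with the subtle bookkeeping that the big-step semantics does not garbage-collect cells released by non-sharable dereferences (so the typing judgment must tolerate the presence of such orphaned cells in $H'$, or the conclusion must be read up to such a heap projection), is where most of the technical effort will lie.
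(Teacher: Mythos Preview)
Your proposal is correct and follows essentially the same inductive strategy as the paper's proof. The paper likewise proceeds by induction on the evaluation derivation and relies on a small battery of auxiliary lemmas; your ``splitting'' and ``frame-style'' observations correspond to what the paper packages as Lemma~\ref{lem:heap-strengthening} (evaluation under a typed sub-heap $H_1$ leaves the complementary fragment $H_2$ untouched), Lemma~\ref{lem:heap-type-weakening} (the runtime typing is monotone in $R$ and $H$), and Lemma~\ref{lem:postR} (the post-environment $R'$ agrees with $R$ on $\dom(\Gamma')$), the last of which is the precise content behind your remark that ``all caller ownerships must be recovered intact''.

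One point where you are making extra work for yourself: the concern about orphaned heap cells after non-sharable dereferences is a non-issue in the paper's setup, because the base rule \rn{TyR-Env0} admits $\p(R,H):\emptyTE$ for \emph{arbitrary} $R$ and $H$, so the runtime typing judgment is by construction insensitive to garbage in the heap (and Lemma~\ref{lem:heap-type-weakening} records exactly this). No ``heap projection'' reading of the conclusion is needed. With that observation in hand, the application cases \rn{RE-App1}/\rn{RE-App2} go through exactly as you sketch: strip the caller-owned fragment via heap-strengthening, apply the induction hypothesis to the body, then reassemble using monotonicity and $R'$-agreement on $\Gamma'$.
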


We prepare some lemmas.
\begin{lemma}
  \label{lem:sharable-heap}
  If \(\p (v, H):\tau\) and \(\Sharable(\tau)\), then \(\p (v,\emptyset):\tau\).
  Similarly, if \(\p (R,H):\Gamma\) and \(|\Gamma|=0\), then
  \(\p (R,\emptyset):\Gamma\).
\end{lemma}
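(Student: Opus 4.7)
The plan is to prove the two statements by simultaneous induction on the size of the typing derivations, since the judgment on values refers recursively to the judgment on environments (through the closure rules \textsc{TyR-Fun}/\textsc{TyR-RFun}), and conversely the judgment on environments refers to the judgment on values (through \textsc{TyR-Env1}). The underlying observation is that the definition of $|\cdot|$ is set up so that $|\tau|=0$ precisely for the sharable value types (i.e., $\UNIT$, $\bty$, and $\Tfuns{\tau_1}{}{0}{\tau_2}$) together with the auxiliary recursive-function type $\Trfun{\tau_1}{\tau_1'}{\Delta}{\tau_2}$, and these are exactly the types whose inhabitants can be typed without any heap support.

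For the first statement, I would case-split on the last rule of $\p (v,H):\tau$. The cases \textsc{TyR-Unit} and \textsc{TyR-Const} are immediate since the conclusion already uses the empty heap. The case \textsc{TyR-Ref} is vacuous because $\Tref{\rty}$ is not sharable. The interesting case is \textsc{TyR-Fun} (and analogously \textsc{TyR-RFun}) with $\tau = \Tfuns{\tau_1}{}{0}{\tau_2}$: the premise gives $\p (R,H):\Gamma$ with $|\Gamma|=0$, so the induction hypothesis on the second statement yields $\p (R,\emptyset):\Gamma$, and reapplying \textsc{TyR-Fun} (resp.\ \textsc{TyR-RFun}) gives $\p (\Clos{\lambda x.M}{R},\emptyset):\tau$.

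For the second statement, I would induct on the derivation of $\p (R,H):\Gamma$. The base case \textsc{TyR-Env0} is trivial. In \textsc{TyR-Env1}, the assumption $|\Gamma,x\COL\tau|=0$ forces both $|\Gamma|=0$ and $|\tau|=0$; since $\tau$ is necessarily a normal type introduced by one of the value rules (not a $\Trfun{\cdots}{\Delta}{\cdots}$), $|\tau|=0$ means $\tau$ is sharable, so I can invoke the induction hypothesis on the first statement to get $\p (v,\emptyset):\tau$ and the induction hypothesis on the smaller environment derivation to get $\p (R_0,\emptyset):\Gamma_0$, and then reassemble with \textsc{TyR-Env1}. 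In \textsc{TyR-Env2}, the added binding has type $\Trfun{\cdots}{\Gamma}{\cdots}$, whose size is $0$, so $|\Gamma|$ at the premise is already $0$; the induction hypothesis gives $\p (R,\emptyset):\Gamma$ and reapplying \textsc{TyR-Env2} concludes.

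The main obstacle I anticipate is only notational: setting up the mutual induction cleanly and verifying that every case where the heap $H$ is split or extended ($H\uplus H'$ in \textsc{TyR-Env1}, $H\set{\ell\mapsto v}$ in \textsc{TyR-Ref}) is either excluded by sharability or absorbs into the empty heap under the induction hypothesis. No deep argument is required beyond checking the definitions of $\Sharable$ and $|\cdot|$ against each typing rule.
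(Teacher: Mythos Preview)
Your proposal is correct and follows exactly the approach the paper takes: a simultaneous (mutual) induction on the derivations of \(\p (v,H):\tau\) and \(\p (R,H):\Gamma\). The paper's own proof is a single sentence stating precisely this induction, so your expanded case analysis is a faithful elaboration of it.
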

\begin{proof}
  This follows by simultaneous induction on the derivations of
  \(\p (v, H):\tau\) and \(\p (R,\emptyset):\Gamma\).
\end{proof}

\begin{lemma}
  \label{lem:postTE}
  If \(\Gamma \p M:\tau\q \Gamma'\), then \(\Gamma'\subseteq \Gamma\).
\end{lemma}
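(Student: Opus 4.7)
The plan is to prove Lemma~\ref{lem:postTE} by straightforward induction on the derivation of $\Gamma \p M : \tau \q \Gamma'$, inspecting each typing rule in Figure~\ref{fig:typing-BPCFRL} and checking that the post-type environment is a subsequence of the pre-type environment. Most cases are immediate from the shape of the rule. In \textsc{T-Fail}, \textsc{T-Unit}, \textsc{T-Const}, \textsc{T-Op}, \textsc{T-App}, and \textsc{T-RApp}, we have $\Gamma' = \Gamma$. In \textsc{T-Var}, \textsc{T-Deref}, \textsc{T-Mkref}, and \textsc{T-Assign}, the post-type environment is obtained via $\Gamma\exc{\cdot}$ or $\Gamma\excr{\cdot}{\rho}$, both of which produce a subsequence of $\Gamma$ by direct inspection of their definitions. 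The \textsc{T-If} case follows immediately from the induction hypothesis applied to either branch.

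The two interesting cases are \textsc{T-Let} and \textsc{T-Fun}/\textsc{T-RFun}. For \textsc{T-Let}, we have $\Gamma \p M_1 : \tau' \q \Gamma''$ and $\Gamma'', x\COL\tau' \p M_2 : \tau \q \Gamma'$, concluding $\Gamma \p \letexp{x}{M_1}{M_2} : \tau \q \Gamma'\setminus x$. By the induction hypothesis, $\Gamma'' \subseteq \Gamma$ and $\Gamma' \subseteq \Gamma'', x\COL\tau'$. Removing $x$ from both sides of the latter (noting that $x$ is fresh for $\Gamma''$ by the well-formedness convention that type environments list mutually distinct variables) gives $\Gamma' \setminus x \subseteq \Gamma''$, and transitivity with $\Gamma'' \subseteq \Gamma$ yields the result. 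For \textsc{T-Fun} and \textsc{T-RFun}, the conclusion's post-type environment is $\Gamma'$ where $(\Delta, \Gamma') = \splitTE{\Gamma}{\FV(\cdot)}$; from the definition of $\texttt{splitTE}$ by induction on $\Gamma$, the second component $\Gamma'$ is always a subsequence of $\Gamma$ (each binding of $\Gamma$ appears in $\Gamma'$, in $\Delta$, or in both when sharable, while preserving order).

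The only subtlety — which is not really an obstacle, but worth making precise — is fixing what $\subseteq$ means on type environments (viewed as sequences). Following the convention stated before Figure~\ref{fig:typing-BPCFRL}, $\Gamma_1 \subseteq \Gamma_2$ means that $\Gamma_1$ is a subsequence of $\Gamma_2$, i.e., it is obtained by deleting zero or more bindings while preserving order. With this definition, all the operations $\Gamma\exc{x}$, $\Gamma\excr{x}{\rho}$, $\Gamma\setminus x$, and the second projection of $\splitTE{\Gamma}{S}$ manifestly return subsequences of $\Gamma$, and the subsequence relation is transitive, so the induction goes through uniformly. No case requires more than a couple of lines of unfolding definitions.
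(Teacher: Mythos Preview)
Your proof is correct and follows exactly the same approach as the paper's own proof, which simply reads ``This follows by straightforward induction on the derivation of \(\Gamma \p M:\tau\q \Gamma'\).'' Your case analysis is the natural unfolding of that one-line remark, and your clarification of \(\subseteq\) as the subsequence relation is consistent with the paper's conventions.
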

\begin{proof}
  This follows by straightforward induction on the derivation of \(\Gamma \p M:\tau\q \Gamma'\).
\end{proof}

\begin{lemma}
  \label{lem:postR}
  If \(\p (R,H):\Gamma\), \(\Gamma\p M:\tau\q \Gamma'\) and
  \((M,R,H)\eval (v,R',H')\), then
  \(R\proj{\dom(\Gamma')}=R'\proj{\dom(\Gamma')}\).
\end{lemma}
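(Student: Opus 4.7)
The plan is to prove the lemma by induction on the derivation of the big-step evaluation $(M,R,H)\eval (v,R',H')$. (Equivalently one could induct on the structure of $M$ and then perform case analysis on the typing rule that typed $M$.) For each case, the goal is to check that the only variables on which $R$ and $R'$ differ are variables not in $\dom(\Gamma')$, which is exactly what the typing rules intend by removing a binding from the post-environment whenever the evaluation consumes or rebinds it.

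The base cases (\rn{RE-Const}, \rn{RE-Unit}, \rn{RE-Op}) are trivial since $R' = R$ and $\Gamma' = \Gamma$. For \rn{RE-Var}, \rn{RE-Deref}, \rn{RE-Mkref}, and \rn{RE-Assign}, the environment is modified by an $\exc{x^\tau}$ operation that removes the binding on $x$ exactly when $\neg\Sharable(\tau)$, and the corresponding typing rule removes the same binding from $\Gamma'$ in precisely the same situations. Thus $R$ and $R'$ agree on $\dom(\Gamma')$ by inspection. For \rn{RE-Fun} and \rn{RE-RFun}, the splitting $\splitEnv{R}{\FV(\lambda x.M)}$ (resp.\ $\FV(\fixexp{f}{x}{M})$) returns as its second component an environment $R_2$ that keeps exactly those bindings not captured by the closure (plus sharable ones), which coincides with $\dom(\Gamma')$ for $(\Delta,\Gamma')=\splitTE{\Gamma}{\cdot}$ in the typing rule. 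For \rn{RE-App1} and \rn{RE-App2}, the returned environment component is literally $R$ and the post-environment is $\Gamma$, so the claim is immediate.

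The only case requiring some care is \rn{RE-Let} (and the analogous conditional cases). For $\letexp{x}{M_1}{M_2}$, the typing derivation gives $\Gamma\p M_1:\tau'\q \Gamma''$ and $\Gamma'',x\COL\tau'\p M_2:\tau\q \Gamma'$, with overall post-environment $\Gamma'\setminus x$; the evaluation gives $(M_1,R,H)\eval(v_1,R_1,H_1)$ and $(M_2,R_1\set{x\mapsto v_1},H_1)\eval(v_2,R_2,H_2)$, with overall result $R_2\setminus x$. By Lemma~\ref{lem:postTE}, $\Gamma'\subseteq \Gamma'',x\COL\tau'$, so $\dom(\Gamma')\setminus\{x\}\subseteq \dom(\Gamma'')$. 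Applying the induction hypothesis to the first subderivation and restricting further to $\dom(\Gamma')\setminus\{x\}$ yields $R\proj{\dom(\Gamma')\setminus\{x\}}=R_1\proj{\dom(\Gamma')\setminus\{x\}}$. Because $x$ is fresh in $R_1$ (after implicit $\alpha$-renaming), $R_1$ and $R_1\set{x\mapsto v_1}$ agree off $x$; applying the induction hypothesis to the second subderivation and restricting to $\dom(\Gamma')\setminus\{x\}$ gives $(R_1\set{x\mapsto v_1})\proj{\dom(\Gamma')\setminus\{x\}}=R_2\proj{\dom(\Gamma')\setminus\{x\}} = (R_2\setminus x)\proj{\dom(\Gamma')\setminus\{x\}}$. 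Composing the two equalities yields the required identity.

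The main obstacle will be bookkeeping in the \rn{RE-Let} case: one must justify that the local binding $x$ introduced by the let does not interfere with the restriction to $\dom(\Gamma')\setminus\{x\}$, and that $\Gamma'\setminus x$ is indeed a subenvironment of $\Gamma''$ so that the first induction hypothesis applies. Apart from this, the proof is a routine case analysis driven by the syntactic correspondence between how each typing rule updates $\Gamma$ and how the matching evaluation rule updates $R$.
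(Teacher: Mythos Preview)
Your proposal is correct and follows essentially the same approach as the paper: induction on the derivation of \((M,R,H)\eval (v,R',H')\), with the \rn{RE-Let} case being the only one requiring Lemma~\ref{lem:postTE} and a chaining of two induction hypotheses. Your write-up is in fact more careful than the paper's own proof, which only spells out the \rn{RE-Let} case and contains a typo (it cites ``Lemma~\ref{lem:postR}'' where Lemma~\ref{lem:postTE} is meant).
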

\begin{proof}
  The proof proceeds by induction on the derivation of
  \((M,R,H)\eval (v,R',H')\), with case analysis on the last rule used.
  We discuss only the case for \rn{RE-Let}, as the other cases are straightforward.
  If \((M,R,H)\eval (v,R',H')\) is derived by using \rn{RE-Let}, we have:
  \begin{align*}
    &  M=\letexp{x}{M_1}{M_2}\\&
    (M_1,R,H)\eval (v_1,R'',H'')\qquad
    (M_2,R''\set{x\mapsto v_1}, H'')\eval (v,R',H')\\&
    \Gamma\p M_1:\tau_1\q \Gamma_1 \qquad
    \Gamma_1,x\COL\tau_1\p M_2:\tau_2\q \Gamma_2\qquad \Gamma'=\Gamma_2\setminus{x}.
  \end{align*}
  By the induction hypothesis, we have
  \begin{align*}
    & R''\proj{\dom(\Gamma_1)}=R\proj{\dom(\Gamma_1)}
    \qquad R''\proj{\dom(\Gamma')} = R'\proj{\dom(\Gamma')}.
  \end{align*}
  By Lemma~\ref{lem:postR}, we have \(\Gamma'\subseteq \Gamma_1\subseteq \Gamma\). Thus, we have \(R'\proj{\dom(\Gamma')}=R''\proj{\dom(\Gamma')} =R\proj{\dom(\Gamma')}\) as required.
\end{proof}
The following two lemmas means that the evaluation of a term is independent of
the ``irrelevant'' part of a heap.
\begin{lemma}
  \label{lem:heap-strengthening}
  Suppose \(\p (R,H_1):\Gamma\) and \(\Gamma \p M:\tau\q \Gamma'\),
  with \(H=H_1\uplus H_2\). If \((M,R,H)\eval (v,R',H')\),
  then \((M,R,H_1)\eval (v,R',H_1')\) and \(H'=H'_1\uplus H_2\) for some \(H_1'\).
\end{lemma}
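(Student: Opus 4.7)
The plan is to proceed by induction on the derivation of $(M, R, H)\eval (v, R', H')$, with case analysis on the last evaluation rule. The guiding intuition is that the hypothesis $\p (R, H_1):\Gamma$ constrains the locations reachable from $R$ to lie in $\dom(H_1)$, so each heap access performed by $M$ is confined to $H_1$ (and fresh allocations can be chosen outside $\dom(H_2)$). We accordingly take $H_1'$ to be the part of $H'$ lying outside $\dom(H_2)$ and verify case by case that the decomposition and strengthened derivation are consistent.

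The base and closure-formation cases (\rn{RE-Fail}, \rn{RE-Var}, \rn{RE-Const}, \rn{RE-Op}, \rn{RE-Fun}, \rn{RE-RFun}) give $H' = H$, so $H_1' = H_1$ works. For \rn{RE-Deref} and \rn{RE-Assign}, inversion of $\p(R, H_1):\Gamma$ at $\Gamma(x) = \Tref{\rty}$ forces $R(x) \in \dom(H_1)$, so the read or update touches only $H_1$. For \rn{RE-Mkref}, freshness of $\ell$ in $H_1 \uplus H_2$ already implies freshness in $H_1$, and we take $H_1' = H_1\{\ell \mapsto R(x)\}$. The \rn{RE-If} cases reduce to one application of the induction hypothesis on the selected branch, while the \rn{RE-Let} case applies the induction hypothesis to the subderivation for $M_1$, invokes Lemma~\ref{lem:type-preservation} to re-establish well-typedness of the intermediate state $\p(R_1\{x\mapsto v_1\}, H_1''):(\Gamma_1, x\COL\tau_1)$, and then applies the induction hypothesis to $M_2$; Lemma~\ref{lem:postR} is used where needed to line up the post-environment with the post-type environment.

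The main obstacle is the application rules \rn{RE-App1} and \rn{RE-App2}, because the body $M_1$ of the callee is evaluated under the closure's own environment $R_f$ rather than $R$. From $\p(R, H_1):\Gamma$ with $\Gamma(f) = \Tfuns{\tau_1}{}{m}{\tau_2}$ and $\Gamma(x) = \tau_1$, inversion on \rn{TyR-Env1} together with \rn{TyR-Fun} (respectively \rn{TyR-RFun} and \rn{TyR-Env2} for the recursive case) yields a partition $H_1 = H_f \uplus H_x \uplus H_{\mathit{other}}$ with $\p(R_f\{y\mapsto R(x)\}, H_f \uplus H_x):(\Gamma_f, y\COL\tau_1)$ and $(\Gamma_f, y\COL\tau_1) \p M_1:\tau_2 \q (\Gamma_f, y\COL\tau_1)$. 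Applying the induction hypothesis to the inner derivation with relevant part $H_f \uplus H_x$ and irrelevant part $H_{\mathit{other}} \uplus H_2$ strengthens the inner evaluation to use only $H_f \uplus H_x$; reassembling the outer evaluation on $H_1$ then requires the dual observation that enlarging the heap by a disjoint piece preserves evaluation (heap weakening). We therefore plan to establish strengthening and weakening together, either by a single simultaneous induction on an appropriately strengthened invariant or by proving a short companion weakening lemma first, after which the \rn{RE-App} cases close routinely with $H_1' := H_{fx}' \uplus H_{\mathit{other}}$ and $H' = H_1' \uplus H_2$ following from disjointness.
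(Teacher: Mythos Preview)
Your overall plan matches the paper's: both proceed by induction on the evaluation derivation (the paper's proof is literally the one line ``by induction on the derivation of $(M,R,H)\eval (v,R',H')$''). Your case analysis is more explicit than anything the paper provides, and the treatment of the base cases, \rn{RE-Deref}/\rn{RE-Assign}/\rn{RE-Mkref}, and \rn{RE-If} is fine.

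There is, however, a genuine circularity in your handling of \rn{RE-Let}. You write that after applying the induction hypothesis to $M_1$ you ``invoke Lemma~\ref{lem:type-preservation} to re-establish well-typedness of the intermediate state.'' But in the paper's development, type preservation is proved \emph{after} heap strengthening and \emph{uses} it (see the \rn{RE-App1} case of the proof of Lemma~\ref{lem:type-preservation}, which explicitly applies Lemma~\ref{lem:heap-strengthening}). So you cannot appeal to type preservation here without a cycle. The issue is real: to invoke the induction hypothesis on the $M_2$ subderivation you need $\p (R_1\{x\mapsto v_1\},H_1^{(1)}):(\Gamma_1,x\COL\tau')$, and nothing short of type preservation for the (strengthened) $M_1$ evaluation gives you that.

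The clean repair is exactly the kind of move you already contemplate for \rn{RE-App}: strengthen the induction hypothesis so that it simultaneously yields the heap-strengthened evaluation \emph{and} the typing of the resulting state, i.e.\ prove Lemmas~\ref{lem:heap-strengthening} and~\ref{lem:type-preservation} together by one mutual induction. With that combined invariant, \rn{RE-Let} goes through directly. Incidentally, for \rn{RE-App1}/\rn{RE-App2} you can avoid the detour through heap weakening: using Lemma~\ref{lem:heap-type-weakening} (whose proof is independent of everything else), from $\p(R_f\{y\mapsto R(x)\},H_f\uplus H_x):(\Gamma_f,y\COL\tau_1)$ you get $\p(R_f\{y\mapsto R(x)\},H_1):(\Gamma_f,y\COL\tau_1)$, so the induction hypothesis applies directly with the split $H=H_1\uplus H_2$ and no reassembly step is needed.
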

\begin{proof}
  This follows by induction on the derivation of \((M,R,H)\eval (v,R',H')\).
\end{proof}
\begin{lemma}
  \label{lem:heap-weakening}
  If 
  \((M,R,H_1)\eval (v,R',H_1')\) and \(\dom(H_2)\cap \dom(H_1')=\emptyset\),
  then \((M,R,H_1\uplus H_2)\eval (v,R',H_1'\uplus H_2)\).
\end{lemma}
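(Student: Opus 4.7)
The plan is to proceed by induction on the derivation of \((M,R,H_1)\eval (v,R',H_1')\), with a case analysis on the last rule applied. First I would dispense with all cases that neither allocate, read, nor write the heap, namely \rn{RE-Fail}, \rn{RE-Var}, \rn{RE-Const}, \rn{RE-Op}, \rn{RE-Fun}, and \rn{RE-RFun}, since the derivation can simply be replayed with \(H_1\) replaced by \(H_1\uplus H_2\). The cases \rn{RE-Deref} and \rn{RE-Assign} are equally direct: their only interaction with the heap is at a location in \(\dom(H_1)\), which by disjointness lies outside \(\dom(H_2)\), so the \(H_2\)-part is untouched and the conclusion follows.

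For the compositional rules (\rn{RE-Let}, \rn{RE-IfT}, \rn{RE-IfF}, \rn{RE-App1}, \rn{RE-App2}), I would apply the induction hypothesis to each sub-evaluation in turn. The subtle point is that the disjointness condition \(\dom(H_2)\cap\dom(H_1')=\emptyset\) must carry over to every intermediate heap. To handle this I would first establish, as an auxiliary monotonicity lemma, that whenever \((M,R,H)\eval (v,R',H')\) we have \(\dom(H)\subseteq\dom(H')\); this is a straightforward induction on the evaluation derivation, since no rule deletes heap bindings. Given monotonicity, for \rn{RE-Let} with intermediate heap \(H''\) satisfying \(\dom(H_1)\subseteq\dom(H'')\subseteq\dom(H_1')\), the disjointness hypothesis on \(H_1'\) immediately implies the corresponding disjointness on \(H''\), so the two applications of the IH compose to yield the conclusion.

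The main obstacle is \rn{RE-Mkref}, which requires producing a fresh location. In the given derivation \(\ell\) is fresh only with respect to \(H_1\), whereas in the concluded judgment it must be fresh with respect to \(H_1\uplus H_2\). I would resolve this by exploiting the freedom in the freshness side condition: since the set of locations is denumerable, I can choose a location \(\ell'\) that is fresh with respect to \(H_1\uplus H_2\) and replay the derivation with \(\ell'\) consistently substituted for \(\ell\) (including in \(v\), \(R'\), and \(H_1'\)). This is essentially an alpha-renaming of location names, standard in operational accounts of allocation, but it is the one place where the statement does not follow by direct textual replay of the original derivation.
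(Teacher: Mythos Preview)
Your overall plan---induction on the evaluation derivation, with the monotonicity observation \(\dom(H)\subseteq\dom(H')\) to propagate the disjointness hypothesis through the compositional rules---is correct and is exactly what the paper does (it records only ``straightforward induction'').

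However, your treatment of \rn{RE-Mkref} contains a real gap. You propose to pick a new location \(\ell'\) fresh for \(H_1\uplus H_2\) and substitute it for \(\ell\) ``including in \(v\), \(R'\), and \(H_1'\)''. But the lemma as stated requires the conclusion to hold for the \emph{same} \(v\), \(R'\), and \(H_1'\); alpha-renaming the allocated location would produce a different triple and therefore does not prove the statement. No renaming is needed: in the \rn{RE-Mkref} case we have \(v=\ell\) and \(H_1'=H_1\{\ell\mapsto R(x)\}\), so \(\ell\in\dom(H_1')\). The hypothesis \(\dom(H_2)\cap\dom(H_1')=\emptyset\) then immediately gives \(\ell\notin\dom(H_2)\), and since \(\ell\) was already fresh for \(H_1\), it is fresh for \(H_1\uplus H_2\). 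Hence the same rule instance applies verbatim with the extended heap, yielding \((\mkref{x},R,H_1\uplus H_2)\eval(\ell,R',H_1'\uplus H_2)\). Once you make this observation, the case is as direct as the others.
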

\begin{proof}
  This follows by straightforward induction on the derivation of \((M,R,H_1)\eval (v,R',H_1')\).
\end{proof}
\begin{lemma}
  \label{lem:heap-type-weakening}
  If \((R,H):\Gamma\) with \(R\subseteq R'\) and \(H\subseteq H'\), then
  \((R',H'):\Gamma\).
\end{lemma}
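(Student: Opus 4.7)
The plan is to proceed by induction on the derivation of \(\p (R,H):\Gamma\), analysing the three environment-typing rules \textsc{TyR-Env0}, \textsc{TyR-Env1}, and \textsc{TyR-Env2}. The key observation is that although these rules compose heaps tightly via disjoint union \(\uplus\), any surplus heap beyond what \(\Gamma\) demands can always be pushed down through the \emph{left} (environment) premise, and is eventually absorbed at the base case \textsc{TyR-Env0}, whose conclusion holds for an arbitrary pair \(R,H\). Surplus variable bindings in \(R'\setminus R\) are handled the same way, since the same base case imposes no constraint on the domain of \(R\) either.

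For the \textsc{TyR-Env1} case I would write \(\Gamma=\Gamma_0,x\COL\tau\), \(R=R_0\cup\set{x\mapsto v}\) and \(H=H_0\uplus H_1\), with premises \(\p (R_0,H_0):\Gamma_0\) and \(\p (v,H_1):\tau\). Given \(R'\supseteq R\) and \(H'\supseteq H\), I set \(R_0'=R'\setminus\set{x}\) and \(H_{\mathit{extra}}=H'\setminus(H_0\uplus H_1)\); then \(R_0'\supseteq R_0\) and \(H_0\uplus H_{\mathit{extra}}\supseteq H_0\), with \(H_{\mathit{extra}}\) disjoint from both \(H_0\) and \(H_1\). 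The induction hypothesis applied to the environment premise yields \(\p (R_0',H_0\uplus H_{\mathit{extra}}):\Gamma_0\), and re-applying \textsc{TyR-Env1} with the \emph{unchanged} value premise \(\p (v,H_1):\tau\) produces \(\p (R',H'):\Gamma\) as required. The \textsc{TyR-Env2} case is structurally identical; the extra side condition \(R_f\proj{\dom(\Gamma_0)}=R_0'\proj{\dom(\Gamma_0)}\) carries over because this agreement concerns only variables in \(\dom(\Gamma_0)\subseteq\dom(R_0)\), on which \(R_0\) and \(R_0'\) coincide.

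The main subtlety, and the reason the induction should be formulated only on the environment judgment and not simultaneously on the value judgment, is that the value-typing judgment \(\p (v,H):\tau\) does \emph{not} admit heap weakening: the rules \textsc{TyR-Unit}, \textsc{TyR-Const} and \textsc{TyR-Ref} each pin the heap down exactly, and there is no analogous base case that swallows an unrestricted heap. I therefore have to route the surplus heap \(H_{\mathit{extra}}\) exclusively through the environment-typing premise at every inductive step, leaving the value-typing subderivations intact. Once this discipline is maintained, no further obstacles arise.
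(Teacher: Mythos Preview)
Your approach is correct and is the same as the paper's, which also proceeds by straightforward induction on the derivation of $\p (R,H):\Gamma$; your observation that the surplus heap must be routed through the environment premise (to be absorbed at \textsc{TyR-Env0}) rather than the value premise is the right one.

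One minor slip: in the \textsc{TyR-Env1} case you set $R_0'=R'\setminus\set{x}$, but the rule does not forbid $x\in\dom(R_0)$ (the base case \textsc{TyR-Env0} accepts arbitrary $R$), in which case $R_0\not\subseteq R_0'$. The fix is trivial: take $R_0'=(R'\setminus\set{x})\cup R_0\proj{\set{x}}$, or simply observe that since $x\notin\dom(\Gamma_0)$ you may assign $R_0'(x)$ freely while still having $R_0'\set{x\mapsto v}=R'$.
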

\begin{proof}
  This follows by straightforward induction on the derivation of
  \((R,H):\Gamma\).
\end{proof}
We are now ready to prove Lemma~\ref{lem:type-preservation}.
\begin{proof}[Proof of Lemma~\ref{lem:type-preservation}]
  This follows by induction on the derivation of \((M,R,H)\eval (v,R',H')\), with case analysis
  on the last rule used.
  \begin{itemize}
  \item Case \rn{RE-Var}:
    In this case, we have:
    \begin{align*}
      M=y^\tau\qquad R'=R\exc{y^\tau}\qquad v=R(y^\tau)\qquad H'=H\qquad \Gamma'=\Gamma\exc{y}.
    \end{align*}
    By the assumption \(\p (R,H):\Gamma\), we have:
    \begin{align*}
      \p (R(y), H_1):\tau\qquad \p (R\setminus{y},H_2):\Gamma\setminus{x}
      \qquad H=H_1\uplus H_2.
    \end{align*}
    If \(\Sharable(\tau)\), then
    we may assume \(H_1=\emptyset\) by Lemma~\ref{lem:sharable-heap}.
    Thus, by \(\p (R,H):\Gamma\) and \(\p (v,\emptyset):\tau\),
    we have \(\p (R'\set{x\mapsto v},H'):(\Gamma',x\COL\tau)\).
    If \(\neg\Sharable(\tau)\), then
    by \(\p (R(y), H_1):\tau\) and \(\p (R\setminus{y},H_2):\Gamma\setminus{x}\),
    we have \(\p (R'\set{x\mapsto v},H'):(\Gamma',x\COL\tau)\) as required.
  \item Case \rn{RE-Const}: The result follows immediately, as \(M=v=c\),
    \(R'=R\), \(H'=H\), and \(\Gamma'=\Gamma\).
  \item Case \rn{RE-Op}: The result follows immediately, as \(R'=R\), \(H'=H\),
    \(\tau=\bty\), \(\Gamma'=\Gamma\), and 
    \(v=\sem{\Op{k}}(R(x_1),\ldots,R(x_k))\) is a constant of type \(\bty\).
  \item Case \rn{RE-Deref}:
    In this case, we have:
    \begin{align*}
      & M=!y\qquad H(R(x))=v\qquad \Gamma(y)=\Tref{\tau}\\&
R'=R\excr{y}{\tau}\qquad H'=H\qquad \Gamma'=\Gamma\excr{y}{\tau}.
    \end{align*}
    If \(\Sharable(\tau)\) (i.e. $\tau=\bty$), then 
    \(\p (R'\set{x\mapsto v},H'):(\Gamma',x\COL\tau)\) follows immediately from
    \(\p (R,H):\Gamma\) and \((v,\emptyset):\tau\).
    
    Suppose \(\neg\Sharable(\tau)\).
    By the condition \(\p (R,H):\Gamma\) with \(\Gamma(y)=\Tref{\tau}\),
    we also have
    \begin{align*}
      \p (v,H_1):\tau \qquad \p (R\setminus{x},H_2):\Gamma\setminus y\qquad
      H_1\uplus H_2=H.
    \end{align*}
    we have \(\p (R'\set{x\mapsto v},H_1\uplus H_2):(\Gamma',x\COL\tau)\),
    i.e., \(\p (R'\set{x\mapsto v},H'):(\Gamma',x\COL\tau)\) as required.
  \item Case \rn{RE-Mkref}:
    In this case, we have:
    \begin{align*}
    &  M=\mkref{y}\qquad v=\ell\qquad \tau=\Tref{\rty}
      \qquad R'=R\exc{y^\rty}\\& H'=H\set{\ell\mapsto R(y)}
      \qquad \Gamma'=\Gamma\exc{y}.
    \end{align*}
    If \(\Sharable(\rty)\) (i.e., \(\rty=\bty\)), then 
    by \(\p (\ell, \set{\ell\mapsto R(y)}):\Tref{\bty}\) and \(\p (R',H):\Gamma'\),
    we have \(\p (R'\set{x\mapsto \ell},H'):(\Gamma',x\COL\Tref{\rty})\)
    as required.
    
    Suppose \(\neg\Sharable(\rty)\). By \(\p (R,H):\Gamma\) and \(\Gamma(y)=\rty\),
    we have:
    \begin{align*}
      \p (R(y), H_1):\rty \qquad \p (R\setminus{y},H_2):\Gamma\setminus{y}
      \qquad H=H_1\uplus H_2.
    \end{align*}
    By the condition \(\p (R(y), H_1):\rty\), we have
    \(\p (\ell, H_1\set{\ell\mapsto R(y)}):\Tref{\rty}\).
    Therefore, we have:
    \(\p (R\setminus{y}{x\mapsto \ell}, H_1\set{\ell\mapsto R(y)}\uplus H_2):
    \Gamma',x\COL\Tref{\rty}\), i.e.,
    \(\p (R'\set{x\mapsto \ell},H'):(\Gamma',x\COL\tau)\) as required.
  \item Case \rn{RE-Assign}:
    In this case, we have:
    \begin{align*}
     & M=y:=z\qquad v=\Vunit\qquad \tau=\UNIT
      \qquad R'=R\exc{z}\qquad H'=H\set{R(y)\mapsto R(z)}\\&
      \Gamma'=\Gamma\exc{z}\qquad \Gamma(y)=\Tref{\rty}\qquad \Gamma(z)=\rty.
    \end{align*}
    By the assumption \(\p (R,H):\Gamma\), we have:
    \begin{align*}
      \p (R\setminus{y,z}, H_0):\Gamma\setminus{y,z}\qquad
      \p (R(y), H_y):\Tref{\rty}\qquad
      \p (R(z), H_z):\rty.
\end{align*}
    The last condition implies
    \(\p (R(y),H_z\set{R(y)\mapsto R(z)}):\Tref{\rty}\).
    Thus, we have
    \(\p (R\setminus{z}, H_0\uplus H_z\set{R(y)\mapsto R(z)}): \Gamma\setminus{z}\).
    If \(\neg\Sharable(\rty)\), then 
    \(\p (R', H'):\Gamma'\) follows by Lemma~\ref{lem:heap-type-weakening}.
    Thus, we have \(\p (R'\set{x\mapsto v},H'):\Gamma',x\COL\tau\) as required.
    
    If \(\Sharable(\rty)\), i.e., \(\rty=\bty\), then 
    from \(\p (R\setminus{z}, H_0\uplus H_z\set{R(y)\mapsto R(z)}): \Gamma\setminus{z}\), we obtain \(\p (R, H_0\uplus H_z\set{R(y)\mapsto R(z)}):
    \Gamma\).
    Thus, by Lemma~\ref{lem:heap-type-weakening},
    we have \(\p (R'\set{x\mapsto v},H'):\Gamma',x\COL\tau\) as required.
    \item Case \rn{RE-Let}:
    In this case, we have:
    \begin{align*}
      & M= \letexp{y}{M_1}{M_2} \qquad 
      (M_1,R,H)\eval (v_1,R_1,H_1)\\&
     (M_2,R_1\set{y\mapsto v_1},H_1)\eval (v,R_2,H')
      \qquad R'=R_2\setminus{y}\\&
      \Gamma\p M_1:\tau'\q \Gamma_1\qquad \Gamma_1,y\COL\tau'\p M_2:\tau\q\Gamma_2\qquad \Gamma'=\Gamma_2\setminus y.
    \end{align*}
    By applying the induction hypothesis to \((M_1,R,H)\eval (v_1,R_1,H_1)\),
    we obtain \(\p (R_1\set{y\mapsto v_1},H_1):(\Gamma_1,y\COL\tau_1)\).
    By applying the induction hypothesis to \((M_2,R_1\set{y\mapsto v_1},H_1)\eval (v,R_2,H')\), we obtain \(\p (R_2\set{x\mapsto v},H'):(\Gamma_2,x\COL\tau)\),
    which implies \(\p (R'\set{x\mapsto v},H'):(\Gamma',x\COL\tau)\), as required.
    \item Case \rn{RE-IfT}:
    In this case, we have:
    \begin{align*}
      & M= \ifexp{y}{M_1}{M_2}\qquad
      R(y)=\TRUE\qquad (M_1,R,H)\eval (v,R',H')\\&
      \Gamma(y)=\bty\qquad \Gamma\p M_1:\tau\q \Gamma'.
    \end{align*}
    By applying the induction hypothesis to \((M_1,R,H)\eval (v,R',H')\),
    we obtain \((R'\set{x\mapsto v},H'):(\Gamma',x\COL\tau)\) as required.
  \item Case \rn{RE-IfF}: Similar to the case \rn{RE-IfT} above.
  \item Case \rn{RE-Fun}: 
    In this case, we have:
    \begin{align*}
      & M= \lambda y.M_1\qquad (R_1,R')=\splitEnv{R}{\FV(M)} \qquad
      v=\Clos{\lambda y.M_1}{R_1}\qquad H'=H\\&
      (\Gamma_1,\Gamma')=\splitTE{\Gamma}{\FV(M)}\qquad
      \Gamma_1,y\COL\tau_1\p M_1:\tau_2\q \Gamma_1,y\COL\tau_1\qquad
      \tau=\Tfuns{\tau_1}{\tau_1'}{|\Gamma_1|}{\tau_2}.
    \end{align*}
    By the conditons \(\p (R,H):\Gamma\),
    \((R_1,R')=\splitEnv{R}{\FV(M)}\) and
    \((\Gamma_1,\Gamma')=\splitTE{\Gamma}{\FV(M)}\), we have:
    \begin{align*}
    \p (R_1,H_1):\Gamma_1\qquad \p (R',H_2):\Gamma'\qquad (H'=)H = H_1\uplus H_2
    \end{align*}
    for some \(H_1\) and \(H_2\).
    By \(\p (R_1,H_1):\Gamma_1\) and
    \(\Gamma_1,y\COL\tau_1\p M_1:\tau_2\q \Gamma_1,y\COL\tau_1\),
    we have
    \(\p (v, H_1):\tau\).
    Thus, we have \(\p (R'\set{x\mapsto v},H'):(\Gamma',x\COL\tau)\) as required.
  \item Case \rn{RE-RFun}: 
    In this case, we have:
    \begin{align*}
      & M= \fixexp{f}{y}{M_1}\qquad (R_1,R')=\splitEnv{R}{\FV(M)} \qquad
      v=\Clos{M}{R_1}\qquad H'=H\\&
      (\Gamma_1,\Gamma')=\splitTE{\Gamma}{\FV(M)}\qquad
      \Gamma_1,f\COL\Trfun{\tau_1}{\tau_1'}{\Gamma_1}{\tau_2},
      y\COL\tau_1\p M_1:\tau_2\q \Gamma_1,f\COL\Trfun{\tau_1}{\tau_1'}{\Gamma_1}{\tau_2}, y\COL\tau_1\\&
      \tau=\Tfuns{\tau_1}{\tau_1'}{|\Gamma_1|}{\tau_2}.
    \end{align*}
    By the assumption \(\p (R,H):\Gamma\), we have:
    \begin{align*}
      \p (R_1,H_1):\Gamma_1 \qquad \p (R',H_2):\Gamma'\qquad H'=H_1\uplus H_2.
    \end{align*}
    By \(\p (R_1,H_1):\Gamma_1\) and the typing of \(M_1\),
    we have \(\p (v, H_1):\tau\). Thus, we have
    \(\p (R'\set{x\mapsto v},H'):(\Gamma',x\COL\tau)\) as required.
  \item Case \rn{RE-App1}:
    In this case, we have:
    \begin{align*}
      & M= f\,z\qquad R(f)=\Clos{\lambda y.M_1}{R_f}
      \qquad (M,R_f\set{y\mapsto R(z)},H)\eval (v,R'',H')\qquad R'=R\\&
      \Gamma(f)=\Tfuns{\tau_1}{}{m}{\tau}\qquad \Gamma(z)=\tau_1\qquad \Gamma'=\Gamma.
    \end{align*}
    By the conditions \(\p (R,H):\Gamma\) and \(R(f)=\Clos{\lambda y.M_1}{R_f}\),
    we have:
    \begin{align*}
     & \p (\Clos{\lambda y.M_1}{R_f},H_1):\Tfuns{\tau_1}{}{m}{\tau}
      \qquad \p (R\setminus\set{f,z}, H_0):\Gamma\setminus\set{f,z}
      \\& \p (R(z),H_2): \tau_1 \qquad H=H_0\uplus H_1\uplus H_2.
    \end{align*}
    Furthermore, by \(\p (\Clos{\lambda y.M_1}{R_f},H_1):\Tfuns{\tau_1}{}{m}{\tau}\),
    we have:
    \begin{align*}
      \p (R_f,H_1):\Gamma_1\qquad \Gamma_1,y\COL\tau_1\p M_1:\tau\q
      \Gamma_1,y\COL\tau_1\qquad |\Gamma_1|=m.
    \end{align*}
    Thus, we also have \(\p (R_f\set{y\mapsto R(z)}, H_1\uplus H_2):(\Gamma_1,y\COL\tau_1)\).
    By applying Lemma~\ref{lem:heap-strengthening} to
    \((M,R_f\set{y\mapsto R(z)},H)\eval (v,R'',H')\), we obtain
    \((M,R_f\set{y\mapsto R(z)},H_1\uplus H_2)\eval (v,R'',H'')\),
    with \(H'=H_0\uplus H''\).
    By the induction hypothesis and Lemma~\ref{lem:postR}, we have:
    \begin{align*}
     & \p (R''\set{x\mapsto v},H''):(\Gamma_1,y\COL\tau_1,x\COL\tau)\\&
      R''\proj{\dom(\Gamma_1)\cup\set{y}} =
      (R_f\set{y\mapsto R(z)})\proj{\dom(\Gamma_1)\cup\set{y}}.
    \end{align*}
    Thus, we also have
    \begin{align*}
     & \p (R_f, H_1'):\Gamma_1\qquad
      \p (R(z),H_2'):\tau_1\qquad
      \p (v, H_3'): \tau\\& H''=H_1'\uplus H_2'\uplus H_3'.
    \end{align*}
    The first condition also implies
    \(\p (\Clos{\lambda y.M_1}{R_f},H_1'):\Tfuns{\tau_1}{}{m}{\tau}\).
    By these conditions and \(\p (R\setminus\set{f,z}, H_0):\Gamma\setminus\set{f,z}\),
    we have:
    \begin{align*}
      \p (R\set{x\mapsto v}, H_0\uplus H_1'\uplus H_2'\uplus H_3'):(\Gamma,x\COL\tau).
    \end{align*}
    By \(H_0\uplus H_1'\uplus H_2'\uplus H_3'=H_0\uplus H''=H'\)
    with \(R'=R\) and \(\Gamma'=\Gamma\), we have
    \(\p (R'\set{x\mapsto v}, H'):(\Gamma',x\COL\tau)\) as required.
   \item Case \rn{RE-App2}: 
    In this case, we have:
    \begin{align*}
      & M= f\,z\qquad R(f)=\Clos{\fixexp{f}{y}{M_1}}{R_f}\\&
       (M_1,R_f\set{f\mapsto R(f),y\mapsto R(z)},H)\eval (v,R'',H')\qquad R'=R.
    \end{align*}
          The assumption \(\Gamma\p M:\tau\q \Gamma'\) must have been derived from either \rn{T-App} or
          \rn{T-RApp}.
          In the former case, the proof is similar to the case for \rn{RE-App} above.
          Thus, we discuss only the case where \(\Gamma\p M:\tau\q \Gamma'\) has been derived by \rn{T-RApp}. In this case, we have:
          \begin{align*}
           & \Gamma(f)=\Trfun{\tau_1}{}{\Delta}{\tau}\qquad \Gamma(z)=\tau_1\qquad
            \Gamma\setminus\set{f,z} = \Delta,\Gamma_0\\&
            \Gamma'=\Gamma.
          \end{align*}
          By \(\p (R,H):\Gamma\), we have:
          \begin{align*}
            & \p (R,H_0):\Gamma_0 \qquad (R,H_1):\Delta
            \qquad (R(z),H_2):\tau_1\qquad H=H_0\uplus H_1\uplus H_2\\
            & \Delta,f\COL\Trfun{\tau_1}{}{\Delta}{\tau},y\COL\tau_1
            \p M_1\q \Delta,f\COL\Trfun{\tau_1}{}{\Delta}{\tau},y\COL\tau_1\\&
            R_f\proj{\dom(\Delta)}=R\proj{\dom{\Delta}}.
          \end{align*}
    By Lemma~\ref{lem:heap-strengthening} with
    \(   (M_1,R_f\set{f\mapsto R(f),y\mapsto R(z)},H)\eval (v,R'',H')\) and
    \(R_f\proj{\dom(\Delta)}=R\proj{\dom{\Delta}}\),
    we obtain
    \begin{align*}
      (M_1,R\set{f\mapsto R(f),y\mapsto R(z)},H_1\uplus H_2)\eval (v,R''',H'')
      \qquad H' = H_0\uplus H''.
    \end{align*}
    Thus, by the induction hypothesis, we have
    \begin{align*}
      \p (R'''\set{x\mapsto v},H''):(\Delta,f\COL\Trfun{\tau_1}{}{\Delta}{\tau},y\COL\tau_1,x\COL\tau).
      \end{align*}
    By Lemma~\ref{lem:postR}, we also have:
    \begin{align*}
      \p (R\set{f\mapsto R(f),y\mapsto R(z)},H''):(\Delta,f\COL\Trfun{\tau_1}{}{\Delta}{\tau},y\COL\tau_1).
    \end{align*}
    Together with the condition \(\p (R,H_0):\Gamma_0\), we obtain
    \begin{align*}
      \p (R\set{x\mapsto v},H_0\uplus H''):(\Gamma,x\COL\tau),
    \end{align*}
   i.e., \(\p (R\set{x\mapsto v},H'):(\Gamma,x\COL\tau)\), as required.
  \end{itemize}
\end{proof}

We next define relations between run-time states of the source and target languages.
We introduce the relation \((M,R,H)\con_{\Gamma,\tau} (\mvE,S)\) on expressions
(where \(\Gamma\) and \(\tau\) describe the type environment and type of \(M\)),
the relation \((v, R, H)\con_{\Gamma,\tau} ((w,h),S)\) on values (where \(h\) is a tuple representing
a store), and the relation \((R,H)\con_{\Gamma} S\) on run-time environments and heaps, as defined in Figure~\ref{fig:con}.
The relation \((R,H)\con_{\Gamma} S\) is defined by induction on \(\Gamma\) (here, recall that \(\Gamma\) is a sequence,
but \(R,H\), and \(S\) are finite maps, or sets of bindings).
Here, \(\Pack{\Gamma}(S)\) describes the tuple obtained by evaluating \(\pack{\Gamma}\) under the environment \(S\).
For example, \(\Pack{x\COL\BOOL, y\COL\Tref{\BOOL},z\COL\Tref{\BOOL}}(\set{x\mapsto \TRUE, y\mapsto \FALSE, z\mapsto \TRUE})
= (\FALSE,\TRUE)\).
Similarly, \(\Unpack{\Gamma}{h}\) describes the environment obtained by evaluating \(\unpackwo{\Gamma}{h}\).
For example, \(\Unpack{x\COL\BOOL, y\COL\Tref{\BOOL},z\COL\Tref{\BOOL}}{\FALSE,\TRUE} =
\set{y\mapsto \FALSE, z\mapsto \TRUE})\). (Note that \(x\) is irrelevant, as its type is sharable; only non-sharable types
are relevant to those operations.)
In the notation of the form \(S_1\cdot S_2\) used in \rn{Con-Fun}, \rn{Con-RFun}, and \rn{Con-Val},
the domains of \(S_1\) and \(S_2\) may overlap, and
in that case, \(S_2\) overwrites the overlapped bindings of \(S_1\). For example, \(\set{x\mapsto \TRUE}\cdot \set{x\mapsto \FALSE} =
\set{x\mapsto \FALSE}\).
In the rule \rn{Con-Fun}, the condition \((R_1,H_1)\con_{\Gamma_1} S_1\cdot \Unpack{\Gamma_1}{h_1}\)
requires that the environment and the local store of the closure are consistent between the source and target.
\begin{figure}[tp]
\infrule[Con-Unit]{}
        { (\Vunit,H)\con_{\UNIT} \Vunit}
\rulesp
\infrule[Con-Const]{}
        { (c^\bty,H)\con_{\bty} c^\bty}
\rulesp
        
\infrule[Con-Ref]{(v,H)\con_{\rty} w}
        { (\ell,H\set{\ell\mapsto v})\con_{\Tref{\rty}}
          w}
\rulesp
        
\infrule[Con-Fun]{
  \Gamma,x\COL\tau_1\p M:{\tau_2}\q \Gamma,x\COL\tau_1\Tr N\\
      (R,H)\con_{\Gamma} S\cdot \Unpack{\Gamma}{h}\\
  N_f = \lambda (x,h).\unpack{\Gamma}{h}
  \letexp{(r,h')}{N}\unpack{\Gamma,x\COL\tau_1}{h'}(r,x,\pack{\Gamma})}
        { (\Clos{\lambda x.M}{R},H)\con_{\Tfuns{\tau_1}{}{n}{\tau_2}}
   \Clos{\Clos{N_f}{S}}{h}}
        
\rulesp
\infrule[Con-RFun]{ 
  \Gamma,f\COL\Trfun{\tau_1}{\tau_1'}{\Gamma}{\tau_2}, x\COL\tau_1
  \p M:{\tau_2}\q
  \Gamma, f\COL\Trfun{\tau_1}{\tau_1'}{\Gamma}{\tau_2}, x\COL\tau_1\Tr N\\
      (R,H)\con_{\Gamma} S\cdot \Unpack{\Gamma}{h}\\
  N_f = \fixexp{f}{(x,h)}{\unpack{\Gamma}{h}
  \letexp{(r,h')}{N}\\\qquad\qquad\qquad\qquad\qquad\qquad\unpack{\Gamma,f\COL\Trfun{\tau_1}{\tau_1'}{\Gamma}{\tau_2},x\COL\tau_1}{h'}(r,x,\pack{\Gamma})}}
        { (\Clos{\fixexp{f}{x}{M}}{R},H)\con_{\Tfuns{\tau_1}{}{n}{\tau_2}}
          \Clos{\Clos{N_f}{S}}{h}}

\rulesp

\infrule[Con-Env0]{}
        { (R,H)\con_{\emptyTE} S}
\rulesp

\infrule[Con-Env1]{(R,H_1)\con_{\Gamma} S\andalso
  (v,H_2)\con_{\tau} w\\
  \mbox{$\tau$ is not of the form $\Trfun{\tau_1}{}{\Delta}{\tau_2}$}}
        { (R\set{x\mapsto v},H_1\uplus H_2)\con_{\Gamma,x\COL\tau} S\set{x\mapsto w}}
\rulesp
\infrule[Con-Env2]{(R,H)\con_{\Gamma} S\andalso
  (\Clos{M}{R_1},H)\con_{\Tfuns{\tau_1}{}{|\Gamma|}{\tau_2}} \Clos{\Clos{\mvE}{S_1}}{\Pack{\Gamma}(S)}\andalso
 R_1\subseteq R\\ S(y)=S_1(y)\mbox{ for each $y$ such that $y\COL\tau\in \Gamma$ and $\Sharable(\tau)$}}
        { (R\set{x\mapsto \Clos{M}{R_1}},H)\con_{\Gamma,x\COL\Trfun{\tau_1}{}{\Gamma}{\tau_2}} S\set{x\mapsto \Clos{\mvE}{S_1}}}
\rulesp

\infrule[Con-Exp]{\Gamma\p M:\tau\q \Gamma'\Tr N\andalso
           (R,H)\con_{\Gamma} S}
        { (M,R,H)\con_{\Gamma,\tau} (N, S)}

\rulesp
\infrule[Con-ValState]{H=H_1\uplus H_2\andalso (v,H_1)\con_{\tau} w\andalso
    (R,H_2)\con_{\Gamma} S\cdot \Unpack{\Gamma}{h}}
        {(v,R,H)\con_{\Gamma,\tau} ((w,h),S)}
\caption{Correspondence of run-time states.}
\label{fig:con}
\end{figure}

We prepare the following lemma.
\begin{lemma}
  \label{lem:con}
  \begin{itemize}
   \item If \((v,H)\con_{\tau}w\) and \(\Sharable(\tau)\), then \((v,\emptyset)\con_{\tau}w\).
   \item If \((R,H)\con_{\Gamma} S\) with \(R\subseteq R', H\subseteq H'\) and \(S\subseteq S'\),
     then \((R',H')\con_{\Gamma}S'\).
   \item Suppose \(\splitTE{\Gamma}{V}=(\Gamma_1,\Gamma_2\) and
\(\splitEnv{R}{V}=(R_1,R_2\).
If \((R,H)\con_{\Gamma}S\), then \((R_1,H_1)\con_{\Gamma_1} S\) and
\((R_2,H_2)\con_{\Gamma_2} S\) with \(H=H_1\uplus H_2\) for some \(H_1\) and \(H_2\).
    \end{itemize}
\end{lemma}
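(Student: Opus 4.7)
The plan is to prove the three items in the order \emph{(2)}, \emph{(1)}, \emph{(3)}, with each item available to the next. For item~\emph{(2)} (monotonicity) I would induct on the derivation of \((R,H)\con_\Gamma S\). The rule \rn{Con-Env0} is trivial; for \rn{Con-Env1} I attribute the extra heap \(H'\setminus H\) to the tail half of the split and apply the induction hypothesis, leaving the value-consistency premise for the head binding unchanged; for \rn{Con-Env2} the side conditions \(R_f\subseteq R\) and \(S(y)=S_1(y)\) for sharable \(y\) propagate immediately because enlarging \(R\) and \(S\) preserves both inclusions and pointwise equalities.

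For item~\emph{(1)} I case on the sharable type \(\tau\). The cases \(\tau\in\set{\UNIT,\bty}\) follow directly from \rn{Con-Unit} and \rn{Con-Const}, which do not constrain the heap. The remaining case \(\tau=\Tfuns{\tau_1}{}{0}{\tau_2}\) is witnessed by \rn{Con-Fun} or \rn{Con-RFun} from some premise \((R,H)\con_\Gamma S\cdot\Unpack{\Gamma}{()}\) with \(|\Gamma|=0\), and hence needs the environment-level analogue ``if \(|\Gamma|=0\) and \((R,H)\con_\Gamma S\), then \((R,\emptyset)\con_\Gamma S\)''. I would prove this analogue simultaneously with item~\emph{(1)} by a coupled induction; the hypothesis \(|\Gamma|=0\) forces every binding in \(\Gamma\) to carry a sharable type, so only the sharable sub-cases of \rn{Con-Env1} and \rn{Con-Env2} arise, and item~\emph{(1)} is invoked on each value component. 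Item~\emph{(3)} then follows by induction on \(\Gamma\), matching the three clauses in the definitions of \(\mathtt{splitTE}\) and \(\mathtt{splitE}\): the non-sharable cases route the binding and its heap fragment to a single side; the sharable case with \(x\in V\) requires duplicating the binding to both \(\Gamma_1\) and \(\Gamma_2\), which is accomplished by first using item~\emph{(1)} to reduce the \(x\)-heap fragment to \(\emptyset\) and then item~\emph{(2)} to re-weaken the heap on one side.

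The main obstacle, as I anticipate it, is the coupling needed for item~\emph{(1)}: the \rn{Con-Env2} clause introduces a recursive closure whose type \(\Trfun{\tau_1}{}{\Gamma}{\tau_2}\) refers to the very \(\Gamma\) whose heap is being eliminated, so the inductive statement must simultaneously strengthen the outer environment-consistency premise and the inner closure-body consistency to use the empty heap. Concretely, I would state the value-level and environment-level versions of item~\emph{(1)} as a single mutually-inductive claim, well-founded by the combined structural size of \(\tau\) and \(\Gamma\). Once that coupling is arranged, the remaining cases reduce to routine bookkeeping on finite maps and disjoint heap unions.
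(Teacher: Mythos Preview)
Your plan is sound and more explicit than the paper's own argument, which simply declares items (1) and (2) ``trivial by the definitions'' and handles item (3) by a shuffle argument: it partitions $\Gamma_1$ into its non-sharable part $\Gamma_{1,1}$ and sharable part $\Gamma_{1,2}$, notes that $\Gamma$ is a shuffle of $\Gamma_{1,1}$ and $\Gamma_2$, lets the derivation of $\con_\Gamma$ decompose accordingly with a disjoint heap split, and then reinserts $\Gamma_{1,2}$ using item (1). Your direct induction on $\Gamma$ for item (3) is an equally valid route, and you are right to make explicit the environment-level companion of item (1), which the paper invokes only tacitly when it writes ``by the first property of this lemma, $(R_{1,2},\emptyset)\con_{\Gamma_{1,2}} S$''.

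There is, however, one concrete slip. The measure you propose for the mutual induction in item (1), the ``combined structural size of $\tau$ and $\Gamma$'', does not terminate: at the \rn{Con-Fun}/\rn{Con-RFun} step you pass from the value-level claim at $\tau=\Tfuns{\tau_1}{}{0}{\tau_2}$ to the environment-level claim at the captured $\Gamma$ in the premise, and nothing in $\tau$ bounds the structural size of that $\Gamma$ (only $|\Gamma|=0$ is known, which says nothing about how many sharable or $\Trfun{\cdot}{}{\cdot}{\cdot}$ bindings it contains). The right measure is the derivation of $\con$ itself, since every premise of these rules is a strict sub-derivation. The same observation applies to your item (2): in the \rn{Con-Env2} case you discuss only the side conditions, but the second premise $(\Clos{M}{R_1},H)\con_{\Tfuns{\tau_1}{}{|\Gamma|}{\tau_2}}\cdots$ also carries $H$ and must be weakened; this unwinds via \rn{Con-Fun}/\rn{Con-RFun} to a smaller environment-level sub-derivation, so again a single induction on derivations, covering the value and environment levels together, is what is needed.
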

\begin{proof}
  The first two properties are trivial by the definitions of \(\con_{\tau}\) and \(\con_{\Gamma}\).
  To check the last property, let \(\Gamma_{1,1}=\set{x\COL\tau\in \Gamma_1\mid \neg\Sharable(\tau)}\),
    \(\Gamma_{1,2}=\set{x\COL\tau\in \Gamma_1\mid \Sharable(\tau)}\),
    \(R_{1,1} = \set{x^\tau\mapsto v\in R_1\mid \neg\Sharable(\tau)}\), and
    \(R_{1,2} = \set{x^\tau\mapsto v\in R_2\mid \neg\Sharable(\tau)}\).
    Since \(\Gamma\) is a shuffle of \(\Gamma_{1,1}\) and \(\Gamma_2\), by the assumption \((R,H)\con_{\Gamma}S\),
    we have \((R_{1,1}, H_1)\con_{\Gamma_{1,1}} S\) and \((R_2,H_2)\con_{\Gamma_2}S\) with
    \(H=H_1\uplus H_2\) for some \(H_1\) and \(H_2\). Since \(R_{1,2}\) is a subset of \(R_2\),
    we also have \((R_{1,2},H_3)\con_{\Gamma_{1,2}} S\) for some \(H_3\). By the first property of this lemma,
    we have \((R_{1,2},\emptyset)\con_{\Gamma_{1,2}} S\). Thus, we have \((R_1,H_1)\con_{\Gamma_1} S\) as requird.
\end{proof}
We now state and prove a main lemma.
\begin{lemma}
\label{lem:main1}  
  Suppose \(\p (R,H):\Gamma\), \(\Gamma\p M:\tau\q \Gamma'\Tr \mvE\), and
  \((M,R,H)\con_{\Gamma,\tau} (\mvE,S)\). 
  \begin{enumerate}
  \item If \((M,R,H)\eval (v,R',H')\), then
    \((\mvE,S)\eval (w,h)\) and \((v,R',H')\con_{\Gamma',\tau} ((w,h),S)\) for some \(w\) and \(h\).
  \item If    \((M,R,H)\eval \FAIL\), then \((\mvE,S)\eval \FAIL\).
  \end{enumerate}
\end{lemma}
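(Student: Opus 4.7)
The plan is to prove parts (1) and (2) simultaneously by induction on the derivation of \((M,R,H)\eval \mathit{ans}\), case-splitting on the last evaluation rule. By inversion on the translation derivation (which is uniquely determined by the typing derivation, which in turn is determined by the syntax of \(M\)), the shape of \(\mvE\) in each case is known from the corresponding \rn{Tr-xxx} rule. For part (2), each failure rule \rn{RE-xxxF} mirrors a success rule, and the compositional let-binding structure inserted by the translation (e.g.\ the outer \(\letexp{(x,h_1)}{\mvE_1}\) of \rn{Tr-Let} and the inner \(\letexp{(r,h')}{\mvE}\) of \rn{Tr-Fun}) propagates \(\FAIL\) to the top once the induction hypothesis is applied to the failing sub-evaluation. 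I therefore concentrate on part (1).

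The base cases (\rn{RE-Const}, \rn{RE-Var}, \rn{RE-Op}, \rn{RE-Deref}, \rn{RE-Mkref}, \rn{RE-Assign}) are essentially immediate: each translation produces a pair whose first component is the appropriate scalar and whose second is \(\pack{\Gamma'}\); the obligation reduces to showing that evaluating \(\pack{\Gamma'}\) under \(S\) yields a tuple \(h\) satisfying \((R',H')\con_{\Gamma'} S\cdot \Unpack{\Gamma'}{h}\), which follows from a round-trip identity \(\Unpack{\Gamma}{\Pack{\Gamma}(S)} = S\) on the non-sharable bindings of \(\Gamma\). The \rn{RE-Let} case chains two IH applications, using the inserted \(\unpackwo{\Gamma''}{h_1}\) to re-establish \((R_1\set{x\mapsto v_1},H_1)\con_{\Gamma'',x\COL\tau'} S\cdot \Unpack{\Gamma''}{h_1}\set{x\mapsto w_1}\) before invoking the IH on \(M_2\); \rn{RE-IfT/F} is IH on the taken branch. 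For \rn{RE-Fun} and \rn{RE-RFun}, the split \(\splitEnv{R}{\FV(\lambda x.M)}=(R_1,R_2)\) mirrors \(\splitTE{\Gamma}{\FV(\lambda x.M)}=(\Delta,\Gamma')\) and the captured local store \(\pack{\Delta}\), so the resulting closure satisfies \rn{Con-Fun} (resp.\ \rn{Con-RFun}) essentially by construction.

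The real work lies in the two application rules. For \rn{RE-App1}, from \((M,R,H)\con_{\Gamma,\tau_2}(\mvE,S)\) and \rn{Con-Fun} applied to \(R(f)=\Clos{\lambda y.M_1}{R_f}\), one extracts a target closure \(\Clos{\Clos{N_f}{S_f}}{h_f}\) whose code \(N_f\) is exactly as produced by \rn{Tr-Fun} and which satisfies \((R_f,H_f)\con_{\Delta} S_f\cdot\Unpack{\Delta}{h_f}\) for the appropriate \(\Delta\). The target reduces \(\codeof{f}(x,\envof{f})\) by \(\unpackwo{\Delta}{h_f}\) and then evaluates the translated body; the IH on \((M_1,R_f\set{y\mapsto R(x)},H)\eval(v,R'',H')\) yields a matching target evaluation to some \((w,h')\), and the subsequent \(\unpackwo{\Delta,y\COL\tau_1}{h'}\) and \(\pack{\Delta}\) inside \(N_f\) repackage the updated local store, after which the outer let-bindings of \rn{Tr-App} reassemble \(f\) and emit \(\pack{\Gamma}\) for the enclosing context. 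The recursive case \rn{RE-App2} with typing \rn{T-RApp} is similar but exploits the side condition \(\Delta\subsetTE \Gamma\setminus\set{f,x}\) and the invariant of \rn{Con-Env2}, which asserts that the recursive closure's captured environment agrees with the caller's environment on \(\Delta\); the explicit \(\letexp{h_f}{\pack{\Delta}}\) inserted by \rn{Tr-RApp} then reconstructs the store from the current context rather than from a stored copy.

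The main obstacle will be the careful bookkeeping of splits and re-packs across the two application cases, where pack/unpack operations for the callee's local store nest inside those for the caller's store. A handful of auxiliary lemmas will be needed: (a) the round-trip identity above, (b) a split lemma for \(\con_\Gamma\) mirroring the split lemma for typing (third bullet of Lemma~\ref{lem:con}), (c) weakening of \(\con\) under extension of \(S\) by fresh variables, and (d) the fact that sharable-typed bindings can be freely duplicated on both sides of a split without affecting \(\pack{\Gamma}\) or \(\unpackwo{\Gamma}{h}\). None of these is individually deep, but managing the notation simultaneously across source and target states is where the bulk of the proof effort lies. Type preservation (Lemma~\ref{lem:type-preservation}) will be used throughout to keep intermediate configurations well-typed so that these lemmas apply.
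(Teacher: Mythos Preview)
Your proposal is correct and follows essentially the same approach as the paper: simultaneous induction on the big-step derivation with case analysis on the last rule, the base cases being direct, \rn{RE-Let} chaining two IH applications via the inserted \(\unpackwo{\Gamma''}{h_1}\), and the two application cases carrying the main burden. One technical device you do not name but will need in the application cases is a heap-strengthening lemma (the paper's Lemma~\ref{lem:heap-strengthening}): before invoking the IH on the callee's body the paper restricts the heap to just the portion owned by the closure and its argument, so that afterwards the caller's unchanged heap fragment can be disjointly reunited with the callee's updated fragment; together with Lemma~\ref{lem:postR} (the callee's environment is unchanged on \(\dom(\Delta)\)), this is what lets you re-establish \rn{Con-Fun}/\rn{Con-RFun} for the repacked closure and recover \((R,H')\con_{\Gamma}\cdots\) for the full post-environment.
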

\begin{proof}
  The proof proceeds by simultaneous induction on the derivation of
    \((M,R,H)\eval (v,R',H')\) or \((M,R,H)\eval \FAIL\), with case analysis on the last rule used.
    \begin{itemize}
    \item Case \rn{RE-Fail}:
    In this case, we have:
    \begin{align*}
      &    M=\FAIL\qquad
      & \mvE = (\FAIL,\pack{\Gamma}).
    \end{align*}
      Thus, we have \((\mvE,S)\eval \FAIL\), as required.
  \item Case \rn{RE-Var}:
    In this case, we have:
    \begin{align*}
      &    M=x\qquad v=R(x) 
      \qquad R'=R\exc{x}\qquad H'=H\qquad \Gamma'=\Gamma\exc{x}\\
      & \mvE = (x,\pack{\Gamma}) \qquad (R,H)\con_{\Gamma} S.
    \end{align*}
    Let \(w=S(x)\) and \(h=\Pack{\Gamma'}(S)\).
    Then we have \((\mvE,S)\eval (w,h)\).
    It remains to show \((v,R',H')\con_{\Gamma',\tau} ((w,h),S)\).
    By the condition \((R,H)\con_{\Gamma} S\),
    we have \((R(x), H_1) \con_{\tau} S(x)\) and \((R\setminus{x},H_2)\con_{\Gamma\setminus{x}} S\),
    with \(H=H_1\uplus H_2\).
    By \((R(x), H_1) \con_{\tau} S(x)\), we have \((v,H_1)\con_{\tau} w\).
    We also have \(S\cdot \Unpack{\Gamma'}{h} = S\cdot \Unpack{\Gamma'}{\Pack{\Gamma'}(S)}=S\).
Case analysis on whether \(\Sharable(\tau)\).
    \begin{itemize}
    \item If \(\Sharable(\tau)\), then \(R'=R\).
      By Lemma~\ref{lem:con} and \((v, H_1) \con_{\tau} w\), we have \((v,\emptyset)\con_{\tau}w\).
      Together with \((R,H)\con_{\Gamma} S\) and \(S\cdot \Unpack{\Gamma'}{h} = S\), we obtain
      \((v,R',H')\con_{\Gamma,\tau} ((w,h),S)\) as required.
\item If \(\neg\Sharable(\tau)\), then 
      \(R'=R\setminus{x}\) and \(\Gamma'=\Gamma\setminus{x}\).
      By the conditions \((v,H_1)\con_{\tau}w\) and  \((R\setminus{x},H_2)\con_{\Gamma\setminus{x}} S\) with \(H=H_1\uplus H_2\),
      we have \((R'\set{y\mapsto v},H)\con_{\Gamma',y\COL\tau} S\set{y\mapsto w}\),
      as required.
      \end{itemize}
  \item Case \rn{RE-Const}:
    In this case, we have:
      \begin{align*}
      &    M=v=c
      \qquad H=H'\qquad \Gamma'=\Gamma\qquad R'=R\\
      & \mvE = (c,\pack{\Gamma}) \qquad (R,H)\con_{\Gamma} S.
    \end{align*}
      Let \(w=c\), and \(h=\Pack{\Gamma}(S)\).
      Then, we have \((\mvE,S)\eval (w,h)\) and
\((v,R',H')\con_{\Gamma',\tau} ((w,h),S)\) as required.
      \item Case \rn{RE-Op}:
    In this case, we have:
      \begin{align*}
      &    M=\Op{k}(x_1,\ldots,x_k) \qquad v=\sem{\Op{k}}(R(x_1),\ldots,R(x_k))
      \qquad H=H'\qquad \Gamma'=\Gamma\qquad R'=R\\
      & \mvE = (\Op{k}(x_1,\ldots,x_k),\pack{\Gamma}) \qquad (R,H)\con_{\Gamma} S.
    \end{align*}
      Let \(w=v=\sem{\Op{k}}(R(x_1),\ldots,R(x_k))\), and \(h=\Pack{\Gamma}(S)\).
      Then, we have \((\mvE,S)\eval (w,h)\) and
\((v,R',H')\con_{\Gamma',\tau} ((w,h),S)\) as required.
    \item Case \rn{RE-Deref}:
    In this case, we have:
      \begin{align*}
      &    M=!x \qquad v = H(R(x))
      \qquad H=H'\qquad \Gamma'=\Gamma\excr{x}{\rty}\qquad R'=R\excr{x}{\rty} \\
      & \mvE = (x,\pack{\Gamma}) \qquad (R,H)\con_{\Gamma} S\qquad \tau=\rty.
      \end{align*}
      By the condition \((R,H)\con_{\Gamma} S\), we also have:
      \begin{align*}
        (v,H_1)\con_{\rty} S(x)\qquad (R\setminus{x},H_2)\con_{\Gamma\setminus x} S
        \qquad H_1\set{R(x)\mapsto v}\uplus H_2=H.
      \end{align*}
      Let \(w=S(x)\) and \(h=\Pack{\Gamma\excr{x}{\rty}}(S)\).
      Then, we have \((w,H_1)\con_{\tau} v\) and
      \((R',H_2)\con_{\Gamma'} S = S\cdot \Unpack{\Gamma}{h}\).
      If \(\neg\Sharable(\rty)\), then we have
      \((v,R',H') = (v,R',H_1\uplus H_2)\con_{\Gamma',\tau} ((w,h),S)\) as required.
      If \(\Sharable(\rty)\), then we have \((v,\emptyset)\con_{\rty} w\).
      Thus, from \((R,H)\con_{\Gamma} S\), we obtain
            \((v,R',H') \con_{\Gamma',\tau} ((w,h),S)\) as required.
      \item Case \rn{RE-Mkref}: 
    In this case, we have:
      \begin{align*}
      &    M=\mkref{x} \qquad v = \ell
        \qquad H'=H\set{\ell\mapsto R(x)}\qquad \Gamma'=\Gamma\exc{x^\rty}\qquad R'=R\exc{x^\rty}\\
      & \mvE = (x,\pack{\Gamma}) \qquad (R,H)\con_{\Gamma} S   \qquad \tau=\Tref{\rty} .
      \end{align*}
      Let \(w=S(x)\) and \(h=\Pack{\Gamma'}(S)\).
      Then we have  \((\mvE,S)\eval (w,h)\) and:
      \begin{align*}
        &(v,H_1\set{\ell\mapsto R(x)})\con_{\tau} w \qquad
        (R\setminus x, H_2)\con_{\Gamma\setminus x} S\qquad
        S = S\cdot \Unpack{\Gamma'}{h}.
      \end{align*}
      If \(\neg\Sharable(\rty)\), then we have
      \((v,R',H')\con_{\Gamma',\tau} ((w,h),S)\) as required.
      If \(\Sharable(\rty)\), i.e., \(\rty=\bty\), then we have \((v,\set{\ell\mapsto R(x)})\con_{\tau} w\). Thus, from \((R,H)\con_{\Gamma}S\), we obtain
      \((v,R',H')\con_{\Gamma',\tau} ((w,h),S)\) as required.
      \item Case \rn{RE-Assign}: 
    In this case, we have:
      \begin{align*}
      &    M=x:=y \qquad v = \Vunit
        \qquad H'=H\set{R(x)\mapsto R(y)}\qquad \Gamma'=\Gamma\exc{y^\rty}\qquad
        R'=R\exc{y^\rty}\\
        & \mvE = \letexp{x}{y}(\Vunit,\pack{\Gamma}) \qquad (R,H)\con_{\Gamma} S
        \qquad \tau=\UNIT.
      \end{align*}
      Let \(w=\Vunit\) 
      and \(h=\Pack{\Gamma}(S\set{x\mapsto R(y)})\). Then,
      we have \((\mvE,S)\eval (w,h)\).
      Furthermore, we have:
      \begin{align*}
    &    (v,\emptyset)\con_{\tau} w \qquad
        (R(x), H_x)\con_{\Tref{\rty}} S(x)\qquad
        (R(y), H_y)\con_{\rty} S(y)\\&
        (R\setminus\set{x,y},H_0)\con_{\Gamma\setminus\set{x,y}} S\qquad
        H=H_0\uplus H_x\uplus H_y\qquad
        S\set{x\mapsto R(y)} = S\cdot \Unpack{\Gamma}{h}.
      \end{align*}
      From \((R(y), H_y)\con_{\rty} S(y)\), we obtain
      \((R(x), H_y\set{x\mapsto R(y)})\con_{\Tref{\rty}} S(y)\).
      Thus, we have
      \((R',H')\con_{\Gamma'}
      S\set{x\mapsto R(y)}\), which implies
      \((v,R',H')\con_{\Gamma',\tau} ((w,h),S)\) as required.
     \item Case \rn{RE-Let}: 
    In this case, we have:
      \begin{align*}
        &    M=\letexp{x}{M_1}{M_2} \\
        & (M_1,R,H)\eval (v_1,R_1,H_1)
        \qquad (M_2,R_1\set{x\mapsto v_1},H_1)\eval (v,R',H')\\
        & \Gamma\p M_1:\tau'\q \Gamma_1\Tr \mvE_1 \qquad
        \Gamma_1,x\COL\tau'\p M_2:\tau\q \Gamma_2\Tr \mvE_2\\
        & \Gamma' = \Gamma_2\setminus{x}\qquad \mvE = \letexp{(x,h_1)}{\mvE_1}\unpack{\Gamma_1}{h_1}\mvE_2\\
        & (R,H)\con_{\Gamma} S.
      \end{align*}
      By the induction hypothesis, we have:
      \begin{align*}
    &    (\mvE_1,S)\eval (w_1,h_1)\qquad (v_1,R_1,H_1)\con_{\Gamma_1,\tau'}((w_1,h_1),S)
      \end{align*}
      for some \(w_1\) and \(h_1\).
      Thus, we have \[(M_2,R_1\set{x\mapsto v_1},H_1)\con_{\Gamma_1,x\COL\tau'}
      (\mvE_2,(S\cdot\Unpack{\Gamma_1}{h_1})\set{x\mapsto w_1}).\]
      Let \(S_1=(S\cdot\Unpack{\Gamma_1}{h_1})\set{x\mapsto w_1}\).
      By the induction hypothesis, we have:
      \begin{align*}
    &    (\mvE_2,S_1)\eval (w,h)\qquad (v,R',H')\con_{\Gamma_2,\tau}((w,h),S_1)
      \end{align*}
      for some \(w\) and \(h\).
      By the second condition, we have
      \begin{align*}
      &  H' = H_1\uplus H_2 \qquad (v,H_1) \con_{\tau} w \\&
        (R',H_2) \con_{\Gamma_2} S_1\cdot \Unpack{\Gamma_2}{h} = S\cdot \Unpack{\Gamma_2}{h}\set{x\mapsto v_1}.
      \end{align*}
   As \(\Gamma'=\Gamma\setminus x\), we have
      \[(R',H_2)\con_{\Gamma'} 
       S\cdot \Unpack{\Gamma_2}{h},\]
      which implies \((v,R',H')\con_{\Gamma',\tau} ((w,h),S)\).
      From \((\mvE_1,S)\eval (w_1,h_1)\) and \((\mvE_2,S_1)\eval (w,h)\),
we also obtain \((\mvE,S)\eval (w,h)\) as required.
     \item Case \rn{RE-IfT}: 
    In this case, we have:
      \begin{align*}
        &    M=\ifexp{x}{M_1}{M_2} \\
        & R(x)=\TRUE\qquad (M_1,R,H)\eval (v,R',H')\\
        & \Gamma\p M_1:\tau\q \Gamma'\Tr \mvE_1 \\
        & \mvE = \ifexp{x}{\mvE_1}{\mvE_2}\\
        & (R,H)\con_{\Gamma} S.
      \end{align*}
      By the conditions \(R(x)=\TRUE\) and
      \((R,H)\con_{\Gamma} S\), we have \(S(x)=\TRUE\).
      By the induction hypothesis, there exist \(w\) and \(h\) such that
      \begin{align*}
        (\mvE_1,S)\eval (w,h)\qquad (v,R',H')\con_{\Gamma',\tau} ((w,h),S).
      \end{align*}
      Thus, we have
        \((\mvE,S)\eval (w,h)\) and \((v,R',H')\con_{\Gamma',\tau} ((w,h),S)\) as required.
     \item Case \rn{RE-IfF}: Similar to the case \rn{RE-IfT}.
     \item Case \rn{RE-RFun}: 
    In this case, we have:
      \begin{align*}
        &    M=\fixexp{f}{x}{M_1}\qquad \tau = \Tfuns{\tau_1}{}{|\Gamma_1|}{\tau_2}\qquad
        (\Gamma_1,\Gamma')=\splitTE{\Gamma}{\FV(M)}\\
        & v = \Clos{\fixexp{f}{x}{M_1}}{R_1}\qquad (R_1,R')=\splitEnv{R}{\FV(M)}
        \qquad H'=H\\
        & \Gamma_1,f\COL\Trfun{\tau_1}{}{\Gamma_1}{\tau_2},x\COL\tau_1 \p M_1:\tau_2\q \Gamma_1,f\COL\Trfun{\tau_1}{}{\Gamma_1}{\tau_2},x\COL\tau_1 \Tr \mvE_1 \\
        & \mvE = (\Clos{\fixexp{f}{(x, h)}{\unpack{\Gamma_1}{h}
                \letexp{(r,h')}{\mvE_1} \\&\qquad
                \unpack{\Gamma_1,f\COL\Trfun{\tau_1}{}{\Gamma_1}{\tau_2},x\COL\tau_1}{h'}(r,x,\pack{\Gamma_1})}}{\pack{\Gamma_1}}, \pack{\Gamma'})\\
        & (R,H)\con_{\Gamma} S.
      \end{align*}
      Let \(h\) be \(\Pack{\Gamma'}(S)\),  and \(w\) be:
      \begin{align*}
        & \Clos{\Clos{\fixexp{f}{(x, h)}{\unpack{\Gamma_1}{h}
                \letexp{(r,h')}{\mvE_1} \\&\qquad\qquad\qquad
                \unpack{\Gamma_1,x\COL\tau_1}{h'}(r,x,\pack{\Gamma_1})}}{S}}{\Pack{\Gamma_1}(S)}.
      \end{align*}
      Then we have \((\mvE,S)\eval (w,h)\).
      By the condition \((R,H)\con_{\Gamma}S\), 
      \((\Gamma_1,\Gamma')=\splitTE{\Gamma}{\FV(M)}\),
      and \((R_1,R')=\splitEnv{R}{\FV(M)}\),
      we have
      \((R_1,H_1)\con_{\Gamma_1}S\) and
      \((R',H_2)\con_{\Gamma'}S\) for some \(H_1\) and \(H_2\) such that \(H=H_1\uplus H_2\).
      By the condition \((R_1,H_1)\con_{\Gamma_1}S\), we have
      \((v, H_1)\con_{\tau} w\).
      Thus, we have
      \((R'\set{y\mapsto v},H)\con_{\Gamma',y\COL\tau} S\set{y\mapsto w})\),
      which implies \((v,R',H')\con_{\Gamma',\tau} ((w,h),S)\) as required.
     \item Case \rn{RE-Fun}: Similar to the case \rn{RE-RFun} above.
     \item Case \rn{RE-App1}: 
       In this case, we have:
      \begin{align*}
        &    M=f\,x \qquad R=R_0\set{x\mapsto v_x, f\mapsto \Clos{\lambda y.{M_1}}{R_1}}\qquad R'=R\\
        & (M_1,R_1\set{y\mapsto v_x},H)\eval (v,R_1',H')\\
        & \mvE = \letexp{(r,x,h_f)}{\codeof{f}(x, \envof{f})} \letexp{f}{\Clos{\codeof{f}}{h_f}}   (r, \pack{\Gamma})\\
         & (R, H)\con_{\Gamma} S.
      \end{align*}
      By the conditions \((R, H)\con_{\Gamma} S\) and
      \(R=R_0\set{x\mapsto v_1, f\mapsto \Clos{\lambda y.{M_1}}{R_1}}\),
      \begin{align*}
        &  (R_0,H_0)\con_{\Gamma\setminus\set{f,x}} S_0
        \qquad (\Clos{\lambda y.{M_1}}{R_1}, H_1)\con_{\Tfuns{\tau_1}{}{|\Gamma_1|}{\tau_2}} S(f)\qquad (v_x,H_2)\con_{\tau_1} S(x)\\&
          H=H_0\uplus H_1\uplus H_2.
      \end{align*}
      By the condition \((\Clos{\lambda y.{M_1}}{R_1}, H_1)\con_{\Tfuns{\tau_1}{}{|\Gamma_1|}{\tau_2}} S(f)\), we also have:
      \begin{align*}
        &  (R_1,H_1)\con_{\Gamma_1} S_1\cdot \Unpack{\Gamma_1}{h_f}\\
        & \Gamma_1,y\COL\tau_1\p M_1:\tau\q \Gamma_1,y\COL\tau_1\Tr \mvE_1\\
         & N_f = \lambda{(y,h)}.{\unpack{\Gamma_1}{h} \letexp{(r,h')}{N_1}\unpack{\Gamma_1,y\COL\tau_1}{h'}(r,y,\pack{\Gamma_1})}\\
        & S(f) = \Clos{\Clos{N_f}{S_1}}{h_f}.
      \end{align*}
      By the conditions
      \(\Gamma_1,y\COL\tau_1\p
      M_1:\tau\q \Gamma,y\COL\tau_1\Tr \mvE_1\), \((v_x,H_2)\con_{\tau_1} S(x)\),
      and
      \((R_1,H_1)\con_{\Gamma_1} S_1\cdot \Unpack{\Gamma_1}{h_f}\), 
      we have 
      \((M_1,R_1\set{y\mapsto R(x)},H_1\uplus H_2)\con_{\Gamma_1,y\COL\tau_1}
      (\mvE_1, S_1\cdot \Unpack{\Gamma_1}{h_f}\set{y\mapsto S(x)})\).
      By applying Lemma~\ref{lem:heap-strengthening} to
      \((M_1,R_1\set{y\mapsto R(x)},H)\eval (v,R_1',H')\), we have
      \begin{align*}
        & (M_1,R_1\set{y\mapsto R(x)},H_1\uplus H_2)\eval (v,R_1',H'')
        & H' = H_0\uplus H''
      \end{align*}
      for some \(H''\).
      Thus, by applying the induction hypothesis,
      we have:
      \begin{align*}
    &  (\mvE_1, S_1\cdot \Unpack{\Gamma_1}{h_f}\set{y\mapsto S(x)})\eval (w_1,h_1)\\&
      (v,R_1',H'')\con_{\Gamma_1,y\COL\tau_1,\tau} ((w_1,h_1), S_1\set{y\mapsto S(x)})
      \end{align*}
      for some \(w_1\) and \(h_1\).
      We have \((\codeof{f}(x,\envof{f}), S)\eval (w_1, S''(y), \Pack{\Gamma_1}(S''))\)
      for \(S''=S_1\set{y\mapsto S(x)}\cdot \Unpack{\Gamma_1, y\COL\tau_1}{h_1}\).
      Therefore, we have
      \((\mvE,S)\eval (w_1,\Pack{\Gamma}(S'))\)
      for \(S'=S''\set{x\mapsto S''(y),\envof{f}\mapsto \Pack{\Gamma_1}(S'')}\) (here we have omitted some irrelevant bindings
      in \(S'\)).
      It remains to check \((v,R,H')\con_{\Gamma}((w_1,\Pack{\Gamma}(S')),S)\).
      By the condition \((v,R_1',H'')\con_{\Gamma_1,y\COL\tau_1,\tau} ((w_1,h_1),S''')\) where \(S'''=
      S_1\set{y\mapsto S(x)}\),
      we have:
      \begin{align*}
        &   H''=H_1'\uplus H_2'\uplus H_3'\qquad 
        (R_1',H_1')\con_{\Gamma_1} S'''\cdot \Unpack{\Gamma_1,y\COL\tau_1}{h_1}\\       &
        (R_1'(y),H_2')\con_{\tau_1} (S'''\cdot \Unpack{\Gamma_1,y\COL\tau_1}{h_1})(y)\\ &
        (v,H_3')\con_{\tau} w_1.
      \end{align*}
      By the conditions \((v,H_3')\con_{\tau} w_1\) and
      \((R_0,H_0)\con_{\Gamma\setminus\set{f,x}} S_0\) with
      \(H' = H_0\uplus H''=H_0\uplus H_1'\uplus H_2'\uplus H_3'\),
      it remains to show:
      \begin{align*}
        & (R(f),H_1')\con_{\Tfuns{\tau_1}{}{|\Gamma_1|}{\tau_2}} (S\cdot S')(f) \qquad
        (R(x),H_2')\con_{\tau_2} (S\cdot S')(x).
      \end{align*}
      The second condition follows from
      \((R_1'(y),H_2')\con_{\tau_1} (S'''\cdot \Unpack{\Gamma_1,y\COL\tau_1}{h_1})(y)\),
      because \(R_1'(y) = (R_1\set{y\mapsto R(x)})(y) = R(x))\),
      and \((S\cdot S')(x) = S''(y)=(\set{y\mapsto S(x)}\cdot \Unpack{\Gamma_1, y\COL\tau_1}{h_1})(y) = (S'''\cdot \Unpack{\Gamma_1,y\COL\tau_1}{h_1})(y)\).
      The remaining condition to check (i.e., the first condition above) is:
      \begin{align*}
        (\Clos{\lambda y.M_1}{R_1}, H_1')
        \con_{\Tfuns{\tau_1}{}{|\Gamma_1|}{\tau_2}}
        \Clos{\Clos{N_f}{S_1}}{\Pack{\Gamma_1}(S'')}.
      \end{align*}
      This follows from \((R_1',H_1')\con_{\Gamma_1} S'''\cdot \Unpack{\Gamma_1,y\COL\tau_1}{h_1}\), because
      \begin{align*}
      S'''\cdot \Unpack{\Gamma_1,y\COL\tau_1}{h_1}
      = S_1\set{y\mapsto S(x)}\cdot \Unpack{\Gamma_1,y\COL\tau_1}{h_1}
      = S''.
      \end{align*}
    \item Case \rn{RE-App2}:
      In this case, 
      we have:
      \begin{align*}
        &    M=f\,x \qquad R=R_0\set{x\mapsto v_x, f\mapsto \Clos{\fixexp{f}{y}{M_1}}{R_1}}\qquad R'=R\\
        & (M_1,R_1\set{f\mapsto \Clos{\fixexp{f}{y}{M_1}}{R_1},y\mapsto v_x},H)\eval (v,R_1',H')\\
         & (R, H)\con_{\Gamma} S.
      \end{align*}
      The assumption \(\Gamma\p M:\tau\q \Gamma'\Tr \mvE\) must have been derived from either \rn{Tr-App} or
      \rn{Tr-RApp}. In the former case, the proof is similar to the case \rn{RE-App} above.
      Thus, we discuss only the case where \(\Gamma\p M:\tau\q \Gamma'\Tr \mvE\) has been derived by \rn{Tr-RApp}.
      In that case, we have:
      \begin{align*}
        & \mvE = \letexp{h_f}{\pack{\Delta}}
       \letexp{(r,x,h'_f)}{f(x, h_f)} \\&\qquad
       \unpack{\Delta}{h'_f}
       (r, \pack{\Gamma})\\&
       \Gamma(f)=\Trfun{\tau_1}{}{\Delta}{\tau}\qquad \Gamma(x)=\tau_1\qquad \Gamma\setminus\set{f,x}=\Delta,\Gamma_0.
      \end{align*}
      By the condition \((R,H)\con_{\Gamma}S\), we have:
      \begin{align*}
        & (R_0,H_0)\con_{\Gamma_0} S\qquad (R_0,H_1)\con_{\Delta} S\\
        & (\Clos{\fixexp{f}{y}{M_1}}{R_1},H_1)\con_{\Tfuns{\tau_1}{}{|\Delta|}{\tau}} \Clos{S(f)}{\Pack{\Delta}(S)}
        \qquad R_1\subseteq R_0\\&
        (v_x, H_1)\con_{\tau_1} S(x) \qquad H = H_0\uplus H_1\uplus H_2\qquad
        H'=H_0\uplus H''\\&
         (M_1,R_1\set{f\mapsto \Clos{\fixexp{f}{y}{M_1}}{R_1},y\mapsto v_x},H_1\uplus H_2)\eval (v,R_1',H'').
      \end{align*}
      By the second condition, we also have:
      \begin{align*}
        &  S(f)=\Clos{\mvE_f}{S_1}\\&
        \mvE_f = \fixexp{f}{(y,h)}{\unpack{\Delta}{h}
          \letexp{(r,h')}{\mvE_1}\\&\qquad\qquad\qquad\qquad\unpack{\Delta,f\COL\Trfun{\tau_1}{\tau_1'}{\Delta}{\tau_2},y\COL\tau_1}{h'}(r,y,\pack{\Delta})}\\&
        \Delta,f\COL\Trfun{\tau_1}{}{\Delta}{\tau},y\COL\tau_1\p M_1:\tau\q
        \Delta,f\COL\Trfun{\tau_1}{}{\Delta}{\tau},y\COL\tau_1\Tr \mvE_1\\&
        (R_1,H_1)\con_{\Delta} S_1'\\&
        S_1'=S_1\cdot \Unpack{\Delta}{\Pack{\Delta}(S)}.
      \end{align*}
      The last condition, \((v_x, H_1)\con_{\tau_1} S(x)\) etc. imply:
      \begin{align*}
     &   (R_1\set{f\mapsto \Clos{\fixexp{f}{y}{M_1}}{R_1},y\mapsto v_x}, H_1\uplus H_2)\\&\qquad
        \con_{\Delta,f\COL\Trfun{\tau_1}{}{\Delta}{\tau},y\COL\tau_1} S_1'\set{f\mapsto \Clos{N_f}{S_1},y\mapsto S(x)}.
      \end{align*}
      By applying the induction hypothesis to
      \((M_1,R_1\set{f\mapsto \Clos{\fixexp{f}{y}{M_1}}{R_1},y\mapsto v_x},H_1\uplus H_2)\eval (v,R_1',H'')\), we obtain:
      \begin{align*}
      &  (\mvE_1, S'_1\set{f\mapsto \Clos{N_f}{S_1},y\mapsto S(x)})
        \eval (w, h')\\
        & (v, R_1', H'')\con_{\Delta,f\COL\Trfun{\tau_1}{}{\Delta}{\tau},y\COL\tau_1,\tau} ((w, h'),S_1'\set{f\mapsto \Clos{N_f}{S_1},y\mapsto S(x)}).
      \end{align*}
      Let \(h=\Pack{\Gamma}(S')\) where
      \begin{align*}
        & S' = S\set{x\mapsto S_1''(y)}\cdot \Unpack{\Delta}{\Pack{\Delta}(S''_1)}\\&
         S_1''=S_1'\set{f\mapsto \Clos{N_f}{S_1},y\mapsto S(x)}\cdot
        \Unpack{\Delta,f\COL\Trfun{\tau_1}{}{\Delta}{\tau},y\COL\tau_1}{h'}.
      \end{align*}
      Then, we have \((\mvE,S)\eval (w, h)\).
      It remains to check \((v,R',H')\con_{\Gamma,\tau} ((w,h),S)\).
      By the condition \((v, R_1', H'')\con_{\Delta,f\COL\Trfun{\tau_1}{}{\Delta}{\tau},y\COL\tau_1,\tau} ((w, h'),S_1'\set{f\mapsto \Clos{N_f}{S_1},y\mapsto S(x)})\), we have:
      \begin{align*}
        & (v, H_1')\con_{\tau} w\\&
        (R_1',H_2')\con_{\Delta,f\COL\Trfun{\tau_1}{}{\Delta}{\tau}} S_1'\set{f\mapsto \Clos{N_f}{S_1}}\cdot\Unpack{\Delta,f\COL\Trfun{\tau_1}{}{\Delta}{\tau},y\COL\tau_1}{h'}\\&
        (v_x,H_3')\con_{\tau_1}(S_1'\cdot\Unpack{\Delta,f\COL\Trfun{\tau_1}{}{\Delta}{\tau},y\COL\tau_1}{h'})(y)\\&
        H''=H_1'\uplus H_2'\uplus H_3'.
      \end{align*}
      Thus, to prove \((v,R',H')\con_{\Gamma,\tau} ((w,h),S)\), it suffices to show
      \begin{align*}
        &   (R',H_2')\con_{\Delta,f\COL\Trfun{\tau_1}{}{\Delta}{\tau}} S\cdot \Unpack{\Gamma}{h}\\&
         (v_x,H_3')\con_{\tau_1}(S\cdot \Unpack{\Gamma}{h})(x).
      \end{align*}
      The second condition follows by:
      \begin{align*}
        (S\cdot \Unpack{\Gamma}{h})(x) &= S'(x) = S_1''(y) = (S_1'\cdot\Unpack{\Delta,f\COL\Trfun{\tau_1}{}{\Delta}{\tau},y\COL\tau_1}{h'})(y).
      \end{align*}
      The first condition follows from the fact:
      \begin{align*}
     &  ( S_1'\set{f\mapsto \Clos{N_f}{S_1}}\cdot\Unpack{\Delta,f\COL\Trfun{\tau_1}{}{\Delta}{\tau},y\COL\tau_1}{h'})\proj{\dom(\Delta)\cup\set{f}} \\&=(S\cdot \Unpack{\Gamma}{h})\proj{\dom(\Delta)\cup\set{f}}.
      \end{align*}
      Indeed, for \(z\COL\tau_z\in \Delta,f\COL\Trfun{\tau_1}{}{\Delta}{\tau}\) such that \(\Sharable(\tau_z)\),
      the values of the maps on the both sides coincide with \(S(z)\), and
      for \(z\COL\tau_z\in \Delta,f\COL\Trfun{\tau_1}{}{\Delta}{\tau}\) such that \(\neg \Sharable(\tau_z)\),
      the values of the maps on the both sides coincide with \(\Unpack{\Delta,f\COL\Trfun{\tau_1}{}{\Delta}{\tau}}{h'}\).
    \item Inductive cases for \(\FAIL\): In this case, by the induction hypothesis, a subterm of \(\mvE\) evaluates to \(\FAIL\),
      from which \((\mvE,S)\eval \FAIL\) is obtained.
  \end{itemize}
\end{proof}
 
\nk{Revised up to this point.}
The following is the converse of the lemma above.
\begin{lemma}
\label{lem:main2}  
  Suppose \(\p (R,H):\Gamma\) and \((M,R,H)\con_{\Gamma,\tau} (\mvE,S)\). 
  \begin{enumerate}
  \item If \((\mvE,S)\eval w\), then \(w\) is of the form \((w_1,h)\), with
    \((M,R,H)\eval (v,R',H')\) and \((v,R',H')\con_{\Gamma',\tau} ((w_1,h),S)\) for some \(v, R'\) and \(H'\).
  \item If  \((\mvE,S)\eval \FAIL\), then \((M,R,H)\eval \FAIL\).
  \end{enumerate}
\end{lemma}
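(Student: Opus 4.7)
The plan is to prove both parts by induction on the derivation of $(\mvE,S)\eval w$ or $(\mvE,S)\eval \FAIL$, with case analysis on the transformation rule used to derive $\Gamma \p M:\tau\q \Gamma' \Tr \mvE$. Since the transformation is syntax-directed on $M$, this amounts to case analysis on the shape of $M$. For each rule, the translated expression $\mvE$ has a specific syntactic form---a pair, a let-binding threaded through pack/unpack idioms, or a function application of $\codeof{f}$---so the target evaluation derivation can be inverted to extract sub-derivations corresponding to sub-expressions of $M$. The induction hypothesis then yields source sub-evaluations, which are combined using the matching source rule. The closing correspondence $(v,R',H')\con_{\Gamma',\tau}((w_1,h),S)$ is established essentially by reversing the reasoning in the proof of Lemma~\ref{lem:main1}, using Lemma~\ref{lem:type-preservation} and Lemma~\ref{lem:con} to maintain well-typedness and the $\con$ invariant across sub-derivations.

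For the leaf cases (\rn{Tr-Var}, \rn{Tr-Const}, \rn{Tr-Op}, \rn{Tr-Deref}, \rn{Tr-Mkref}, \rn{Tr-Assign}), $\mvE$ evaluates by \rn{E-Tuple} to some $(w_1, h)$ where $w_1$ corresponds directly to $S(x)$ or a constant; the source value $v$ is read off $w_1$, and the output heap is assembled from $H$ together with a freshly chosen location where needed. For the compound cases (\rn{Tr-Let}, \rn{Tr-If}, \rn{Tr-Fun}, \rn{Tr-RFun}), inversion through \rn{E-Let}, \rn{E-IfT}/\rn{E-IfF}, and \rn{E-Fun}/\rn{E-RFun} respectively decomposes the target derivation in lockstep with the source premises of the matching source rule, so the argument becomes mechanical once the $\con$ invariant is threaded through.

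The principal obstacle will be the application cases \rn{Tr-App} and \rn{Tr-RApp}. For \rn{Tr-App}, the target evaluation of $\codeof{f}(x, \envof{f})$ unfolds via \rn{E-App1} into an evaluation of the lambda body, which itself contains nested unpack and pack idioms; one must align this internal evaluation with a single source application step \rn{RE-App1} and check that the updated local store $h_f$ returned by the call, together with the subsequent rebinding $\letexp{f}{\Clos{\codeof{f}}{h_f}}\cdots$, yields a post-state correctly related to the source by $\con$. For \rn{Tr-RApp}, the recursive function is represented in $S$ only by its code, and its local store is rebuilt from $\Delta$ via $\pack{\Delta}$ at each call; one must show this rebuilt store agrees with the source closure's captured environment $R_f$ under $\con$, which hinges on the invariant encoded by rule \rn{Con-Env2}---that the ambient $R$ still carries the bindings described by $\Delta$, synchronized with $S$ at the point of closure creation. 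The $\FAIL$ case is uniform: any $\FAIL$-evaluation of a target sub-term yields, by induction hypothesis, a $\FAIL$-evaluation of the corresponding source sub-term, which propagates through the source semantics; no separate termination argument is needed because the induction is on a finite target derivation.
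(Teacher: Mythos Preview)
Your proposal is correct and matches the paper's own proof: the paper also proceeds by induction on the target evaluation derivation with case analysis on the transformation rule, handling the leaf rules directly, threading the $\con$ invariant through \rn{Tr-Let}/\rn{Tr-If}/\rn{Tr-Fun}/\rn{Tr-RFun} via inversion, and isolating \rn{Tr-App} and \rn{Tr-RApp} as the substantive cases (the latter indeed hinges on the \rn{Con-Env2} invariant and the reconstruction of the store via $\pack{\Delta}$). The auxiliary facts you plan to invoke---Lemma~\ref{lem:con}, type preservation, and the heap strengthening/weakening lemmas (specifically Lemmas~\ref{lem:heap-weakening} and~\ref{lem:postR} in the application cases)---are exactly what the paper uses.
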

\begin{proof}
  By the assumption \((M,R,H)\con_{\Gamma,\tau} (\mvE,S)\), we have:
  \begin{align*}
    \Gamma\p M:\tau\q \Gamma'\Tr \mvE\qquad (R,H)\con_{\Gamma} S.
  \end{align*}
  The proof proceeds by induction on the derivation of
  \((\mvE,S)\eval w\) or \((\mvE,S)\eval \FAIL\), with case analysis on the rule used for
  \(\Gamma\p M:\tau\q \Gamma'\Tr \mvE\). 
  \begin{itemize}
  \item Case \rn{Tr-Fail}:
    In this case, we have:
    \begin{align*}
      & \mvE = (\FAIL, \pack{\Gamma}) \qquad M = \FAIL\qquad (\mvE,S)\eval \FAIL.
    \end{align*}
    Thus, we have \((M,R,H)\eval \FAIL\) as required.
  \item Case \rn{Tr-Unit}:
    In this case, we have:
    \begin{align*}
      & \mvE = (\Vunit, \pack{\Gamma}) \qquad M = \Vunit \qquad \tau=\UNIT\qquad \Gamma'=\Gamma.
    \end{align*}
    Suppose \((\mvE,S)\eval w\). Then it must be the case that \(w=(w_1,h)\) for
    \(w_1=\Vunit\) and \(h=\Pack{\Gamma}(S)\).
    Let \(v=\Vunit\), \(R'=R\), and \(H'=H\).
    Then we have \((M,R,H)\eval (v,R',H')\). We also have
    \((v,\emptyset)\con_{\tau} w_1\) and \((R,H)\con_{\Gamma'} S=S\cdot \Unpack{\Gamma}{h}\).
    Thus, we have \((v,R',H')\con_{\Gamma',\tau} ((w_1,h),S)\) as required.
  \item Case \rn{Tr-Const}: Similar to the case \rn{Tr-Unit} above.
  \item Case \rn{Tr-Op}:
    In this case, we have:
    \begin{align*}
      & \mvE = (\Op{k}(x_1,\ldots,x_k), \pack{\Gamma}) \qquad
      M = \Op{k}(x_1,\ldots,x_k) \qquad \tau=\bty\qquad \Gamma'=\Gamma.
    \end{align*}
    Suppose \((\mvE,S)\eval w\). Then it must be the case that \(w=(w_1,h)\) for
    \(w_1=\sem{\Op{k}}(S(x_1),\ldots,S(x_k))\) and \(h=\Pack{\Gamma}(S)\).
    Let \(v=w_1\), \(R'=R\), and \(H'=H\).
    Then we have \((M,R,H)\eval (v,R',H')\). We also have
    \((v,\emptyset)\con_{\tau} w_1\) and \((R,H)\con_{\Gamma'} S=S\cdot \Unpack{\Gamma}{h}\).
    Thus, we have \((v,R',H')\con_{\Gamma',\tau} ((w_1,h),S)\) as required.
  \item Case \rn{Tr-Var}:
    In this case, we have:
    \begin{align*}
      & \mvE = (x, \pack{\Gamma'}) \qquad M = x \qquad \Gamma'=\Gamma\exc{x}.
    \end{align*}
    Thus, \(w\) must be \((S(x), \Pack{\Gamma'}(S))\).
    Let \(w_1=S(x)\), \(h=\Pack{\Gamma'}(S)\), \(v=R(x), R'=R\exc{x}\), and \(H'=H\). Then we have
    \((M,R,H)\eval (v,R',H')\).
    It remains to check \((v,R',H')\con_{\Gamma',\tau} ((w_1,h),S)\).
    We perform case analysis on \(\Sharable(\tau)\).
    \begin{itemize}
    \item  Case \(\Sharable(\tau)\):
      By the condition \((R,H)\con_{\Gamma} S\), we have
     \((v,\emptyset)\con_{\tau} w_1\) and 
     \((R',H')\con_{\Gamma'} S = S\cdot \Unpack{\Gamma'}{h}\).
     Thus we have \((v,R',H')\con_{\Gamma',\tau} ((w_1,h),S)\) as required.
   \item Case \(\neg\Sharable(\tau)\):
     By the condition \((R,H)\con_{\Gamma} S\), we have:
     \begin{align*}
       (v,H_1)\con_{\tau} w_1\qquad (R',H_2)\con_{\Gamma'} S = S\cdot \Unpack{\Gamma'}{h}.
       \qquad H=H_1\uplus H_2.
     \end{align*}
      Thus, we have \((v,R',H')\con_{\Gamma',\tau} ((w_1,h),S)\) as required.
      \end{itemize}
  \item Case \rn{Tr-Deref}:
    In this case, we have:
    \begin{align*}
      & \mvE = (x, \pack{\Gamma}) \qquad M = \deref{x}\qquad \Gamma'=\Gamma\excr{x}{\rty}\qquad
       \Gamma(x)=\Tref{\rty}.
    \end{align*}
   Suppose \((\mvE,S)\eval w\). Then it must be the case that \(w=(w_1,h)\) for
    \(w_1=S(x)\) and \(h=\Pack{\Gamma}(S)\).
    Let \(v=H(R(x))\), \(R'=R\excr{x}{\rty}\), and \(H'=H\).
  Then we have \((M,R,H)\eval (v,R',H')\).
    By the condition \((R,H)\con_{\Gamma} S\), we have:
    \begin{align*}
      (v,H_1)\con_{\rty} w_1\qquad (R\setminus x, H_0)\con_{\Gamma\setminus x}S
     \qquad H_0\uplus H_1\set{R(x)\mapsto v}=H.
    \end{align*}
  If \(\Sharable(\rty)\), then we have 
    \((R,H)\con_{\Gamma'} S=S\cdot \Unpack{\Gamma}{h}\), which imply
    \((v,R',H')\con_{\Gamma',\tau} ((w_1,h),S)\) as required.
    
    If \(\neg\Sharable(\rty)\), then we have
    \((R', H_0)\con_{\Gamma'}S=S\cdot \Unpack{\Gamma}{h}\).
    By \((v,H_1)\con_{\rty} w_1\) and Lemma~\ref{lem:con}, we have
    \((v,H_1\set{R(x)\mapsto v})\con_{\rty} w_1\).
    We have thus \((v,R',H')\con_{\Gamma',\tau} ((w_1,h),S)\) as required.
  \item Case \rn{Tr-Mkref}:
    In this case, we have:
    \begin{align*}
      & \mvE = (x, \pack{\Gamma}) \qquad M = \mkref{x}\qquad \Gamma'=\Gamma\exc{x}\qquad
       \Gamma(x)={\rty}\qquad \tau=\Tref{\rty}.
    \end{align*}
   Suppose \((\mvE,S)\eval w\). Then it must be the case that \(w=(w_1,h)\) for
    \(w_1=S(x)\) and \(h=\Pack{\Gamma}(S)\).
    Let \(v=\ell\), \(R'=R\exc{x}\), and \(H'=H\set{\ell\mapsto R(x)}\) (where \(\ell\) is fresh).
    Then we have \((M,R,H)\eval (v,R',H')\).
    By the condition \((R,H)\con_{\Gamma} S\), we have:
    \begin{align*}
      (R(x), H_1)\con_{\rty} w_1\qquad (R\setminus x, H_0)\con_{\Gamma\setminus x} S
      \qquad H_0\uplus H_1=H.
    \end{align*}
    The first condition implies \((v,H_1\set{\ell\mapsto w})\con_{\Tref{\rty}} w_1\).
    If \(\Sharable(\rty)\), then
    we have \((v,\set{\ell\mapsto w})\con_{\Tref{\rty}} w_1\),
    which, together with
    \((R',H)\con_{\Gamma'} S=S\cdot \Unpack{\Gamma}{h}\), implies
    \((v,R',H')\con_{\Gamma',\tau} ((w_1,h),S)\) as required.
    If \(\neg\Sharable(\rty)\), then
    we have \((R', H_0)\con_{\Gamma'} S=S\cdot \Unpack{\Gamma}{h}\) and
    \((v,H_1\set{\ell\mapsto R(x)})\con_{\Tref{\rty}} w_1\). Thus,
   we also have  \((v,R',H') = (v,R',H_0\uplus H_1\set{\ell\mapsto R(x)}\con_{\Gamma',\tau} ((w_1,h),S)\) as required.
  \item Case \rn{Tr-Assign}:
    In this case, we have:
    \begin{align*}
      & \mvE = \letexp{y}{x}(\Vunit, \pack{\Gamma}) \qquad M = (y:=x)\qquad \Gamma'=\Gamma\exc{x}\\
      & \Gamma(y)=\Tref{\rty} \qquad \Gamma(x) = \rty.
    \end{align*}
    Suppose \((\mvE,S)\eval w\). Then, \(w=(\Vunit,h)\) for
    \(h=\pack{\Gamma'}(S\set{y\mapsto S(x)}))\).
    Let \(v = \Vunit\), \(R'=R\exc{x}\), and \(H'=H\set{R(x)\mapsto R(y)}\).
    Then we have \((M,R,H)\eval (v,R',H')\).
    By the condition \((R,H)\con_{\Gamma} S\), we have
    \begin{align*}
     & (R(y),H_y)\con_{\Tref{\rty}} S(y)\qquad
      (R(x),H_x)\con_{\rty} S(x)\qquad
      (R\setminus\set{x,y},H_0)\con_{\Gamma\setminus\set{x,y}} S\\&
      \qquad H=H_0\uplus H_x\uplus H_y.
    \end{align*}
    By the second and third conditions, we have
    \((R\setminus{x},H_0\set{R(y)\mapsto R(x)}\uplus H_x)\con_{\Gamma\setminus x}
    S\set{y\mapsto S(x)}\).

    If \(\Sharable(\rty)\), then we also have \((R(x),\emptyset)\con_{\rty} S(x)\);
    thus
    we have
    \((R,H_0\set{R(y)\mapsto R(x)}\uplus H_x)\con_{\Gamma}
    S\set{y\mapsto S(x)}= S\cdot \Unpack{\Gamma'}{h}\).
    By Lemma~\ref{lem:con}, we have
    \((v,R',H')\con_{\Gamma',\tau} ((\Vunit,h),S)\), as required.

    If \(\neg\Sharable(\rty)\), then
   by  \((R\setminus{x},H_0\set{R(y)\mapsto R(x)}\uplus H_x)\con_{\Gamma\setminus x}
        S\set{y\mapsto S(x)}\)
        and Lemma~\ref{lem:con},
        we have
        \((R',H')\con_{\Gamma'}
        S\set{y\mapsto S(x)}= S\cdot \Unpack{\Gamma'}{h}\).
    Thus, we have
     \((v,R',H')\con_{\Gamma',\tau} ((\Vunit,h),S)\), as required.
  \item Case \rn{Tr-Let}:
    In this case, we have:
    \begin{align*}
      & \mvE = \letexp{(x,h_1)}{\mvE_1}{\unpack{\Gamma_1}{h_1}\mvE_2}\\&
      M = \letexp{x}{M_1}{M_2}\qquad \Gamma'=\Gamma_2\setminus{x}\\&
      \Gamma\p M_1:\tau'\q \Gamma_1\Tr \mvE_1\qquad \Gamma_1,x\COL\tau'\p M_2:\tau\q \Gamma_2\Tr \mvE_2.
    \end{align*}
    Suppose \((\mvE,S)\eval w\). Then we have:
    \begin{align*}
      (\mvE_1,S)\eval (w', h_1)\qquad (\mvE_2,S\set{x\mapsto w'}\cdot \Unpack{\Gamma_1}{h_1})\eval w.
    \end{align*}
    By applying the induction hypothesis to \((\mvE_1,S)\eval (w', h_1)\), we obtain:
    \begin{align*}
      (M_1,R,H)\eval (v_1,R'',H'')\qquad (v_1,R'',H'')\con_{\Gamma_1,\tau} ((w',h_1), S)
    \end{align*}
    for some \(v_1\), \(R''\) and \(H''\). The second condition implies:
    \begin{align*}
      (v_1, H''_1)\con_{\tau} w'\qquad (R'',H''_2)\con_{\Gamma_1} S\cdot \Unpack{\Gamma''}{h_1}
      \qquad H''=H''_1\uplus H''_2,
    \end{align*}
    which further implies 
    \begin{align*}
      (R''\set{x\mapsto v_1},H'')\con_{\Gamma_1,x\COL\tau} S\set{x\mapsto w'}\cdot \Unpack{\Gamma_1}{h_1}.
    \end{align*}
    By applying the induction hypothesis to \((\mvE_2,S\set{x\mapsto w'}\cdot \Unpack{\Gamma_1}{h_1})\eval w\), we obtain:
    \begin{align*}
    &  (M_2,R\set{x\mapsto v_1},H'')\eval (v,R',H')\qquad (v,R',H')\con_{\Gamma_2,\tau}
      ((w_1,h), S)\\& w = (w_1,h)
    \end{align*}
    for some \(v,R'\), and \(H'\).
    Thus, we have \((M,R)\eval (v,R',H')\) and \((v,R',H')\con_{\Gamma',\tau}
      ((w_1,h), S)\) as required.
  \item Case \rn{Tr-If}:
    In this case, we have:
    \begin{align*}
      & \mvE = \ifexp{x}{\mvE_1}{\mvE_2}\qquad
      M = \letexp{x}{M_1}{M_2}\\&
      \Gamma\p M_1:\tau\q \Gamma'\Tr \mvE_1\qquad
      \Gamma\p M_2:\tau\q \Gamma'\Tr \mvE_2\qquad \Gamma(x)=\BOOL.
    \end{align*}
    Suppose \((\mvE,S)\eval w\). Then \(S(x)\in\set{\TRUE,\FALSE}\).
    Suppose \(S(x)=\TRUE\); the case where \(S(x)=\FALSE\) is similar.
    Then we have:
    \begin{align*}
      (\mvE_1,S)\eval w.
    \end{align*}
    By the induction hypothesis, we have:
    \begin{align*}
      (M_1,R,H)\eval (v,R',H')\qquad (v,R',H')\con_{\Gamma',\tau} ((w_1,h),S)\qquad w=(w_1,h).
    \end{align*}
    By the conditions \((R,H)\con_{\Gamma}S\) and \(S(x)=\TRUE\), we have \(R(x)\).
    Thus, we have \((M,R,H)\eval (v,R',H')\) and
    \((v,R',H')\con_{\Gamma',\tau} ((w_1,h),S)\) as required.
   \item Case \rn{Tr-Fun}:
     In this case, we have
     \begin{align*}
       & \mvE =
       (\Clos{\lambda (x, h).\unpack{\Gamma_1}{h}
                \letexp{(r,h')}{\mvE_1} \\&\qquad\qquad\qquad
                \unpack{\Gamma_1,x\COL\tau_1}{h'}(r,x,\pack{\Gamma_1})}{\pack{\Gamma_1}}, \pack{\Gamma_2})\\
       & M = \lambda x.M_1\\
       & \Gamma_1,x\COL\tau_1\p M_1:\tau_2\q \Gamma_1,x\COL\tau_1\Tr N_1\\
       & (\Gamma_1,\Gamma')=\splitTE{\Gamma}{\FV(\lambda x.M_1)}\\
       & \tau = \Tfuns{\tau_1}{}{|\Gamma_1|}{\tau_2}\\
       & (R, H)\con_{\Gamma} S.
     \end{align*}
     Suppose \((\mvE,S)\eval w\). Then, we have:
     \begin{align*}
       & w = (w_1, \Pack{\Gamma'}(S))\\
       & w_1 = \Clos{\Clos{\lambda (x, h).\unpack{\Gamma_1}{h}
                \letexp{(r,h')}{\mvE_1} \\&\qquad\qquad\qquad\qquad\qquad
                \unpack{\Gamma_1,x\COL\tau_1}{h'}(r,x,\pack{\Gamma_1})}{S}}{\Pack{\Gamma_1}(S)}
     \end{align*}
     Let \(v=\Clos{\lambda x.M_1}{R}\), \(R'=R\), and \(H'=H\). Then we have
     \((M,R,H)\eval (v,R',H')\).
     To prove \((v,R',H')\con_{\Gamma',\tau} ((w_1,\Pack{\Gamma'}(S)),S)\),
     we need to show
     \begin{align*}
       (v,H_1) \con_{\tau} w_1\qquad (R',H_2)\con_{\Gamma'} S\cdot \Unpack{\Gamma'}{\Pack{\Gamma'}(S)}\qquad
       H = H_1\uplus H_2
\end{align*}
     for some \(H_1\) and \(H_2\).
     By Lemma~\ref{lem:con} and \((R,H)\con_{\Gamma}S\), we have
     \begin{align*}
    &   (R_1,H_1)\con_{\Gamma_1}S \qquad (R_2,H_2)\con_{\Gamma'}S \qquad  H=H_1\uplus H_2\\& (R_1,R_2)=\splitEnv{\FV(M)}{R}.
     \end{align*}
     By Lemma~\ref{lem:con}, the first two conditions imply \((R,H_1)\con_{\Gamma_1}S\)
     and \((R,H_2)\con_{\Gamma_2}S\),
      from which
      we obtain \((v,H_1)\con_{\tau}w_1\) and \((R',H_2)\con_{\Gamma'} S\cdot \Unpack{\Gamma'}{\Pack{\Gamma'}(S)}(=S)\)
      as required.
    \item Case \rn{Tr-RFun}: Similar to the case \rn{Tr-Fun} above.
     \item Case \rn{Tr-App}:
       In this case, we have:
       \begin{align*}
   &      \mvE =   \letexp{(r,x,h_f)}{\codeof{f}(x, \envof{f})} \letexp{f}{\Clos{\codeof{f}}{h_f}}
       (r, \pack{\Gamma})\\
       & M = f\,x \qquad \Gamma(f)=\Tfuns{\tau_1}{}{m}{\tau}\qquad \Gamma(x)=\tau_1.
       \end{align*}
       By the condition \((R,H)\con_{\Gamma}S\), we also have:
       \begin{align*}
        & (R(f),H_1)\con_{\Tfuns{\tau_1}{}{m}{\tau}} S(f) \qquad
         (R(x), H_2)\con_{\tau_1} S(x)\qquad (R,H_0)\con_{\Gamma\setminus{f,x}} S\\&
         H=H_0\uplus H_1\uplus H_2.
       \end{align*}
       Furthermore, by the condition \((R(f),H_1)\con_{\Tfuns{\tau_1}{}{m}{\tau}} S(f)\), we have:\footnote{
       We omit to discuss the case where \(\mvE_f\) is a recursive function, which is a little more tedious but similar to
       the case where \(\mvE_f\) is a non-recursive function.}
       \begin{align*}
         & R(f) = \Clos{\lambda y.{M_1}}{R_1}\qquad \Gamma_1,x\COL\tau_1\p
         M_1:\tau_2\q \Gamma_1,x\COL\tau_1\Tr N_1\\
           &
          (R_1,H_1)\con_{\Gamma_1}S_1\cdot\Unpack{\Gamma_1}{h_1}\\
         & S(f) =\Clos{\Clos{N_f}{S_1}}{h_1}\\
        & N_f = \lambda {(y,h)}.\unpack{\Gamma_1}{h}
           \letexp{(r,h_2)}{\mvE_1} \\&\qquad\qquad\qquad
          \unpack{\Gamma_1,y\COL\tau_1}{h_2}(r,y,\pack{\Gamma_1}).
       \end{align*}
       Suppose \((\mvE,S)\eval w\). Then, we have:
       \begin{align*}
           & (S_1\set{y\mapsto S(x)}\cdot \Unpack{\Gamma_1}{h_1}, N_1)\eval (w_1,h_2)\\
          & w = (w_1, \Pack{\Gamma}(S\set{\envof{f}\mapsto \Pack{\Gamma_1}(S_1'),x\mapsto S_1'(y)})\\
          & S_1' = S_1\set{y\mapsto S(x)}\cdot \Unpack{\Gamma_1,y\COL\tau_1}{h_2}.
       \end{align*}
       By the above conditions, we have
       \begin{align*}
     &    (M_1, R_1\set{y\mapsto R(x)}, H_1\uplus H_2)
       \con_{\Gamma_1, y\COL\tau_1} (\mvE_1, S_1\cdot \Unpack{\Gamma_1}{h_1}\set{y\mapsto S(x)}).
       \end{align*}
       By the induction hypothesis, we have
       \begin{align*}
     &    (M_1, R_1\set{y\mapsto R(x)}, H_1\uplus H_2)\eval (v,R_1',H_1')\\
       &  (v,R_1',H_1') \con_{\Gamma_1, y\COL\tau_1,\tau} ((w_1,h_2), S_1\cdot \Unpack{\Gamma_1}{h_1}\set{y\mapsto S(x)}).
       \end{align*}
       for some \(v,R_1'\), and \(H_1'\). Let \(R'=R\) and \(H'=H_1'\uplus H_0\).
       By Lemma~\ref{lem:heap-weakening} and
       \((M_1, R_1\set{y\mapsto R(x)}, H_1\uplus H_2)\eval (v,R_1',H_1')\),
        we have
        \((M_1, R_1\set{y\mapsto R(x)}, H)\eval (v,R_1',H')\).
       Thus, we have \((M,R,H)\eval (v,R',H')\).

       It remains to check \((v,R',H')\con_{\Gamma,\tau}((w_1,h),S)\),
       where \(h = \Pack{\Gamma}(S\set{\envof{f}\mapsto \Pack{\Gamma_1}(S_1'),x\mapsto S_1'(y)})\).
       By the condition \((v,R_1',H_1') \con_{\Gamma_1, y\COL\tau_1,\tau} ((w_1,h_2), S_1\cdot \Unpack{\Gamma_1}{h_1}\set{y\mapsto S(x)})\) we have:
       \begin{align*}
         &  H_1'=H_{1,1}'\uplus H_{1,2}'\qquad
         (v, H_{1,1}')\con_{\tau} w_1\\& (R_1',H_{1,2}')\con_{\Gamma_1,y\COL\tau_1}
         S_1\cdot \Unpack{\Gamma_1}{h_1}\set{y\mapsto S(x)}\cdot \Unpack{\Gamma_1,y\COL\tau_1}{h_2}= S'_1.
\end{align*}
       By the latter condition and Lemma~\ref{lem:postR}, we have:
       \begin{align*}
         (R_1\set{y\mapsto R(x)}, H_{1,2}')\con_{\Gamma_1,y\COL\tau_1} S_1',
       \end{align*}
       which implies:
       \begin{align*}
         (R_1, H_{1,3}')\con_{\Gamma_1} S_1' \qquad
         (R(x), H_{1,4}')\con_{\tau_1} S_1'(y)\qquad H_{1,2}'=H_{1,3}'\uplus H_{1,4}'.
       \end{align*}
       By the first condition, we have:
       \begin{align*}
       (R(f), H_{1,3}') = (\Clos{\lambda y.M_1}{R_1}, H_{1,3}') \con_{\Tfuns{\tau_1}{}{m}{\tau}}
         \Clos{\Clos{N_f}{S_1}}{\Pack{\Gamma_1}(S_1')}.
       \end{align*}
       Together with \((R(x), H_{1,4}')\con_{\tau_1} S_1'(y)\) and \((R,H_0)\con_{\Gamma\setminus{f,x}} S\),
       we obtain:
       \begin{align*}
       (R, H_0\uplus H_{1,2}') \con_{\Gamma} S\set{\envof{f}\mapsto \Pack{\Gamma_1}(S_1'),x\mapsto S_1'(y)} = S\cdot \Unpack{\Gamma}{h}.
       \end{align*}
       Thus, we have
       \((v, R, H_{1,1}'\uplus H_0\uplus H_{1,2}')\con_{\Gamma,\tau} ((w_1,h),S)\), i.e.,
       \((v, R', H')\con_{\Gamma,\tau} ((w_1,h),S)\), as required.
\item Case \rn{Tr-RApp}:
       In this case, we have:
       \begin{align*}
   &      \mvE =   \letexp{h_f}{\pack{\Delta}}
       \letexp{(r,x,h'_f)}{f(x, h_f)} 
       \unpack{\Delta}{h'_f}
       (r, \pack{\Gamma})
\\
& M = f\,x \qquad \Gamma(f)=\Trfun{\tau_1}{}{\Delta}{\tau}\qquad \Gamma(x)=\tau_1
\qquad \Delta \subseteq \Gamma.
       \end{align*}
       Suppose \((N,S)\eval w\). Then we have:
       \begin{align*}
       &  h_1 = \Pack{\Delta}(S)\qquad
         (f(x,h_f), S\set{h_f\mapsto h_1})\eval (w_1,w_x,h_2)\\
         & w=(w_1,h)\qquad
         h=\Pack{\Gamma}(S\set{x\mapsto w_x}\cdot
         \Unpack{\Delta}{h_2}).
       \end{align*}
       By the condition \((R,H)\con_{\Gamma}S\) and \(\Delta\subseteq \Gamma\), we also have:
       \begin{align*}
         & H = H_0\uplus H_1\uplus H_2 \qquad
         (R,H_0)\con_{\Gamma_0}S\qquad
         (R\setminus x,H_1)\con_{\Delta,f\COL\Trfun{\tau_1}{}{\Delta}{\tau}}S\\&
(R(x),H_2)\con_{\tau_1} S(x) \qquad \Gamma = \Delta,f\COL\Trfun{\tau_1}{}{\Delta}{\tau},
         x\COL\tau_1,\Gamma'.
       \end{align*}
       By the above conditions and
       \((f(x,h_f), S\set{h_f\mapsto h_1})\eval (w_1,w_x,h_2)\), we have:
       \begin{align*}
         &  S(f)=\Clos{N_f}{S_1}\qquad R(f)=\Clos{\fixexp{f}{y}{M_1}}{R_1}\\&
         N_f = \fixexp{f}{(y,h)}{\unpack{\Delta}{h}
           \letexp{(r,h')}{\mvE_1}\\&\qquad\qquad\qquad\qquad\unpack{\Delta,f\COL\Trfun{\tau_1}{\tau_1'}{\Delta}{\tau},y\COL\tau_1}{h'}(r,y,\pack{\Delta})}\\&
         (R_1,H_1)\con_{\Delta} S_1\cdot \Unpack{\Delta}{h_1}\qquad R_1\subseteq R\\&
         \Delta,f\COL\Trfun{\tau_1}{\tau_1'}{\Delta}{\tau}, y\COL\tau_1
  \p M_1:{\tau}\q
  \Delta, f\COL\Trfun{\tau_1}{\tau_1'}{\Delta}{\tau}, y\COL\tau_1\Tr \mvE_1\\&
  (\mvE_1, S_1'')\eval
  (w_1,h_3)\\&
  S_1''=S_1\set{f\mapsto \Clos{N_f}{S_1},y\mapsto S(x)}\cdot \Unpack{\Delta}{h_1}\\&
  S_1' = S_1\set{f\mapsto \Clos{N_f}{S_1},y\mapsto S(x)}\cdot\Unpack{\Delta,f\COL\Trfun{\tau_1}{\tau_1'}{\Delta}{\tau},y\COL\tau_1}{h_3}\\&
  w_x=S_1'(y)\qquad h_2 = \Pack{\Delta}(S_1').
       \end{align*}
       By the conditions  \((R_1,H_1)\con_{\Delta} S_1\cdot \Unpack{\Delta}{h_1}\), etc.,
       we have:
       \begin{align*}
      &   (R_1\set{f\mapsto R(f),y\mapsto R(x)},H_1\uplus H_2)
         \con_{\Delta,f\COL\Trfun{\tau_1}{\tau_1'}{\Delta}{\tau},y\COL\tau_1} S_1''.
       \end{align*}
       Thus, by the induction hypothesis, we have:
       \begin{align*}
         &   (M_1,R_1\set{f\mapsto R(f),y\mapsto R(x)},H_1\uplus H_2)\eval (v,R_1',H_1')\\&
         (v,R_1',H_1')\con_{\Delta,f\COL\Trfun{\tau_1}{\tau_1'}{\Delta}{\tau},y\COL\tau_1}
         ((w_1,h_3),S_1'').
       \end{align*}
       Let \(R'=R\) and \(H'=H_0\uplus H_1'\). Then we have \((M,R,H)\eval (v,R',H')\).
       
       By the condition \((v,R_1',H_1')\con_{\Delta,f\COL\Trfun{\tau_1}{\tau_1'}{\Delta}{\tau},y\COL\tau_1,\tau}
       ((w_1,h_3),S_1'')\), we have:
       \begin{align*}
         & H_1' = H_{1,1}'\uplus H_{1,2}' \qquad (v,H_{1,1}')\con_{\tau} w_1 \\&
         (R_1',H_{1,2}')\con_{\Delta,f\COL\Trfun{\tau_1}{\tau_1'}{\Delta}{\tau},y\COL\tau_1} S_1''\cdot
\Unpack{\Delta,f\COL\Trfun{\tau_1}{\tau_1'}{\Delta}{\tau},y\COL\tau_1}{h_3} = S_1'.
       \end{align*}
       By the last condition, we have:
       \begin{align*}
      &   (R_1',H_{1,3}')\con_{\Delta,f\COL\Trfun{\tau_1}{\tau_1'}{\Delta}{\tau}} S_1'\qquad
         (R(x), H'_{1,4})\con_{\tau_1}w_x\\&
         H_{1,2}'=H'_{1,3}\uplus H'_{1,4}.
       \end{align*}
       Together with the condition \((R,H_0)\con_{\Gamma_0}S\),
  we obtain:
       \begin{align*}
         (R',H_0\uplus H_{1,2}')\con_{\Gamma} S\set{x\mapsto w_x}\cdot S_1' = S\cdot \Unpack{\Gamma}{h}.
       \end{align*}
       Thus, we have \((v,R',H_{1,1}'\uplus H_0\uplus H_{1,2}')\con_{\Gamma} ((w_1,h),S)\), i.e.,
       \((v,R',H')\con_{\Gamma} ((w_1,h),S)\), as required.
  \end{itemize}
\end{proof}
 Theorem~\ref{th:restatement-main} (i.e., Theorem~\ref{th:main}) is an immediate corollary
of Lemmas~\ref{lem:main1} and \ref{lem:main2} above.
 \section{Benchmark Programs}
\label{sec:benchmark}
Below, we give benchmark programs used in the experiments in
Section~\ref{sec:exp}. The comment part is the source program using reference cells,
and the uncommented part is the translated program, which was feeded to \mochi{}.
\begin{itemize}
\item \texttt{repeat\_ref}:
\begin{verbatim}
(*
let x = ref 0 in
let f() = x:= !x+1; !x in
let rec repeat n g =
  if n=1 then g()
  else g(); repeat (n-1) g
let m = * in
if m>0 then 
  assert(repeat m f = m)
 *)

let x=0 
let f = (x, fun x-> let x=x+1 in (x, x)) 
let rec repeat n (h, g) =
  if n=1 then let (r, _) = g h in r
  else let (r, h) = g h in
       repeat (n-1) (h, g)
let main m =
  if m>0 then
    assert(repeat m f = m)
\end{verbatim}
\item \texttt{repeat\_ref\_ng}: obtained from \texttt{repeat\_ref} by replacing the assertion with\\
   \texttt{assert(repeat m f = 1)}.
  
 \item \texttt{repeat\_localref}:
 \begin{verbatim}
 (*
let f() = let x = ref 0 in x:= !x+1; !x in
let rec repeat n g =
  if n=1 then g()
  else g(); repeat (n-1) g
let m = * in
if m>0 then 
  assert(repeat m f = 1)
 *)

let f = ((), fun _ -> let x=0 in let x=x+1 in (x, ()))
let rec repeat n (h, g) =
  if n=1 then let (r, _) = g h in r
  else let (r, h) = g h in
       repeat (n-1) (h, g)
let main m =
  if m>0 then
    assert(repeat m f = 1)
\end{verbatim}
\item \texttt{repeat\_localref\_ng}: obtained from \texttt{repeat\_localref} by replacing the assertion with
   \texttt{assert(repeat m f = m)}.
\item \texttt{inc\_before\_rec}:
\begin{verbatim}   
(*
let x = ref 0 
let rec f n =
  if n=0 then !x
  else (x := !x+1; f(n-1))
let main n =
 if n>=0 then assert(f n = n)
 *)

let x = 0
let rec f x n =
  if n=0 then (x, x)
  else let x=x+1 in f x (n-1)
let main n = if n>=0 then assert(let (r,_)=f x n in r=n)
\end{verbatim}
\item \texttt{inc\_before\_rec\_ng}: obtained from \texttt{inc\_before\_rec} by replacing the assertion with
   \texttt{assert(f n = n+1)}.
\item \texttt{inc\_after\_rec}:
\begin{verbatim}   
(*
let x = ref 0 
let rec f n =
  if n=0 then !x
  else (f(n-1); x := !x+1; !x)
let main n =
 if n>=0 then assert(f n = n)
 *)

let x = 0
let rec f x n =
  if n=0 then (x, x)
  else let (_,x) = f x (n-1) in let x=x+1 in (x,x)
let main n = if n>=0 then assert(let (r,_)=f x n in r=n)
\end{verbatim}
\item \texttt{inc\_after\_rec\_ng}: obtained from \texttt{inc\_after\_rec} by replacing the assertion with
   \texttt{assert(f n = n+1)}.
\item \texttt{borrow}:
\begin{verbatim}   
(*
let n = * 
let x = ref n 
let main() =
  let _ = let f() = x := !x+1 in
             f(); f()
  in assert(!x=n+2)
 *)

let rec loop() = loop()
let n = Random.int 0               
let x_c = n                
let x_f = Random.int 0
let x = (x_c,x_f) 
let assume b = if b then () else loop()
let main() =
  let _ = let f (x_c, x_f) = let x_c=x_c+1 in (x_c, x_f) in
            let x=f x in
            let (x_c,x_f)=f x in assume(x_c=x_f)
  in assert(x_f=n+2)        
\end{verbatim}
\item \texttt{borrow\_ng}: obtained from \texttt{borrow} by replacing the assertion with
   \texttt{assert(!x = n+1)}.
\item  \texttt{counter}:
\begin{verbatim}   
(*
type msg = Inc of unit | Read of unit
let newc init =
  let r = ref init in
  let f msg =
   match msg with
        Inc() -> (r := !r+1; 0)
      | Read() -> !r
  in f
let main n =
 let c = newc n in
   c(Inc()); assert(c(Read())=n+1)
 *)

type msg = Inc of unit | Read of unit
let newc init =
  let r = init in
  let f r msg =
    match msg with
      Inc() -> let r = r+1 in (0,r)
    | Read() -> (r, r)
  in (r, f)
let main n =
  let (h,c) = newc n in
   let (_,h)=c h (Inc()) in
   let (x,h)=c h (Read()) in assert(x=n+1)
\end{verbatim}
\item \texttt{counter\_ng}: obtained from \texttt{counter} by replacing the assertion with\\
   \texttt{assert(c(Read())=n+2)}.
\end{itemize}
 \fi
\end{document}